\documentclass[ejs]{imsart}
\RequirePackage{amsthm,amsmath,amsfonts,amssymb}
\RequirePackage[authoryear]{natbib}
\RequirePackage[colorlinks,citecolor=blue,urlcolor=blue]{hyperref}
\RequirePackage{graphicx, subcaption, enumitem, caption, xcolor}
\RequirePackage{booktabs, comment} 

\startlocaldefs
\theoremstyle{plain}
\newtheorem{theorem}{Theorem}[section]
\newtheorem{proposition}[theorem]{Proposition}
\newtheorem{lemma}[theorem]{Lemma}
\newtheorem{corollary}[theorem]{Corollary}
\theoremstyle{definition}

\newtheorem{assumption}[theorem]{Assumption}
\newtheorem{example}{Example}

\theoremstyle{remark}

\newtheorem{remark}{Remark} 
\endlocaldefs

\begin{document}

\begin{frontmatter}
\title{An association measure for mixed-type variables} 
\runtitle{An association measure for mixed-type variables} 

\begin{aug}
\author[A]{\fnms{Yongjae}~\snm{Kim}\ead[label=e1]{cherrycooky@snu.ac.kr}},
\author[A,B,C]{\fnms{Haeun}~\snm{Moon}\ead[label=e2]{haeunmoon@snu.ac.kr}}
\and
\author[A,C]{\fnms{Sungkyu}~\snm{Jung}\ead[label=e3]{sungkyu@snu.ac.kr}}
\address[A]{Department of Statistics,
Seoul National University\printead[presep={,\ }]{e1,e3}}

\address[B]{School of Transdisciplinary Innovations, Seoul National University\printead[presep={,\ }]{e2}}

\address[C]{Institute for Data Innovation in Science,
Seoul National University}

\runauthor{Y. Kim et al.}
\end{aug}

\begin{abstract}
Quantifying the association between a real-valued variable and a categorical variable is a fundamental task in data analysis. Existing methods often rely on parametric assumptions or arbitrary integer encoding, which may lead to unstable results. 
We propose a label-invariant population measure of association, $\xi'$, specifically designed for the mixed real-valued--categorical setting. The proposed measure is normalized between 0 and 1; it equals 0 if and only if the variables are independent and 1 if and only if the categorical variable is a measurable function of the real-valued one. We also introduce a corresponding sample estimator, $\xi_n'$, computable in $O(n \log n)$ time. These measures are invariant to permutations of category labels and strictly monotone transformations of the real-valued variable. 
We establish the strong consistency and asymptotic normality of the estimator $\xi_n'$, enabling a computationally efficient, permutation-free Wald test for independence, and an asymptotic confidence interval for the population measure $\xi'$. Extensive simulations and an application to The Cancer Genome Atlas (TCGA) data demonstrate that the proposed method provides coding stability, competitive power, and substantial computational advantages in nominal mixed-type settings.
\end{abstract}

\begin{keyword}[class=MSC]
\kwd[Primary ]{62H20} 
\kwd{62G30} 
\kwd[; secondary ]{62G10} 
\end{keyword}

\begin{keyword}
\kwd{Independence}
\kwd{Measure of association}
\kwd{Nonparametric}
\kwd{Rank statistic}
\kwd{Runs test}
\end{keyword}

\end{frontmatter}

%
%

\section{Introduction}
\label{sec:intro-xi}

Measuring and testing association between two random variables is a central task across many studies. Variable pairs are commonly classified as numerical–numerical, numerical–categorical (mixed), or categorical–categorical. \linebreak 
Across these settings, researchers have developed coefficients to quantify relationships and hypothesis tests to assess independence. 
For numerical–numerical pairs, there is a long literature ranging from classical coefficients to modern statistics; see, for example, Pearson’s correlation, Spearman’s rho \citep{Spearman1904}, Kendall’s tau and some extensions \citep{Kendall1938, BergsmaDassios2014, MoonChen2022}, distance correlation \citep{SzekelyRizzoBakirov2007}, Hilbert Schmidt Independence Criterion (HSIC)\citep{Gretton2008}, the recently developed rank-based coefficient of \citet{Chatterjee2021}, and a dependence measure based on monotone rearrangements \citep{strothmann_rearranged_2024}.
For categorical–categorical pairs, many coefficients and tests of association are also well established; to name a few, Cramér’s \(V\) \citep{Cramer1946}, Goodman–Kruskal’s \(\tau\) \citep{GoodmanKruskal1954}, and the chi-squared test \citep{Pearson1900}. 

However, real-world data analysis often involves heterogeneous variable types. 
Indeed, mixed-type data have become increasingly prevalent in modern research.
For instance, in genomics, researchers routinely analyze dependencies between numerical gene expression measurements and categorical clinical subtypes. 
Similarly, in social sciences and survey data, numerical socioeconomic variables (e.g., income, age) are frequently studied alongside categorical demographic factors.
In contrast to the extensive literature for homogeneous pairs, general-purpose tools for the mixed-type case are fewer.

Several established tools address the numerical–categorical case. When $Y$ is dichotomous, the point–biserial correlation \citep{GlassHopkins1995}, which equals Pearson's correlation with the categorical $Y$ encoded as a binary variable, has been used.  
The polyserial correlation \citep{OlssonDrasgowDorans1982} targets linear association between a continuous \(X\) and an ordinal \(Y\) under a latent-normal model. 
The correlation ratio $\eta^2$ of \citet{Pearson1915PartialEta} measures the fraction of variance in the numerical variable explained by between-group differences. While the ANOVA $F$-test \citep{Fisher1921CropI} is the standard procedure for testing these mean differences, the resulting $F$-statistic is not, in itself, a dependence measure on a fixed scale. 
The polyserial correlation relies on linear or ordinal assumptions \citep{Drasgow1986}; the correlation ratio summarizes only mean effects and can be zero when other forms of dependence exist; the point-biserial correlation applies only to binary $Y$ \citep{Tate1954}; and the ANOVA $F$-statistic yields test statistics without a calibrated $0$–$1$ scale.

Due to these limitations, common workarounds in practice include encoding categories as integers so that methods for numerical variables can be applied, or discretizing numerical variables so that categorical methods can be used. These workarounds, however, may discard information and create sensitivity to binning, or impose an arbitrary order and spacing on $Y$. As a result, rank-based statistics, such as Chatterjee’s $\xi_n$, can fluctuate across different choices of codings. Figure~\ref{fig:codeAB} illustrates the issue with the first workaround (integer coding): applying two different encodings to $Y$ (Code 1 and Code 2) causes Chatterjee’s statistic to vary significantly between $\xi_n=0.971$ and $\xi_n=0.788$. By contrast, our proposed coding-invariant measure yields the stable value $\xi_n' = 0.903$.

\begin{figure}[tp]
  \centering
  \includegraphics[width=\linewidth]{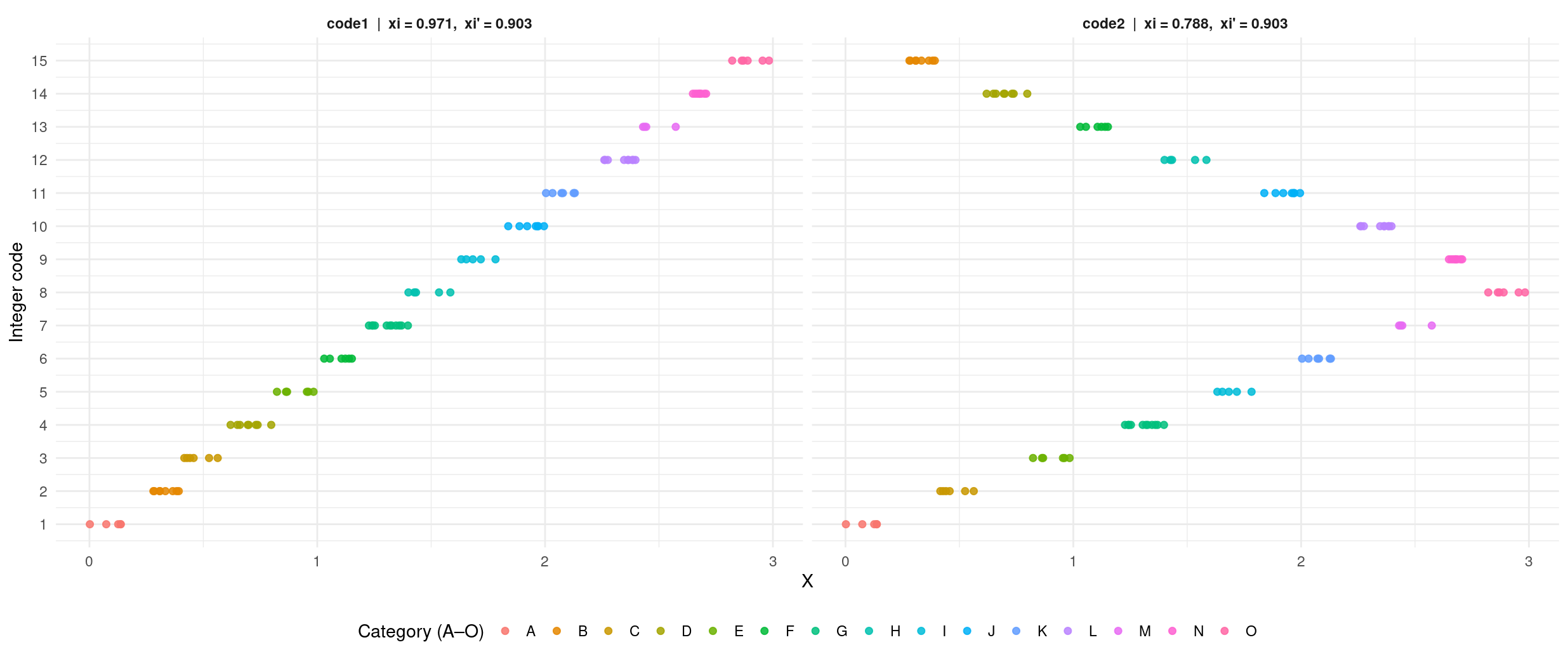}
  \caption{Illustration of the integer coding problem. Applying two different arbitrary integer encodings (Code 1 and Code 2) to the categorical variable $Y$ causes Chatterjee’s statistic $\xi_n$ to fluctuate significantly ($0.971$ vs. $0.788$), whereas the proposed statistic $\xi_n'$ remains invariant ($0.903$).}
  \label{fig:codeAB}
\end{figure}

\citet{Chatterjee2021} introduced the rank-based coefficient $\xi$ and its empirical version $\xi_n$ for real-valued random variables $X$ and $Y$. 
This coefficient takes values in $[0,1]$, equals $0$ if and only if $X$ and $Y$ are independent, and equals $1$ if and only if $Y$ is almost surely a measurable function of $X$. It is invariant to strictly monotone transformations, requires no smoothing or tuning, and is computable in $O(n\log n)$ time.
Because of these properties, Chatterjee's rank correlation has attracted substantial interest. \citet{LinHan2022} established asymptotic normality and consistent variance estimation under dependence for continuous random variables, whereas \citet{Kroll2026} established asymptotic normality in a general setting. Other developments include conditional dependence measures \citep{AzadkiaChatterjee2021}, power enhancements \citep{LinHan2023_boosting}, bootstrap theory and resampling inference \citep{lin_failure_2024,DetteKroll2025}, and related distribution-free measures for detecting dependence \citep{YangAzadkiaWang2025}.

Chatterjee's coefficient is defined for any numeric $X$ and $Y$, including continuous, discrete, and mixed distributions, thus it is well-suited for ordinal variables. For nominal categorical responses, however, applying Chatterjee's coefficient requires assigning integers to the category labels. As demonstrated in Figure~\ref{fig:codeAB}, different encodings of the same nominal categories may lead to different values. Our goal is therefore to develop a label-invariant coefficient tailored to nominal categorical responses, by aggregating class indicators directly rather than imposing an artificial order on the labels.

Motivated by Chatterjee's idea, we introduce a label-invariant association measure for the numerical–categorical setting that satisfies Rényi’s criteria \citep{Renyi1959}: The proposed measure $\xi'$ is normalized in $[0,1]$, equals 0 only under independence, and equals 1 only under almost sure functional dependence.
The corresponding sample measure $\xi_n'$ of association quantifies the frequency of adjacent observations, in terms of the $X$ variable, having the same category. The measure $\xi_n'$ is then scaled to be an estimator of the population measure $\xi'$.

Our construction of the sample measure $\xi_n'$ is nonparametric and model-free, invariant to permutations of the labels of \(Y\) and to strictly monotone transformations of \(X\), and computable in \(O(n\log n)\) time.
Furthermore, we establish the asymptotic normality of \(\xi_n'\) under both independence and general dependence. This asymptotic theory not only yields a simple permutation-free Wald test for independence but also enables the construction of confidence intervals for the population coefficient $\xi'$ using a consistent variance estimator.

The remainder of this paper is organized as follows. 
Section 2 formalizes the population measure $\xi'(X,Y)$ and its sample estimator $\xi_n'(X,Y)$. We discuss its theoretical motivation, establish fundamental properties, including the characterization of independence and functional dependence, and explore its connections to the coefficient of determination ($R^2$), classical runs statistics \citep{BartonDavid1957}, and Chatterjee's rank correlation.
Section 3 investigates the asymptotic theory. We prove the consistency of $\xi_n'$ and derive its limiting distributions under both independence and general dependence. These results are then used in the construction of a permutation-free Wald test and asymptotically valid confidence intervals.
Section 4 presents comprehensive simulation studies, evaluating the finite-sample performance of $\xi_n'$ in terms of coding invariance, calibration, and power of the proposed independence test relative to established alternatives.
Section 5 demonstrates the practical utility of the proposed measure using genomic data from the TCGA breast cancer study, highlighting its ability to detect general dependencies such as variance heterogeneity.
Section 6 provides a concluding discussion.

\section{A coefficient for real-valued--categorical association}
\label{sec:xiprime}

\subsection{Definition}

Let \(X\) be a real-valued random variable and \(Y\) be a categorical random variable with \(k \geq 2\) levels (so that \(Y\) is not almost surely constant). Throughout the paper, write
\[
    p_j:=P(Y=j),
    \qquad
    B:=\sum_{j=1}^k p_j^2,
    \qquad
    \rho:=\sum_{j=1}^k p_j^3.
\]
For each \(j=1,\ldots,k\), choose a Borel-measurable function
\(g_j:\mathbb R\to[0,1]\) such that
\[
    g_j(X)
    =
    P\!\left(Y=j\mid\sigma(X)\right)
    \qquad\text{almost surely}.
\]
These functions may be chosen jointly so that
$\sum_{j=1}^k g_j(x)=1$ for every $x\in\mathbb R$. Fix such versions $(g_1,\ldots,g_k)$ throughout the paper. 
Set
\[
\begin{aligned}
    &h(x):=\sum_{j=1}^k g_j(x)^2,
    \qquad
    \tau(x):=\sum_{j=1}^k g_j(x)^3,
    \\
    &m(x):=\sum_{j=1}^k p_jg_j(x),
    \qquad
    A:=E[h(X)].
\end{aligned}
\]
With these notations, the population association measure is
\begin{align}
\xi^{\prime}(X,Y)
  &=\frac{\displaystyle
     E\!\Bigl[\sum_{j=1}^k P(Y=j\mid X)^2\Bigr]
     -\sum_{j=1}^k P(Y=j)^2}
     {\displaystyle 1-\sum_{j=1}^k P(Y=j)^2} \nonumber\\
  &=\frac{A-B}{1-B}.
\label{eq:xitrue}
\end{align}
Given i.i.d.\ samples \((X_1,Y_1),\ldots,(X_n,Y_n)\) from \((X,Y)\), reorder the samples based on the order of $X$ as $(X_{(1)},Y_{(1)}),\ldots,(X_{(n)},Y_{(n)})$ with $X_{(1)}\le\cdots\le X_{(n)}$, where ties in $X$ are broken independently at random. Define
\[
    A_n:=\frac{1}{n-1}\sum_{i=1}^{n-1}\mathbf 1\{Y_{(i+1)}=Y_{(i)}\},
    \qquad
    \hat p_j:=\frac1n\sum_{i=1}^n \mathbf 1\{Y_i=j\},
    \qquad
    B_n:=\sum_{j=1}^k\hat p_j^2.
\]
The sample estimator is
\begin{align}
\xi_n^{\prime}(X,Y)
  &=\frac{A_n-B_n}{1-B_n}.
\label{eq:xisample}
\end{align}
We define $\xi_n' = 0$ if the denominator is zero.

The two terms $A$ and $B$ in the numerator of Eq.~\eqref{eq:xitrue} quantify the concentration of the conditional distribution of $Y$ given $X$ and that of the marginal distribution of $Y$, respectively. This measure of concentration, the sum of squared probability masses, is also called the Herfindahl--Hirschman Index \citep{HallTideman1967}.
The first term, $A$, has the maximum value of $1$, which is achieved if and only if $Y$ is completely determined by $X$ (i.e., $Y=f(X)$ almost surely). 
Conversely, its minimum value is $B$, which is attained when $X$ and $Y$ are independent. In the latter case, the concentration of the conditional distribution is identical to that of the marginal distribution.
Consequently, the difference $A-B$ reflects the relative increase in concentration of the conditional distribution compared to the marginal distribution. 
Equivalently, one may view the numerator as the reduction in Gini impurity $1-B$ \citep{Gini1912}, which is a measure widely used in statistical learning \citep{Breiman1984}. In other words, it quantifies the information gain about $Y$ obtained by observing $X$. Therefore, this difference serves as a valid basis for an association measure. 
To normalize the measure to the range $[0,1]$, we divide the difference by its maximum achievable value, $1-B$.

As a toy example, we illustrate the behavior of the proposed population measure with a three-class response variable $Y \in \{1, 2, 3\}$ and a continuous predictor $X \in [0, 4]$. Data were generated using a multinomial logit model where a parameter $\gamma$ controls the strength of dependence (see Appendix~\ref{app:sim_details} for the detailed data generation process).
We examined two scenarios: a weak dependency setting resulting in significant class overlap, and a strong dependency setting yielding nearly deterministic dependence.
In the weak dependency scenario, the conditional distribution of $Y$ given $X=x$ remains relatively dispersed (top row, Figure~\ref{fig:concentration_gain2}). Consequently, the conditional concentration curve $x \mapsto \sum_{j=1}^3 P(Y=j \mid X=x)^2$ rises only slightly above the marginal concentration baseline $B$. The difference of the two terms, corresponding to the numerator of $\xi'$, is minimal (0.145) in this case, and yields a low coefficient value of $\xi' \approx 0.22$.
In contrast, under the strong dependency, $Y$ is nearly determined by the value $X=x$ (bottom row, Figure~\ref{fig:concentration_gain2}). Here, the conditional concentration at each $x$ approaches its maximum value of $1$, creating a substantial gap from the marginal baseline. 
This large blue shaded area ($0.631$) visually demonstrates how the proposed measure quantifies the increase in concentration achieved by conditioning on $X=x$, resulting in a high coefficient value of $\xi' \approx 0.96$.

\begin{figure}[tp!]
\centering
\includegraphics[width=0.95\linewidth]{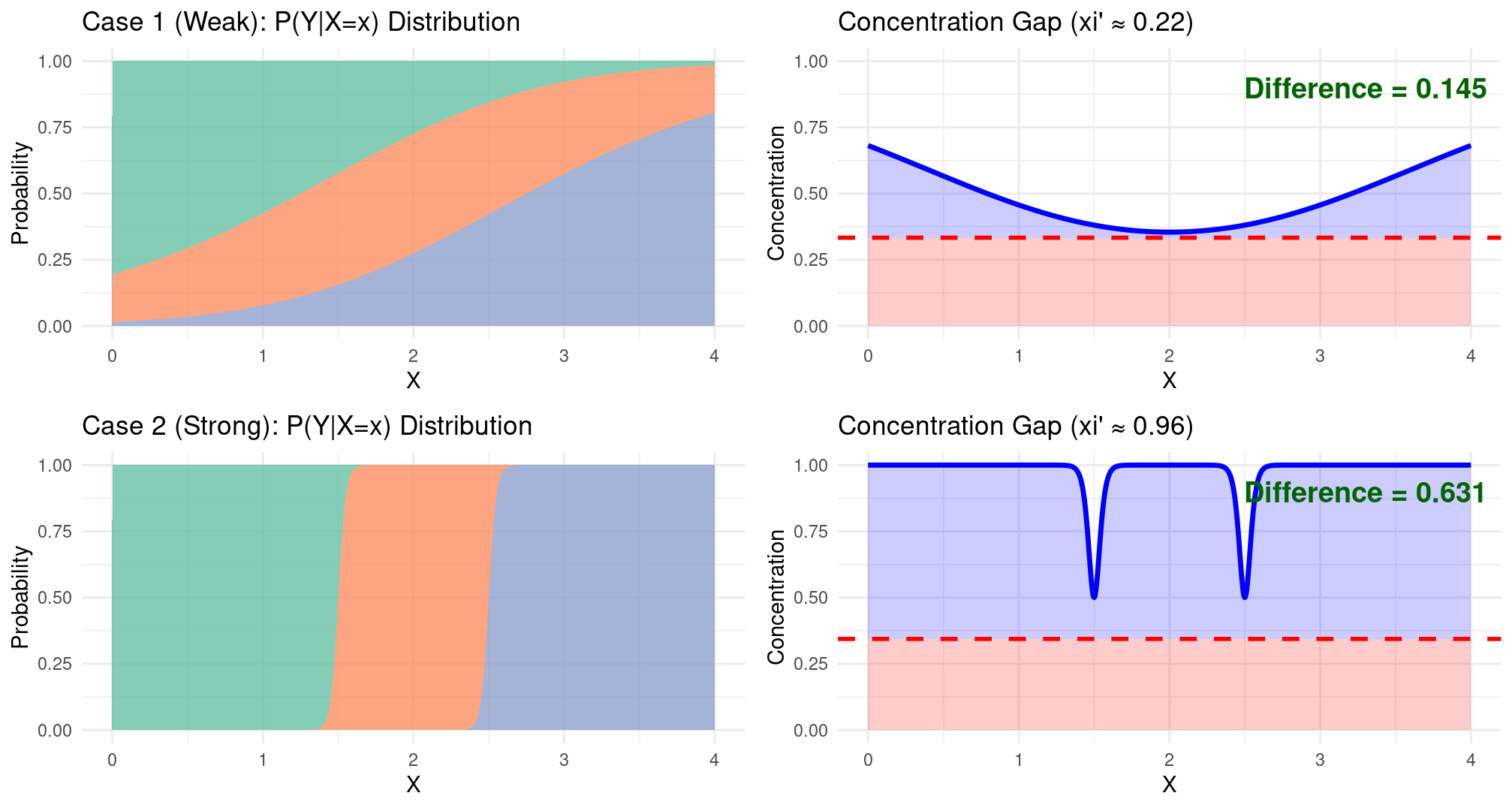}
\caption{The left panels display the conditional class probabilities $P(Y=j \mid X=x)$ for $j = 1,2,3$. The right panels compare the conditional concentration, $\sum_{j} P(Y=j \mid X=x)^2$ (solid blue line), with the marginal concentration baseline, $B$ (dashed red line). The blue shaded area represents the difference between these two terms, quantifying the information about $Y$ gained by observing $X=x$. The normalized coefficient values are $\xi' \approx 0.22$ (weak dependence) and $\xi' \approx 0.96$ (strong dependence).}
\label{fig:concentration_gain2}
\end{figure}

The rationale behind the estimator $\xi_n'$, defined in Eq.~\eqref{eq:xisample}, is rooted in the concept of conditional probability of coincidence.
Consider a fixed value $X=x$. Let $Y$ and $Y'$ be two independent random variables drawn from the conditional distribution of $Y$ given $X=x$. The probability that these two independent replicates coincide is given by
\[
\begin{aligned}
P(Y = Y' \mid X=x)
&= \sum_{j=1}^k P(Y=j \mid X=x)\,P(Y'=j \mid X=x) \\
&= \sum_{j=1}^k P(Y=j \mid X=x)^2.
\end{aligned}
\]
The population measure $\xi'(X,Y)$ essentially aggregates these local coincidences over the distribution of $X$, via the expectation $E[\sum_{j=1}^k P(Y=j \mid X)^2]$.
A practical challenge arises because the data provide only one response \(Y_i\) for each \(X_i\). To approximate replicates, we reorder the sample such that \(X_{(1)}\le\cdots\le X_{(n)}\), using independent random tie-breaking when ties occur. For each adjacent pair, we use \(\mathbf{1}\{Y_{(i)}=Y_{(i+1)}\}\) as a surrogate for the coincidence indicator of two conditional replicates. Averaging these indicators gives the adjacent-match term $A_n$.

The validity of this construction rests on the conditional class probabilities at adjacent observations being close. This would follow directly if each \(g_j\) were continuous. However, global continuity need not be assumed.
In the proof of strong consistency, we represent the randomized ordering using i.i.d.\ uniform variables on \((0,1)\). Lusin's theorem \citep{Folland1999} then gives a compact set \(C\subset(0,1)\), whose complement has arbitrarily small Lebesgue measure, on which the conditional class probability functions expressed in terms of these uniform variables are uniformly continuous. 
Since adjacent uniform order statistics become close as the sample size increases, their conditional class probabilities also become close on this compact set, while pairs involving its complement make an arbitrarily small contribution to the average. This argument is formalized in Theorem~\ref{thm:consistency-xin}, with the full proof provided in Appendix~\ref{app:proof_consistency}.

The sample estimator \(\xi_n'(X,Y)\) then compares the conditional and marginal coincidence probabilities. 
The adjacent-match term \(A_n\) estimates
\newline
\(
  A =  E[\sum_{j=1}^k P(Y=j\mid X)^2],
\)
whereas \(B_n\) estimates the marginal coincidence probability \(B=\sum_{j=1}^k P(Y=j)^2\), which is the corresponding value under independence.
Subtracting \(B_n\) removes this marginal baseline, and dividing by \(1-B_n\) normalizes the statistic so that its maximum possible value is \(1\).

\begin{remark}
The structure of our proposed measure $\xi'$ is motivated by the recently proposed regression-based approach to measuring dependence between real-valued random variables, commonly denoted by $\xi(X,Y)$. The measure of association $\xi(X,Y)$  was originally proposed by \citet{DetteSiburgStoimenov2013} and later popularized by \citet{Chatterjee2021}. It quantifies dependence through the integrated variance of conditional expectations:
\begin{align}
\xi(X,Y) = \frac{\displaystyle\int \mathrm{Var}\bigl(E[\mathbf{1}\{Y\ge t\}\mid X]\bigr) \, dF_Y(t)}
{\displaystyle\int \mathrm{Var}(\mathbf{1}\{Y\ge t\}) \, dF_Y(t)}.
\label{eq:Chatterjee}
\end{align}
This population quantity, or equivalent formulations thereof, appears in the literature under various names, including the Dette-Siburg-Stoimenov dependence measure, Chatterjee's coefficient, Chatterjee’s rank correlation, or more broadly as a measure of regression dependence \citep{AzadkiaChatterjee2021, LinHan2022, lin_failure_2024}.
Analogously, our proposed measure $\xi'$ can be expressed as a ratio of aggregated variances:
\begin{align}
\xi'(X,Y) = \frac{\displaystyle \sum_{j=1}^k \mathrm{Var}\bigl(P(Y=j \mid X)\bigr)}
{\displaystyle \sum_{j=1}^k \mathrm{Var}(\mathbf{1}\{Y=j\})}.
\label{eq:repofxip}
\end{align}
This representation highlights the structural similarity between the two measures in capturing regression dependence. 
A comprehensive discussion on the connection to Chatterjee's coefficient $\xi$ is provided in Section~\ref{subsubsec:conn_chat}.
\end{remark}

\subsection{Basic properties}

We establish fundamental theoretical properties of the proposed association measure $\xi'$, demonstrating its validity as a measure of dependence.

\begin{proposition}[Basic properties of $\xi'$]
\par\vspace{-0.3ex}\noindent
\begin{enumerate}
\item The range of $\xi'(X,Y)$ is $[0,1]$.
\item $\xi'(X,Y)=0$ if and only if $X$ and $Y$ are independent.
\item $\xi'(X,Y)=1$ if and only if there exists a measurable function $f:\mathbb R\to\{1,\dots,k\}$ such that $Y=f(X)$ almost surely.
\item $\xi'(X,Y)$ is invariant under strictly monotone transformations of $X$. That is, for any strictly monotone function $\varphi:\mathbb R\to\mathbb R$, we have $\xi'(\varphi(X),Y) = \xi'(X,Y)$.
\item $\xi'(X,Y)$ is invariant under permutations of the category labels. That is, for any bijection $\pi:\{1,\dots,k\}\to \{1,\dots,k\}$,
\[
\xi'(X,\pi(Y)) = \xi'(X,Y).
\]
The same invariance holds for the sample coefficient $\xi_n'$.
\item If $\;Y$ is a binary random variable taking values in $\{0,1\}$, then $\xi'$ coincides with Chatterjee’s coefficient $\xi$.
\end{enumerate}
\label{prop:basic-xi-prime}
\end{proposition}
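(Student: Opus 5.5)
The plan is to derive the variance representation \eqref{eq:repofxip} first and read most of the properties off it. Since $E[g_j(X)]=p_j$ by the tower property,
\[
\sum_{j=1}^k \mathrm{Var}\bigl(g_j(X)\bigr)=E[h(X)]-\sum_j p_j^2=A-B .
\]
Also $\sum_j\mathrm{Var}(\mathbf 1\{Y=j\})=\sum_j p_j(1-p_j)=1-B$. Because $Y$ takes at least two values with positive probability, $B<1$, so the denominator is strictly positive.

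For property 1, nonnegativity is immediate from $A-B=\sum_j \mathrm{Var}(g_j(X))\ge 0$. For the upper bound, $g_j\in[0,1]$ and $\sum_j g_j=1$ give $g_j^2\le g_j$, hence $h(x)\le 1$ and $A\le 1$, so $\xi'\le 1$.

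For property 2, $\xi'=0$ holds iff every $g_j(X)$ is a.s. constant, equal to $p_j$. This is the same as $P(Y=j\mid X)=P(Y=j)$ a.s. for all $j$, which is independence of $X$ and the discrete variable $Y$. The converse direction is trivial.

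For property 3, $\xi'=1$ holds iff $A=1$. Since $h\le 1$, this means $h(X)=1$ a.s. Now $\sum_j g_j^2=\sum_j g_j$ with $g_j\in[0,1]$ forces $g_j(g_j-1)=0$, so each $g_j(X)\in\{0,1\}$ and exactly one equals $1$. Define $f(x)$ as the index $j$ with $g_j(x)=1$ on the Borel set $\{h=1\}$, and $f=1$ elsewhere; $f$ is measurable. Then
\[
P(Y=f(X))=E\Bigl[\sum_j \mathbf 1\{f(X)=j\}\,g_j(X)\Bigr]=1 .
\]
Conversely, if $Y=f(X)$ a.s., then $P(Y=j\mid X)=\mathbf 1\{f(X)=j\}$ a.s., so $h(X)=1$ a.s. This step, handling versions and measurability of $f$, is the only mildly delicate point.

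For property 4, a strictly monotone $\varphi$ is injective and Borel, with Borel inverse on its range, so $\sigma(\varphi(X))=\sigma(X)$. The conditional probabilities, and hence $A$, are therefore unchanged; $B$ does not involve $X$.

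For property 5, relabeling permutes the summands in $A$, $B$, $A_n$ and $B_n$. The indicators $\mathbf 1\{Y_{(i+1)}=Y_{(i)}\}$ are preserved under a bijection, and the ordering by $X$, including the random tie-breaking, is unaffected. So both $\xi'$ and $\xi_n'$ are invariant.

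For property 6, take $Y\in\{0,1\}$ in \eqref{eq:Chatterjee}. For $t\le 0$ the indicator $\mathbf 1\{Y\ge t\}$ is constant, so both integrands vanish on the atom at $0$, which has mass $1-p_1$. At $t=1$, with mass $p_1$, the numerator integrand is $\mathrm{Var}(g_1(X))$ and the denominator integrand is $p_1(1-p_1)$. Hence
\[
\xi=\frac{\mathrm{Var}(g_1(X))}{p_1(1-p_1)} .
\]
In the binary case $g_0=1-g_1$, so $\sum_j\mathrm{Var}(g_j)=2\,\mathrm{Var}(g_1)$ and $1-B=2p_1(1-p_1)$. Therefore $\xi'=\xi$.
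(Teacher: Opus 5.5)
Your proof is correct and essentially the paper's. The paper gets $A\ge B$, and the equality case for $\xi'=0$, from Jensen's inequality for the strictly convex map $\psi(u)=\sum_j u_j^2$; you instead use the identity $A-B=\sum_j \mathrm{Var}(g_j(X))$, which is exactly the Jensen gap for this quadratic $\psi$ and makes the equality case ($g_j(X)=p_j$ a.s.\ for every $j$) immediate, and the remaining items ($h\le 1$, the construction of $f$, $\sigma(\varphi(X))=\sigma(X)$, relabeling, and the binary computation) match the paper's arguments, with your explicit Borel definition of $f$ on $\{h=1\}$ slightly more careful than the paper's $\arg\max$.
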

The detailed proof is provided in Appendix~\ref{subsec:proof-basic-properties}.

Proposition~\ref{prop:basic-xi-prime} confirms that $\xi'$ satisfies the standard requirements for a directed measure of dependence in the sense of \citet{Renyi1959}. Specifically, Properties 1--3 show that the measure is properly normalized and correctly identifies the two extreme cases of association: independence and deterministic functional dependence of $Y$ on $X$. As with Chatterjee's coefficient, $\xi'$ is asymmetric; it measures the predictive information contained in $X$ about $Y$, rather than a symmetric notion of association between the two variables.

Property 4 establishes invariance under strictly monotone transformations of $X$, reflecting the rank-based nature of the construction, since only the ordering of the observations according to $X$ is used. The same invariance also holds for the sample estimator $\xi_n'$.

Property 5 is the key distinction between the proposed coefficient and a direct application of Chatterjee's coefficient to a numerically encoded categorical response. For nominal categorical variables, the labels have no intrinsic order, and therefore a valid measure should not depend on how the categories are encoded or permuted. The proposed coefficient satisfies this label-permutation invariance both at the population level and at the sample level.

Finally, Property 6 shows that, in the binary case, the proposed population coefficient agrees with Chatterjee's population coefficient. For $k\ge 3$, however, $\xi'$ should not be viewed as a direct extension of Chatterjee's coefficient. We also note that, even in the binary case, the sample statistics $\xi_n$ and $\xi_n'$ do not coincide exactly.

\subsection{Connections to classical measures}
\label{subsec:connections_classical}

The proposed measure $\xi'$ admits intuitive interpretations that link it to classical statistical concepts. In this section, we first establish its connection to regression dependence, demonstrating that $\xi'$ generalizes the coefficient of determination ($R^2$) via Gini impurity reduction. Subsequently, we relate it to Chatterjee's coefficient $\xi$ and compare weighted and unweighted ways of aggregating class-specific contributions.

\subsubsection{Connection to \texorpdfstring{$R^2$}{R-squared} and Gini impurity}
\label{subsubsec:r2_gini}
First, when both $X$ and $Y$ are binary with values in $\{0,1\}$, $\xi'(X,Y)$ reduces exactly to the coefficient of determination $R^2$, from a linear regression of $Y$ on $X$:
\[
\xi'(X,Y) = \frac{\mathrm{Var}(E[Y\mid X])}{\mathrm{Var}(Y)} = \frac{\mathrm{Var}(Y) - E[\mathrm{Var}(Y|X)]}{\mathrm{Var}(Y)}.
\]
This identity demonstrates that $\xi'$ quantifies the proportion of variance in $Y$ explained by $X$, matching the principle of regression dependence in \citet{DetteSiburgStoimenov2013}.

For the general case where $X$ is a real-valued random variable and $Y$ is a categorical random variable with $k$ levels, $\xi'$ generalizes this concept by replacing the variance with Gini impurity, a standard measure of dispersion for categorical variables. Specifically,
\[
\xi'(X,Y) = \frac{G(Y) - E[G(Y\mid X)]}{G(Y)},
\]
where $G(Y):=1-B$ is the marginal Gini impurity, and $G(Y\mid X=x):=1-h(x)$ is the conditional Gini impurity \citep{Gini1912}.
Thus, $\xi'$ represents the relative reduction in Gini impurity, or equivalently, the information gain achieved by observing $X$. A verification of these equivalences is provided in Appendix~\ref{subsec:r2-gini-proof}.

\subsubsection{Connection to Chatterjee's coefficient \texorpdfstring{$\xi$}{xi}}
\label{subsubsec:conn_chat}

The proposed association measure $\xi'$, defined in Eq.~\eqref{eq:xitrue}, is closely related to Chatterjee's $\xi$, defined in Eq.~\eqref{eq:Chatterjee}, but it is not a straightforward extension to a numerically encoded categorical response.
In Chatterjee's measure dependence is quantified through the variance of the conditional tail probability
\[
\mathrm{Var}\bigl(P(Y\ge t\mid X)\bigr)
=
\mathrm{Var}\!\left(E[\mathbf{1}\{Y\ge t\}\mid X]\right).
\]
For a categorical response, replacing the cumulative indicators $\mathbf{1}\{Y\ge t\}$ with the class indicators $\mathbf{1}\{Y=j\}$ yields the class-specific terms
\[
\mathrm{Var}\bigl(P(Y=j\mid X)\bigr),
\qquad j=1,\ldots,k.
\]
The crucial step lies in how these class specific terms are aggregated. In Chatterjee's coefficient, the variance of the conditional tail probability is integrated over $t$ with respect to $dF_Y(t)$, so response values with higher marginal probabilities receive greater weight. For a nominal response, either retaining or omitting the analogous prevalence weight may be appropriate, depending on the role assigned to class prevalence. This leads to the following two strategies.

\begin{itemize}
    \item Candidate 1: \textbf{Weighted Aggregation} ($\xi''$). A direct analogue of the integral with respect to the distribution function $F_Y(t)$ is the weighting of each conditional variance term by the marginal probability $P(Y=j)$. Noting that $E[\mathbf{1}\{Y=j\}\mid X] = P(Y=j \mid X)$, this yields a weighted coefficient $\xi''$, defined as:
\[
\xi''(X,Y) = \frac{\displaystyle \sum_{j=1}^k P(Y=j) \cdot \mathrm{Var}\bigl(P(Y=j \mid X)\bigr)}
{\displaystyle \sum_{j=1}^k P(Y=j) \cdot \mathrm{Var}(\mathbf{1}\{Y=j\})}.
\]
\item Candidate 2: \textbf{Unweighted Aggregation} ($\xi'$).
Alternatively, $\xi'$ aggregates the class-specific variance contributions without an additional prevalence factor and therefore treats category labels symmetrically:
\[
\xi'(X,Y) = \frac{\displaystyle \sum_{j=1}^k \mathrm{Var}\bigl(P(Y=j \mid X)\bigr)}
{\displaystyle \sum_{j=1}^k \mathrm{Var}(\mathbf{1}\{Y=j\})}.
\]

\end{itemize}

We observe that $\xi' = \xi''$ for $k=2$. The two population coefficients also coincide whenever the class probabilities are uniform and yield similar values when the distribution is approximately uniform. Under class imbalance, however, the effect of prevalence weighting depends on which classes carry the association. The following two examples focus on minority-class signals, for which the two aggregation schemes can differ substantially.

\begin{example}[Rare signal setting]
\label{ex:rare_perfect}
Fix $k\ge3$, and let $\epsilon\in(0,1/k)$. Suppose that the marginal distribution of $Y$ is given by
\[
p_1 = P(Y=1) = \epsilon,
\qquad
p_j = P(Y=j) = \frac{1-\epsilon}{k-1},\quad j=2,\dots,k.
\]
Let $S \subset \mathbb{R}$ be a measurable set satisfying $P(X \in S) = \epsilon$. We define the conditional probability functions as
\begin{align*} 
g_1(x) &:= P(Y=1 \mid X=x) = \mathbf{1}_S(x), \\ 
g_j(x) &:= P(Y=j \mid X=x) = \frac{1}{k-1}\,\mathbf{1}_{S^c}(x),\ \quad j\ge 2. 
\end{align*}
Under this model, $X$ deterministically identifies the rare class $Y=1$ (i.e., $x\in S \Rightarrow Y=1$), while providing no discriminatory power among the remaining $k-1$ classes. See Figure~\ref{fig:example_k4}(a) for an illustration of the conditional probability $g_j(x)$'s (for the case $k = 4$). 

\begin{figure}[tp]
    \centering
    \begin{subfigure}[b]{\linewidth}
        \centering
        \includegraphics[width=\linewidth]{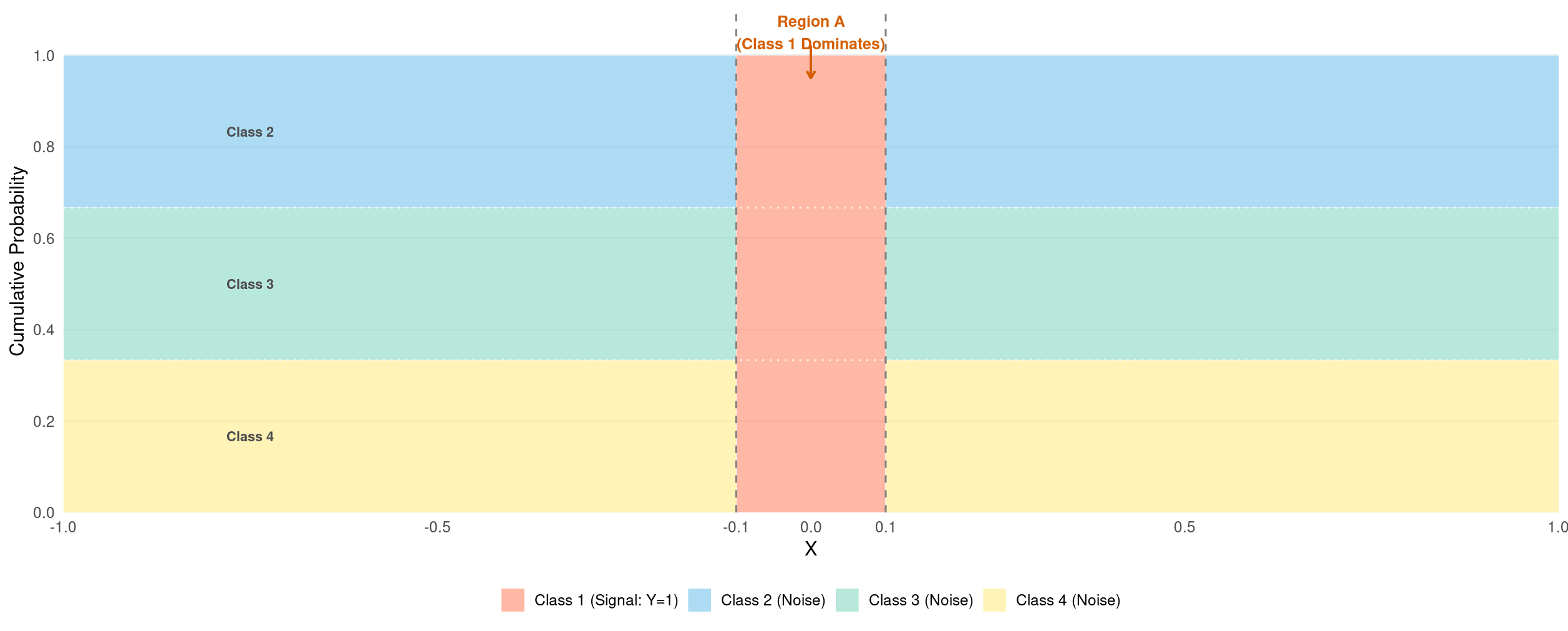}
        \caption{Locally perfect prediction}
        \label{fig:scenario_A}
    \end{subfigure}
    
    \vspace{1em} 
    
    \begin{subfigure}[b]{\linewidth}
        \centering
        \includegraphics[width=\linewidth]{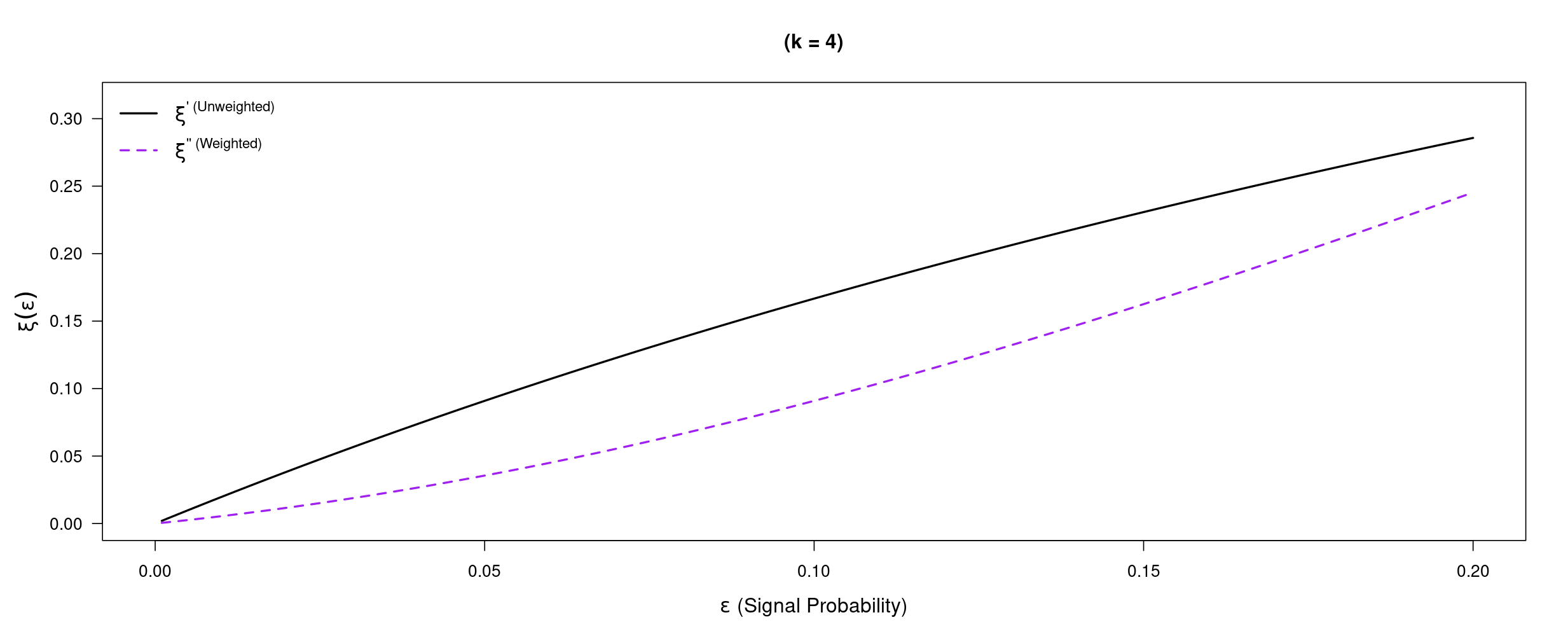}
        \caption{Comparison between $\xi'$ and $\xi''$ ($k=4$)}
        \label{fig:comparison_k4}
    \end{subfigure}
    
    \caption{Illustration of the rare signal setting described in Example~\ref{ex:rare_perfect}.}
    \label{fig:example_k4}
\end{figure}

A direct computation yields
\begin{align*}
\xi'(\epsilon,k) &= \frac{\epsilon k}{k(1+\epsilon)-2},
\\
\xi''(\epsilon,k) &=
\frac{\displaystyle \epsilon^2 + \frac{\epsilon(1-\epsilon)}{(k-1)^2}}
     {\displaystyle \epsilon^2 + \frac{(1-\epsilon)(k-2+\epsilon)}{(k-1)^2}}.
\end{align*}
Figure \ref{fig:example_k4}(b) compares $\xi'(\epsilon,k)$ with $\xi''(\epsilon,k)$ for $k = 4$, for a range of $\epsilon$ values. As $\epsilon \to 0$, both coefficients vanish, since the signal is restricted to a subset with vanishing probability. However, their limiting ratio is
\[
\lim_{\epsilon \to 0}\frac{\xi'(\epsilon,k)}{\xi''(\epsilon,k)} = k.
\]
Thus, although both coefficients vanish as the signal region becomes rare, $\xi'$ is asymptotically $k$ times as large as $\xi''$. Additional comparisons for $k=3$ and $k=10$ are provided in Appendix~\ref{app:rare_case_plots}.
\end{example}

\begin{example}[Dominant noise setting]
\label{ex:dominant_noise}
To compare the population sensitivity of the two coefficients to rare signals, we consider a dominant noise setting for general $k \ge 3$. Specifically, let $\epsilon \in (0, \frac{1}{k-1})$ denote the marginal probability of each signal class (i.e., rare signals), and let $\delta$ be the probability measure of each signal interval $I_j$ (i.e., $P(X \in I_j) = \delta$). We assume the intervals $I_1, \dots, I_{k-1}$ are disjoint and satisfy $0 < \delta < \frac{1}{k-1}$. 
The marginal probabilities of $Y$ are given by:
\begin{align*}
p_j  = \epsilon, \quad j=1,\dots,k-1, \qquad
p_k = 1 - (k-1)\epsilon.
\end{align*}
Here, the $k$-th class represents the dominant noise background, which absorbs the majority of the probability mass as $\epsilon \to 0$.
The conditional probability functions $g_l(x) = P(Y=l \mid X=x)$ are defined explicitly:
\begin{align}
g_k(x) &= 1 - (k-1)\epsilon, \quad \text{for all } x, \label{eq:gk_def} \\
g_j(x) &= 
\begin{cases} 
(k-1)\epsilon & \text{if } x \in I_j, \\
\epsilon & \text{if } x \notin \bigcup_{m=1}^{k-1} I_m, \\
0 & \text{otherwise},
\end{cases} \quad j=1,\dots,k-1. \label{eq:gj_def}
\end{align}
Eq.~\eqref{eq:gk_def} ensures that the dominant class provides no predictive information, while Eq.~\eqref{eq:gj_def} shows that each signal class concentrates its probability mass within a specific interval $I_j$ to form a local signal. 
Under this setting, the explained variances for each class are derived as:
\begin{align*}
v_k &:= \mathrm{Var}(g_k(X)) = 0, \\
v_j &:= \mathrm{Var}(g_j(X)) = \delta(k-1)(k-2)\epsilon^2, \quad j=1,\dots,k-1.
\end{align*}
By substituting the derived variance terms into the definitions of $\xi'$ and $\xi''$, we obtain their exact closed-form expressions:
\begin{align}
\xi'(\epsilon, \delta, k) &= \frac{\delta(k-1)(k-2)\epsilon}{2 - k\epsilon}, \label{eq:xi_prime_exact} \\
\xi''(\epsilon, \delta, k) &= \frac{\delta(k-1)(k-2)\epsilon^2}{\epsilon(1-\epsilon) + \left[1-(k-1)\epsilon\right]^2}. \label{eq:xi_double_exact}
\end{align}
See Figure~\ref{fig:dominant_noise} for an illustration of the setting for the case $k=4$.

\begin{figure}[tp]
    \centering
    \begin{subfigure}[b]{\linewidth}
        \centering
        \includegraphics[width=\linewidth]{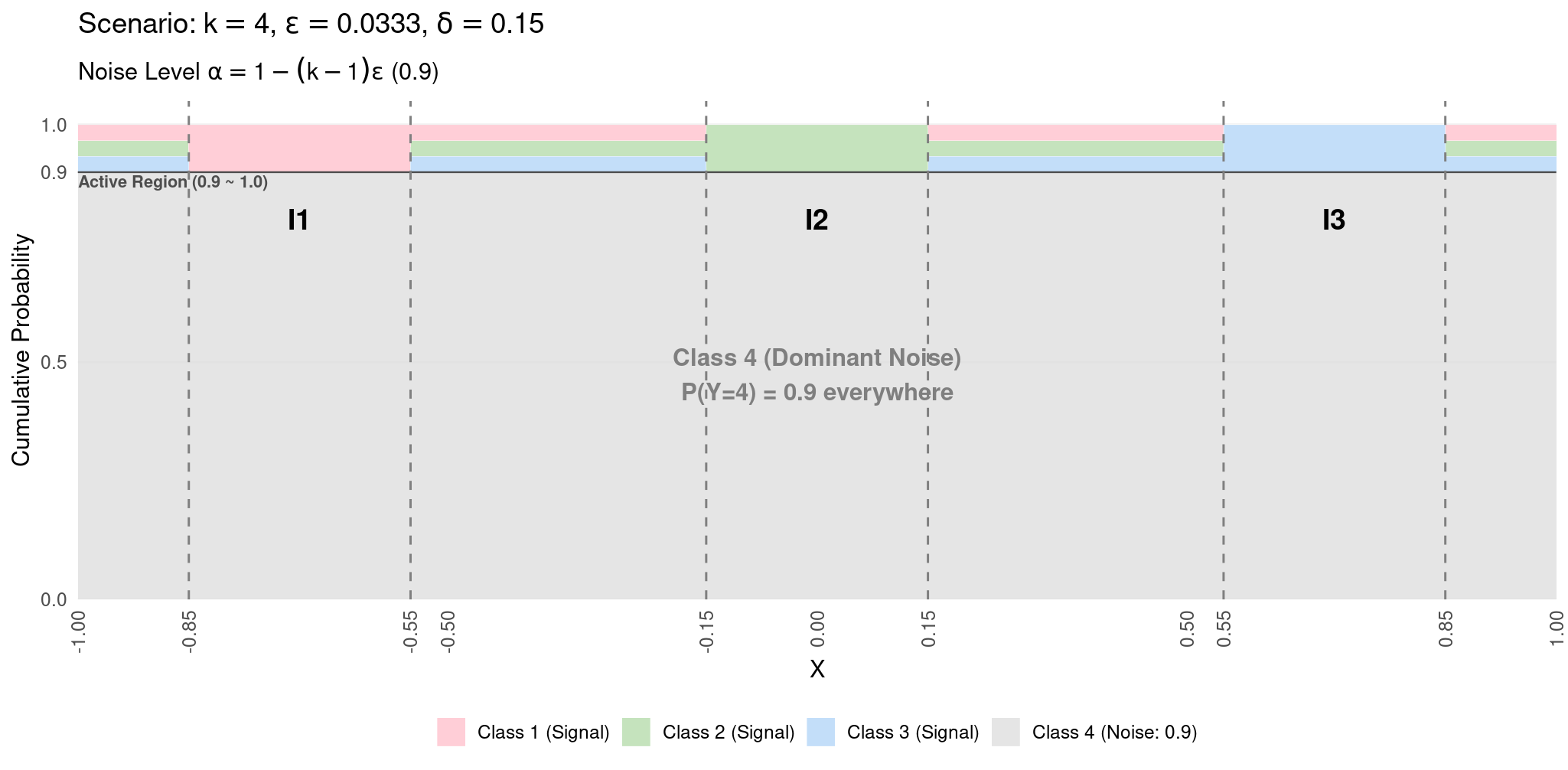}
        \caption{Pattern discovery in noise.}
        \label{fig:scenario_B}
    \end{subfigure}
    
    \vspace{1em} 
    
    \begin{subfigure}[b]{\linewidth}
        \centering
        \includegraphics[width=\linewidth]{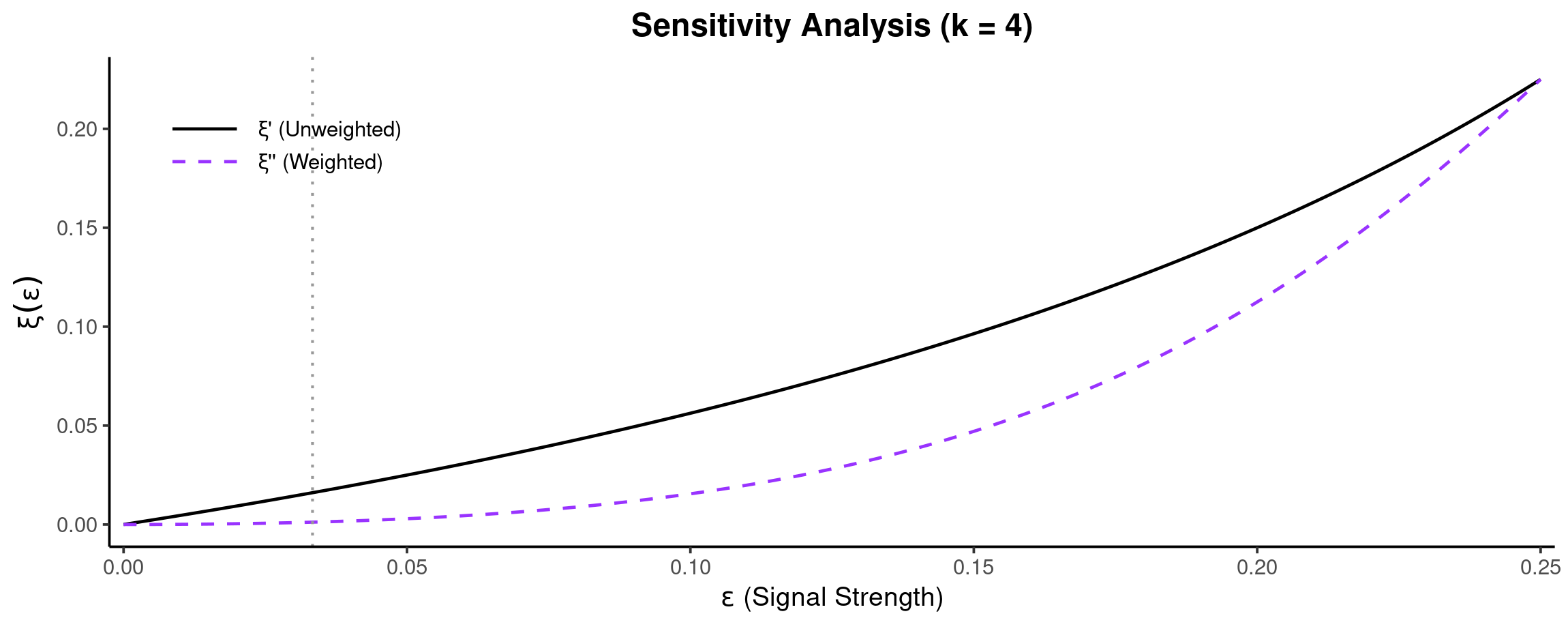}
        \caption{Population-sensitivity comparison. Comparison of $\xi'$ (solid black) and $\xi''$ (dashed purple) as a function of signal strength $\epsilon$. In this rare-signal design, $\xi'$ is of linear order in $\epsilon$, whereas $\xi''$ is of quadratic order.}
        \label{fig:detectability}
    \end{subfigure}
    \caption{Illustration of the dominant noise setting described in Example~\ref{ex:dominant_noise}.}
    \label{fig:dominant_noise}
\end{figure}

As $\epsilon \to 0$, the denominators in Eqs.~\eqref{eq:xi_prime_exact} and~\eqref{eq:xi_double_exact} converge to $2$ and $1$, respectively. Hence, for fixed $\delta$ and $k$, $\xi'$ vanishes at a linear rate as $\epsilon\to0$, whereas $\xi''$ vanishes at a quadratic rate.
\end{example}

At the population level, the examples above show that $\xi'$ can be much larger than $\xi''$ for minority class signals. Additional population comparisons are provided in Appendix~\ref{app:rare_case_plots}, while finite-sample permutation experiments in Appendices~\ref{app:dominant_noise_sim} and~\ref{app:signal_location_sim} show that the test based on $\xi_n'$ can have much higher power in this regime while remaining competitive for majority class signals. Together with the equality of the population targets under uniform class probabilities, these results from the settings examined motivate our choice of $\xi'$ as the primary proposal.

\section{Asymptotic properties of \texorpdfstring{$\xi_n'$}{xi'}}
In this section, we investigate the fundamental asymptotic properties of the estimator $\xi_n'$. We begin by establishing a theoretical connection to the classical runs statistic, which provides intuitive insight into the estimator's behavior. Subsequently, we prove the strong consistency of $\xi_n'$, and derive its asymptotic limiting distributions under both the independence and general dependence structures, thereby enabling formal statistical inference.

\subsection{Connection to the runs statistic}
\label{subsubsec:runs-xiprime}
While the population measure $\xi'$ is interpreted through variance or Gini reduction, the sample estimator $\xi_n'$ offers a complementary perspective through the lens of the sorted sequence $(Y_{(1)},\dots,Y_{(n)})$. 

In this empirical view, we establish that $\xi_n'$ is functionally equivalent to the classical runs statistic studied by \citet{BartonDavid1957}. Specifically, let $R_n$ denote the total number of runs, defined as maximal consecutive blocks of identical labels, in the sorted sequence of $n$ labels. Then, $\xi_n'$ can be expressed explicitly as an affine transformation of $R_n$:
\begin{align}
\label{eq:xi-runs-relation}
\xi_n' = 1 - \frac{R_n-1}{(n-1)(1-B_n)},
\end{align}
where $B_n$ is the sum of squared sample proportions, that is, the empirical marginal coincidence probability.

This relationship is derived from the observation that a new run begins exactly when the label changes between consecutive positions. The total number of label changes is $R_n-1$. Recalling that $A_n$ is the adjacent-match average, we have the identity:
\[
1 - A_n = \frac{1}{n-1}\sum_{i=1}^{n-1} \mathbf{1}\{Y_{(i+1)}\neq Y_{(i)}\} = \frac{R_n-1}{n-1}.
\]
Substituting $A_n = 1 - \frac{R_n-1}{n-1}$ into the definition $\xi_n'=(A_n-B_n)/(1-B_n)$ yields Eq.~\eqref{eq:xi-runs-relation}.

Eq.~\eqref{eq:xi-runs-relation} provides a clear intuition: $\xi_n'$ is maximized when the number of runs $R_n$ is minimized. This connection also elucidates the limiting behavior of $\xi_n'$.
Under independence between $X$ and $Y$, the sequence of labels behaves essentially as a random permutation. Classical theory on runs indicates that, in this regime, the number of runs grows linearly with the sample size (i.e., $R_n \approx n(1-B_n)$). Consequently, the fraction in Eq.~\eqref{eq:xi-runs-relation} converges to $1$, driving $\xi_n'$ towards $0$.

Conversely, the presence of dependence implies that adjacent samples in the sorted sequence are likely to share the same label. Whenever the dependence structure ensures that the number of runs grows strictly slower than the sample size (i.e., $R_n = o(n)$), the subtraction term vanishes asymptotically, whereby $\xi_n'$ tends to 1.

This equivalence places $\xi_n'$ within the classical theory of runs and provides intuition for its behavior under independence. While classical results on \(R_n\) provide an alternative route to establishing asymptotic normality under the null, our proof works directly with \(A_n\), the average of the adjacent-match indicators.

\subsection{Convergence to the population coefficient}
\label{subsec:convtopop}
We establish the strong consistency of \(\xi_n'\) for an arbitrary real-valued \(X\), including continuous, discrete, and mixed distributions.
The result imposes no support, density, or moment conditions on \(X\), and no continuity assumptions on the conditional class probabilities.

\begin{theorem}[Consistency of $\xi_n'$]
\label{thm:consistency-xin}
Let $(X_i,Y_i)_{i=1}^n$ be i.i.d. samples from a pair $(X,Y)$, where $X$ is an arbitrary real-valued random variable and $Y$ is a categorical random variable taking values in $\{1,\ldots,k\}$ with $k\ge2$. Assume that $Y$ is not almost surely constant. Then
\[
   \xi_n^\prime (X,Y) \;\xrightarrow{a.s.}\; \xi^\prime(X,Y),
   \qquad n\to\infty.
\]
\end{theorem}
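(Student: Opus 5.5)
Since $B_n\to B$ almost surely by the strong law of large numbers applied to each $\hat p_j$, and $B<1$ because $Y$ is not almost surely constant, the continuous mapping theorem reduces the claim to $A_n\to A$ almost surely. The plan is to prove this in two parts: first a conditional-mean statement $E[A_n\mid \mathcal F_n]-A\to0$, where $\mathcal F_n$ is a suitable $\sigma$-field carrying the ordering information, and then a concentration statement for $A_n$ around that conditional mean, summable in $n$ so that Borel--Cantelli applies.

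\textbf{Uniform representation.} Let $F$ be the distribution function of $X$. Represent the randomized ordering through i.i.d.\ $U_i\sim\mathrm{Unif}(0,1)$ by setting $X_i=F^{-1}(U_i)$, where $F^{-1}$ is the quantile function. Ties in $X$ then occur exactly when several $U_i$ fall in a flat stretch of $F^{-1}$, and ordering by $U_i$ is a valid independent random tie-break. Generate the labels conditionally on the $U_i$, independently across $i$, with $P(Y_i=j\mid U_1,\dots,U_n)=\tilde g_j(U_i)$, where $\tilde g_j:=g_j\circ F^{-1}$. This reproduces the joint law of the sorted sequence $(Y_{(1)},\dots,Y_{(n)})$, so it suffices to prove a.s.\ convergence under this construction. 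Let $\mathcal F_n=\sigma(U_1,\dots,U_n)$. Conditionally on $\mathcal F_n$, the sorted labels are independent, so
\[
E[A_n\mid\mathcal F_n]=\frac1{n-1}\sum_{i=1}^{n-1}\sum_{j=1}^k \tilde g_j(U_{(i)})\tilde g_j(U_{(i+1)}).
\]

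\textbf{Conditional mean.} Write $\tilde g_j(U_{(i)})\tilde g_j(U_{(i+1)})=\tilde g_j(U_{(i)})^2+\tilde g_j(U_{(i)})\bigl(\tilde g_j(U_{(i+1)})-\tilde g_j(U_{(i)})\bigr)$. The average of $\sum_j \tilde g_j(U_{(i)})^2$ equals an i.i.d.\ average of $\tilde h(U_i)$ up to one term, so by the SLLN it converges to $E[\tilde h(U)]=A$. For the remaining term, fix $\eta>0$ and use Lusin's theorem to obtain a compact $C\subset(0,1)$ with $\lambda(C^c)<\eta$ on which every $\tilde g_j$ is uniformly continuous. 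The error is at most $\frac1{n-1}\sum_i\sum_j|\tilde g_j(U_{(i+1)})-\tilde g_j(U_{(i)})|$, and each summand is bounded by $2$. Split the pairs into three groups:
\begin{itemize}
\item pairs with both endpoints in $C$ and gap below $\delta$, which contribute at most $k\omega(\delta)$, where $\omega$ is the modulus of continuity on $C$;
\item pairs with a gap of at least $\delta$, of which there are at most $1/\delta$ since the gaps sum to at most $1$, so they contribute $O(1/(n\delta))$;
\item pairs with an endpoint in $C^c$, which contribute at most $2k\cdot 2\cdot\#\{i:U_i\in C^c\}/(n-1)$, and this tends almost surely to $4k\lambda(C^c)<4k\eta$.
\end{itemize}
Letting $n\to\infty$, then $\delta\to0$, then $\eta\to0$ along countable sequences gives $E[A_n\mid\mathcal F_n]\to A$ almost surely.

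\textbf{Concentration, the main obstacle.} Conditionally on $\mathcal F_n$, $A_n$ is a function of the independent variables $Y_{(1)},\dots,Y_{(n)}$, and changing any one of them alters $A_n$ by at most $2/(n-1)$. McDiarmid's inequality therefore gives
\[
P\bigl(|A_n-E[A_n\mid\mathcal F_n]|>t\mid\mathcal F_n\bigr)\le 2\exp\!\bigl(-c(n-1)t^2\bigr)
\]
for an absolute constant $c>0$. Taking expectations, the unconditional bound is summable in $n$, and Borel--Cantelli yields $A_n-E[A_n\mid\mathcal F_n]\to0$ almost surely. The delicate points are justifying the coupling, namely that the random tie-breaking and the conditional independence of the labels given the $U$'s reproduce the true law, and handling the Lusin exceptional set uniformly in $n$. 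The first of these is the step I would check most carefully: measurability of $g_j\circ F^{-1}$, and the fact that $P(Y=j\mid U)=g_j(F^{-1}(U))$ for a $U$ whose image $F^{-1}(U)$ has the law of $X$.
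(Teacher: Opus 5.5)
Your proposal is correct and follows essentially the paper's proof: you reduce to $A_n\to A$, represent the tie-broken ordering through uniforms, use Lusin's theorem to control $\frac{1}{n-1}\sum_i\sum_j|\tilde g_j(U_{(i+1)})-\tilde g_j(U_{(i)})|$, and apply an exponential bound with Borel--Cantelli to $A_n-E[A_n\mid\mathcal F_n]$. The differences are only technical substitutions: you build $X_i=F^{-1}(U_i)$ from the uniforms instead of using the paper's transform $U_i=F_X(X_i-)+V_i\Delta F_X(X_i)$; you count the at most $1/\delta$ large gaps instead of citing the almost sure vanishing of the maximal uniform spacing; and you use McDiarmid's bounded-differences inequality instead of splitting into odd and even indices and applying Hoeffding.
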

As discussed following Eq.~\eqref{eq:xisample}, the proof is based on Lusin's theorem \citep{Folland1999}. The full details are provided in Appendix~\ref{app:proof_consistency}.

\subsection{Asymptotic normality under independence}
We next derive the asymptotic distribution of $\xi_n'$ under the independence hypothesis. The key point is that, under $X\perp Y$, the ordering induced by $X$ contains no information about the labels. This remains true even when $X$ is not continuous, provided that ties in $X$ are broken independently at random. 

Consequently, the ordered label sequence $(Y_{(1)},\ldots,Y_{(n)})$ has the same distribution as an i.i.d.\ categorical sequence with probabilities $p_1,\ldots,p_k$.
As noted in Section~\ref{subsubsec:runs-xiprime}, this representation links $\xi_n'$ to the classical runs statistic. One could therefore appeal directly to classical asymptotic results for runs statistics, such as those of \citet{BartonDavid1957}. 
In our technical argument in Appendix~\ref{app:proof_asymptotic_normality}, however, we work directly with the adjacent-match indicators
\[
W_i=\mathbf{1}\{Y_{(i+1)}=Y_{(i)}\},
\qquad i=1,\ldots,n-1.
\]
Under independence, the sequence \((W_i)_{i=1}^{n-1}\) is bounded and \(1\)-dependent (i.e., \(W_i\) and \(W_j\) are dependent only when \(|i-j|\leq 1\)). This formulation provides a direct route to the null limit and aligns with the decomposition used later under general dependence.

Using the notation introduced in Section~\ref{sec:xiprime}, set
\[
\sigma^2:=B(1+B)-2\rho,
\qquad
\kappa^2:=\frac{\sigma^2}{(1-B)^2}.
\]

\begin{theorem}[Asymptotic normality of $\xi_n'$ under independence]
\label{thm:asymptotic-xin}
Let $(X_i,Y_i)_{i=1}^n$ be i.i.d.\ samples from $(X,Y)$, where $X$ is an arbitrary real-valued random variable and $Y\in\{1,\ldots,k\}$ with $k\ge2$. Assume that $X\perp Y$ and that $Y$ is not almost surely constant. Then
\[
    \sqrt n\,\xi_n'
    \xrightarrow{d}
    N(0,\kappa^2),
\qquad n\to\infty,
\]
where $\kappa^2>0$.
Moreover, with the plug-in estimators
\[
\begin{aligned}
    \hat\rho
    :=
    \sum_{j=1}^k\hat p_j^3, \qquad
    \hat\sigma^2
    :=
    B_n(1+B_n)-2\hat\rho,
    \qquad
    \hat\kappa^2
    :=
    \frac{\hat\sigma^2}{(1-B_n)^2},
\end{aligned}
\]
we have $\hat\kappa^2\xrightarrow{p}\kappa^2$ and
\[
    Z_n
    :=
    \frac{\sqrt n\,\xi_n'}{\hat\kappa}
    \xrightarrow{d}
    N(0,1),
    \qquad n\to\infty.
\]
\end{theorem}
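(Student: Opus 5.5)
The approach is to reduce the statement to a central limit theorem for the adjacent-match average $A_n$, after passing to i.i.d.\ labels. We then combine it with the $\sqrt n$-consistency of $B_n$ and apply Slutsky's lemma.

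\textbf{Reduction to i.i.d.\ labels.} Under $X\perp Y$ with ties broken by independent uniforms, the permutation sorting the sample is a function of $(X_i,U_i)_{i\le n}$ only. This vector is independent of $(Y_i)_{i\le n}$. Conditioning on the permutation, the sorted labels $(Y_{(1)},\dots,Y_{(n)})$ are therefore i.i.d.\ with law $(p_1,\dots,p_k)$. Moreover $B_n$ is permutation invariant, so the pair $(A_n,B_n)$ has the same joint law as when computed from an i.i.d.\ sequence $Y_1,\dots,Y_n$ in its original order. From now on $W_i=\mathbf 1\{Y_{i+1}=Y_i\}$.

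\textbf{Decomposition and CLT.} Write
\[
\sqrt n\,\xi_n'=\frac{\sqrt n\,(A_n-B)-\sqrt n\,(B_n-B)}{1-B_n}.
\]
Since $B_n\to B<1$ almost surely, it suffices to find the limit of the numerator. Put $e_j=\hat p_j-p_j$. Then $B_n-B=2\sum_j p_je_j+O_p(1/n)$, which is asymptotically linear:
\[
\sqrt n(B_n-B)=\frac{1}{\sqrt n}\sum_{i=1}^n 2\bigl(p_{Y_i}-B\bigr)+o_p(1).
\]
The term $A_n-B$ equals $\frac1{n-1}\sum_{i}(W_i-B)$, because $E W_i=\sum_j p_j^2=B$. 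The numerator is therefore $n^{-1/2}\sum_i \zeta_i+o_p(1)$ for a stationary, bounded, $2$-dependent sequence
\[
\zeta_i=\bigl(\mathbf 1\{Y_{i+1}=Y_i\}-B\bigr)-2\bigl(p_{Y_i}-B\bigr).
\]
The CLT for $m$-dependent stationary sequences (Hoeffding--Robbins) then gives normality with variance $\operatorname{Var}\zeta_1+2\sum_{h\ge1}\operatorname{Cov}(\zeta_1,\zeta_{1+h})$.

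\textbf{Variance computation (main obstacle).} A Hoeffding-type projection simplifies the bookkeeping: $E[W_i\mid Y_i]=p_{Y_i}$ and $E[W_i\mid Y_{i+1}]=p_{Y_{i+1}}$. Hence
\[
W_i-B=(p_{Y_i}-B)+(p_{Y_{i+1}}-B)+D_i,
\]
where $D_i$ is a degenerate kernel of the pair $(Y_i,Y_{i+1})$. Summing over $i$, the linear parts contribute $2\sum_i(p_{Y_i}-B)$ up to boundary terms, and these cancel exactly the linear expansion of $B_n$. So the numerator equals $n^{-1/2}\sum_i D_i+o_p(1)$.

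The $D_i$ are uncorrelated across $i$: for $D_i,D_{i+1}$ the only shared variable is $Y_{i+1}$, and degeneracy kills the covariance. The limiting variance is therefore $E D_1^2$, which equals
\[
\operatorname{Var}W_1-2\operatorname{Var}(p_{Y_1})=B-B^2-2(\rho-B^2)=B+B^2-2\rho=\sigma^2.
\]
Dividing by $(1-B)^2$ gives $\kappa^2$.

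Positivity of $\kappa^2$ is the delicate point. $E D_1^2=0$ would force $\mathbf 1\{y=y'\}=p_y+p_{y'}-B$ for all $y,y'$ in the support. Taking $y\ne y'$ gives $p_y+p_{y'}=B$, and taking $y=y'$ gives $2p_y=1+B$. The latter cannot hold for two distinct supported categories, since each $p_y<1$ forces $2p_y<1+B$ once $B<1$. Because $Y$ is nonconstant, $B<1$ and at least two categories are supported, so $\sigma^2>0$. Alternatively, one can check directly that $\sigma^2=\sum_j p_j^2(1-p_j)^2+\sum_{j\ne l}p_j^2p_l^2-\dots>0$; I would verify this carefully.

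\textbf{Studentization.} By the strong law $\hat p_j\to p_j$, so $\hat\rho\to\rho$ and $B_n\to B$ by continuity. Hence $\hat\kappa^2\to\kappa^2>0$ in probability, and Slutsky's lemma yields $Z_n\xrightarrow{d}N(0,1)$. The convention $\xi_n'=0$ when $B_n=1$ is irrelevant, because $P(B_n=1)\to0$.
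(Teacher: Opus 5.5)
Your proof is correct and shares the paper's skeleton. Both reduce to an i.i.d.\ label sequence via independence of the sorting permutation from the labels, apply the Hoeffding--Robbins CLT to a bounded $1$-dependent sequence, and finish with the plug-in estimator and Slutsky.

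The genuine difference is how the variance is organized. The paper proves a bivariate CLT for $\sqrt n(A_n-B,\,B_n-B)$ using $S_i=(W_i-B,\,2(p_{Z_i}-B))$. It computes the full long-run covariance $\Gamma$, lag-one cross terms included, and then applies the delta method with gradient $(1,-1)^\top/(1-B)$ at $(B,B)$. The cancellation of $\sigma_B^2$ in $\kappa^2$ appears only at the end, as an algebraic identity.

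You instead linearize the ratio first and split $W_i-B$ into its H\'ajek projection $(p_{Y_i}-B)+(p_{Y_{i+1}}-B)$ plus a degenerate remainder $D_i$. The projection telescopes against the linear part of $B_n$, and the $D_i$ are uncorrelated. The limiting variance is therefore just $E D_1^2=\operatorname{Var}W_1-2\operatorname{Var}(p_{Y_1})=\sigma^2$.

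What each route buys: yours explains why $\Gamma$ has its block form, since $\sigma_B^2$ is the variance of the linear projection shared by $A_n$ and $B_n$, and it avoids the covariance bookkeeping. The paper's route parallels the proof of Theorem~\ref{thm:general_clt}. There $A\neq B$, the gradient is no longer proportional to $(1,-1)$, and no such cancellation occurs.

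One step needs tightening. Your identity $E D_1^2=\sum_{a,b}p_ap_b(\mathbf 1\{a=b\}-p_a-p_b+B)^2$ is exactly the paper's expansion of $\sigma^2$, but the final inequality is under-justified: $p_y<1$ and $B<1$ alone do not give $2p_y<1+B$.

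You need $B\ge p_y^2$, which gives $1+B\ge 1+p_y^2\ge 2p_y$, with equality only if $p_y=1$. Alternatively, as in the paper, $2p_y=1+B$ would force $p_y>1/2$ for every supported $y$, which is impossible with two supported categories. With that line added, the ``alternatively \dots\ I would verify'' aside can be dropped.

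A harmless slip: your $\zeta_i$ is $1$-dependent, not merely $2$-dependent.
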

The detailed proof is provided in Appendix~\ref{app:proof_asymptotic_normality}.

Unlike the general asymptotic normality result discussed later, Theorem~\ref{thm:asymptotic-xin} requires neither smoothness of the conditional class probabilities nor continuity of the marginal distribution of $X$. The only role of the ordering by $X$ is to generate a random permutation of the labels, and under independence this permutation is independent of the labels themselves. Thus arbitrary real-valued $X$, including discrete and mixed distributions, is allowed once ties are broken independently at random.

Under independence, the statistic has an exact conditional finite sample negative bias on the event $\{B_n<1\}$. Let $N_j=\sum_{i=1}^n\mathbf 1\{Y_i=j\}$. Conditional on $(N_1,\ldots,N_k)$, the sorted labels are uniformly distributed over all permutations of the observed labels, and hence
\[
\begin{aligned}
E(A_n\mid N_1,\ldots,N_k)
&=\frac{\sum_{j=1}^kN_j(N_j-1)}{n(n-1)}\\
&=\frac{nB_n-1}{n-1}.
\end{aligned}
\]
Consequently,
\begin{align}
    E(\xi_n'\mid N_1,\ldots,N_k)=-\frac1{n-1}.
\end{align}
Thus, the negative mean occurs in finite sample.

Motivated by this identity, define the finite sample centered statistic
\begin{align}
    Z_n^{\mathrm{c}}
    :=
    \frac{\sqrt n\left\{\xi_n'+1/(n-1)\right\}}{\hat\kappa}.
\end{align}
The correction does not affect the first-order limit because
\[
    Z_n^{\mathrm{c}}-Z_n
    =\frac{\sqrt n}{(n-1)\hat\kappa}
    =o_p(1).
\]
Hence $Z_n^{\mathrm{c}}\xrightarrow{d}N(0,1)$ under independence.

The asymptotic distribution derived in Theorem~\ref{thm:asymptotic-xin} can therefore be employed to construct a finite sample centered test of independence.
Since $\xi_n'$ tends to be larger under dependence, it is natural to consider a one-sided test. We consider the hypothesis testing problem:
\[
H_0: X \perp Y \quad \text{vs.} \quad H_1: X \not\perp Y.
\]
At significance level $\alpha \in (0,1)$, we propose the test
\begin{align}
\phi_n \;=\;
\begin{cases}
1, & \text{if } Z_n^{\mathrm{c}} > z_{1-\alpha}, \\[0.5ex]
0, & \text{otherwise},
\end{cases}
\end{align}
where $\phi_n=1$ indicates rejection of $H_0$.

The validity of the proposed test is guaranteed by the following corollaries, which establish the asymptotic size control and the consistency of the test.

\begin{corollary}[Asymptotic size]
\label{cor:type1_error}
Under the null hypothesis $H_0: X \perp Y$, the test $\phi_n$ controls the type I error asymptotically at level $\alpha$. That is,
\[
\lim_{n\to\infty} P_{H_0}(\phi_n = 1) = \alpha.
\]
\end{corollary}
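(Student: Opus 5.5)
The plan is to derive the corollary directly from Theorem~\ref{thm:asymptotic-xin} and the centering remark that follows it. Under $H_0$, Theorem~\ref{thm:asymptotic-xin} gives $Z_n\xrightarrow{d}N(0,1)$. We have shown $Z_n^{\mathrm c}-Z_n=\sqrt n/\{(n-1)\hat\kappa\}$. Since $\hat\kappa^2\xrightarrow{p}\kappa^2>0$, this difference is $o_p(1)$. Slutsky's lemma then gives $Z_n^{\mathrm c}\xrightarrow{d}N(0,1)$.

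Next, note that $P_{H_0}(\phi_n=1)=P_{H_0}(Z_n^{\mathrm c}>z_{1-\alpha})=1-P_{H_0}(Z_n^{\mathrm c}\le z_{1-\alpha})$. The standard normal distribution function $\Phi$ is continuous everywhere, so every point is a continuity point of the limit. By the definition of convergence in distribution (the Portmanteau theorem), $P_{H_0}(Z_n^{\mathrm c}\le z_{1-\alpha})\to\Phi(z_{1-\alpha})=1-\alpha$. Hence $P_{H_0}(\phi_n=1)\to\alpha$.

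The only step needing care is well-definedness. On the event $\{B_n=1\}$ (all labels equal), or whenever $\hat\sigma^2\le 0$, the quantities $\hat\kappa$ and $Z_n^{\mathrm c}$ are undefined or degenerate. We fix any convention for $\phi_n$ on this event. Because $Y$ is not almost surely constant, $B<1$. Moreover, $\sigma^2>0$ by Theorem~\ref{thm:asymptotic-xin}, since $\kappa^2>0$. The strong law gives $\hat p_j\to p_j$ almost surely, so $B_n\to B$ and $\hat\sigma^2\to\sigma^2$ almost surely. The bad event therefore has probability tending to zero and changes the rejection probability by at most $o(1)$. This is the main, though mild, obstacle; everything else is Slutsky's lemma and the continuity of $\Phi$.
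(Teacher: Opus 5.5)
Your proposal is correct and follows the paper's proof. Both combine Theorem~\ref{thm:asymptotic-xin} with $Z_n^{\mathrm{c}}-Z_n=o_p(1)$ to get $Z_n^{\mathrm{c}}\xrightarrow{d}N(0,1)$, then use continuity of $\Phi$ at $z_{1-\alpha}$ to obtain the limit $\alpha$; your extra check that the degenerate event $\{B_n=1\}$, where $\hat\kappa=0$, has vanishing probability is a small point of care the paper leaves implicit.
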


\begin{corollary}[Consistency of the test]
\label{cor:power_consistency}
Suppose that the alternative holds such that $\xi' > 0$. Then the test $\phi_n$ is consistent, meaning that the power approaches 1 as the sample size increases:
\[
\lim_{n\to\infty} P_{H_1}(\phi_n = 1) = 1.
\]
\end{corollary}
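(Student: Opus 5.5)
The plan is to prove Corollary~\ref{cor:power_consistency} from the strong consistency in Theorem~\ref{thm:consistency-xin} together with a stochastic bound on the plug-in scale $\hat\kappa$. The idea is that, under the alternative, the numerator $\sqrt n\{\xi_n'+1/(n-1)\}$ grows like $\sqrt n\,\xi'$, while $\hat\kappa$ stays bounded. Note that $\hat\kappa^2$ is built only from $\hat p_1,\ldots,\hat p_k$, so its behaviour does not depend on whether $X$ and $Y$ are independent.

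First, by the strong law of large numbers, $\hat p_j\to p_j$ almost surely for each $j$. Hence $B_n\to B$ and $\hat\rho\to\rho$, and so $\hat\sigma^2\to\sigma^2=B(1+B)-2\rho$ almost surely. Since $Y$ is not almost surely constant, $B<1$, so $(1-B_n)^{-2}\to(1-B)^{-2}$ and $\hat\kappa^2\to\kappa^2$ almost surely. Here $\kappa^2\in(0,\infty)$ is the constant from Theorem~\ref{thm:asymptotic-xin}; its positivity is a statement about $(p_j)$ alone, so the same argument given there applies verbatim.

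One technical point needs care. On events where $\hat\sigma^2\le 0$ (for instance, when only one label is observed), $\hat\kappa$ is degenerate, so the test should be read with some convention there. These events have probability tending to zero, because $\hat\kappa^2\to\kappa^2>0$. I would handle this by restricting attention to the event $E_n=\{\hat\kappa^2\in(\kappa^2/2,2\kappa^2),\ B_n<1\}$, whose probability tends to one.

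Second, Theorem~\ref{thm:consistency-xin} gives $\xi_n'\to\xi'>0$ almost surely, so $\xi_n'+1/(n-1)\to\xi'$ almost surely as well. On $E_n$, for all large $n$ we then have
\[
Z_n^{\mathrm c}\;\ge\;\frac{\sqrt n\,(\xi'/2)}{\sqrt{2}\,\kappa}\;\longrightarrow\;\infty .
\]
It follows that $Z_n^{\mathrm c}\to\infty$ almost surely, and in particular in probability. Therefore
\[
P_{H_1}(\phi_n=1)=P\bigl(Z_n^{\mathrm c}>z_{1-\alpha}\bigr)\longrightarrow 1 .
\]

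The only step needing care is the uniform non-degeneracy of $\hat\kappa$: we must check that $\kappa^2>0$ holds purely as a function of the marginal probabilities, including under dependence. If the argument behind Theorem~\ref{thm:asymptotic-xin} is not transparently marginal, I would verify it directly. The quantity $\sigma^2$ is the asymptotic variance of the average of the $1$-dependent sequence $\mathbf 1\{Y_{i+1}=Y_i\}$ for an i.i.d.\ categorical sequence, which gives
\[
\sigma^2=B(1-B)+2(\rho-B^2)=B+B^2-2\rho .
\]
This is zero only in degenerate cases: by Cauchy--Schwarz, $\rho\le\sqrt{B\sum_j p_j^4}$, and a short argument shows $2\rho<B+B^2$ whenever $\max_j p_j<1$. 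Everything else is routine.
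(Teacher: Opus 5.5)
Your proposal is correct and takes essentially the paper's route: Theorem~\ref{thm:consistency-xin} gives consistency of $\xi_n'$, the purely marginal plug-in $\hat\kappa$ is consistent (which you state explicitly and the paper leaves implicit), and $\kappa>0$; together these force $Z_n^{\mathrm c}\to\infty$, which the paper phrases as $\{\xi_n'+1/(n-1)\}/\hat\kappa\xrightarrow{p}\xi'/\kappa>0$ via Slutsky. One slip in your backup check: $B(1-B)+2(\rho-B^2)=B+2\rho-3B^2$ is the asymptotic variance of $A_n$ alone (the paper's $\sigma^2+\sigma_B^2$), not $\sigma^2=B+B^2-2\rho$, which is the asymptotic variance of $A_n-B_n$; your positivity argument for $B+B^2-2\rho$ via $\rho\le B^{3/2}$ and AM--GM remains valid.
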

The detailed proofs of Corollaries~\ref{cor:type1_error} and~\ref{cor:power_consistency} are provided in Appendix~\ref{subsec:proof-corollaries}.

Although a permutation test can be used to assess the significance of $\xi_n'$ for small samples, it becomes computationally expensive as the sample size $n$ increases. The asymptotic test based on Theorem~\ref{thm:asymptotic-xin} offers a significant advantage by avoiding this computational cost.

\subsection{Asymptotic normality under dependence}
\label{subsec:general_clt}

While Theorem~\ref{thm:asymptotic-xin} establishes asymptotic normality under independence, we can also derive the distribution of $\xi_n'$ under general dependence. This result is crucial for constructing confidence intervals for the population coefficient $\xi'$.

Using the notation introduced in Section~\ref{sec:xiprime}, define
\begin{align}
\sigma^2_A &:= E\bigl[h(X) - 3h(X)^2 + 2\tau(X)\bigr] + \mathrm{Var}(h(X)), \label{eq:sigma_A}\\
\sigma^2_B &:= 4\bigl(\rho-B^2\bigr), \label{eq:sigma_B}\\
\sigma_{AB} &:= 4 E \biggl[\sum_{j=1}^k p_j g_j(X)^2 - h(X)m(X)\biggr] \nonumber \\
&\quad + 2E\Bigl[\bigl(h(X)-A\bigr)\bigl(m(X)-B\bigr)\Bigr]. \label{eq:sigma_AB}
\end{align}
Let
\[
\Sigma :=
\begin{pmatrix}
\sigma^2_A & \sigma_{AB} \\
\sigma_{AB} & \sigma^2_B
\end{pmatrix}.
\]

\begin{assumption}\label{asm:general_clt}
There exists a choice of the Borel versions $g_1,\ldots,g_k$ introduced in Section~\ref{sec:xiprime} and a deterministic sequence of bounded intervals
\[
    I_n=[\ell_n,u_n]\subset\mathbb R
\]
such that
\[
    nP(X\notin I_n)\to0
\]
and
\[
    \max_{1\le j\le k}V_{I_n}(g_j)=o(\sqrt n),
\]
where, for an interval $I\subset\mathbb R$,
\[
    V_I(g_j)
    :=
    \sup_{m\in\mathbb N}
    \sup_{x_0<\cdots<x_m,\,x_0,\ldots,x_m\in I}
    \sum_{\ell=1}^m
    |g_j(x_\ell)-g_j(x_{\ell-1})|.
\]
\end{assumption}

The first condition ensures, by a union bound, that all sample points lie in $I_n$ with probability tending to one. On this event, for every fixed integer $r\ge1$,
\[
    \sum_{i=1}^{n-r}
    |g_j(X_{(i+r)})-g_j(X_{(i)})|
    \le rV_{I_n}(g_j)
    =o(\sqrt n),
\]
so the second condition controls the cumulative oscillation of the conditional class probabilities at the $\sqrt n$ scale.

If $X$ is supported on a bounded interval and each $g_j$ has bounded variation on that interval, Assumption~\ref{asm:general_clt} is immediate. It also permits infinitely many discontinuities. For example, if $X\sim N(0,1)$ and
\[
    g_1(x)=\frac12+\frac14\operatorname{sgn}(\sin x),
    \qquad g_2(x)=1-g_1(x),
\]
then, with $I_n=[-2\sqrt{\log n},2\sqrt{\log n}]$, the tail condition holds and $V_{I_n}(g_j)=O(\sqrt{\log n})=o(\sqrt n)$.

\begin{theorem}[General asymptotic normality of $\xi_n'$]
\label{thm:general_clt}
Let $(X_i,Y_i)_{i=1}^n$ be i.i.d. samples from $(X,Y)$, where $X$ is a real-valued random variable and $Y$ is a categorical random variable taking a fixed number $k\geq2$ of categories. Suppose Assumption~\ref{asm:general_clt} holds and $Y$ is not almost surely constant. Then, as $n \to \infty$,
\[
\sqrt{n} (\xi_n' - \xi') \xrightarrow{d} N(0, \kappa^2_D),
\]
where
\[
\kappa^2_D = \nabla f(A,B)^T \Sigma \nabla f(A,B),
\]
with $f(a,b)=(a-b)/(1-b)$ and
\[
\nabla f(A,B)=
\begin{pmatrix}
\dfrac{1}{1-B}\\[0.8em]
\dfrac{A-1}{(1-B)^2}
\end{pmatrix}.
\]
\end{theorem}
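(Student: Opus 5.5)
The plan is to prove joint asymptotic normality of $\sqrt n\,(A_n-A,\;B_n-B)$ with covariance $\Sigma$, and then apply the delta method to $f(a,b)=(a-b)/(1-b)$, which is smooth near $(A,B)$ because $B<1$. The term $B_n-B=\sum_j(\hat p_j-p_j)(\hat p_j+p_j)$ is asymptotically linear, $\sqrt n(B_n-B)=n^{-1/2}\sum_i 2\{p_{Y_i}-B\}+o_p(1)$. Its variance is $4(\rho-B^2)=\sigma_B^2$, in agreement with \eqref{eq:sigma_B}. The work therefore lies in linearizing $A_n$.

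For $A_n$ we condition on $\mathcal F_X:=\sigma(X_1,\dots,X_n,\text{tie-breaking uniforms})$. Given $\mathcal F_X$, the labels $Y_{(i)}$ are independent with probabilities $g_j(X_{(i)})$. Hence $E[W_i\mid\mathcal F_X]=\sum_j g_j(X_{(i)})g_j(X_{(i+1)})$, and the identity $ab=\tfrac12(a^2+b^2)-\tfrac12(a-b)^2$ gives
\[
E[W_i\mid\mathcal F_X]=\tfrac12 h(X_{(i)})+\tfrac12 h(X_{(i+1)})-\tfrac12\sum_j\bigl(g_j(X_{(i+1)})-g_j(X_{(i)})\bigr)^2 .
\]
On the event that all $X_i\in I_n$, which has probability tending to one since $nP(X\notin I_n)\to0$, the summed squared increments are at most $\sum_j V_{I_n}(g_j)\cdot\max|{\rm incr}|\le \sum_jV_{I_n}(g_j)=o(\sqrt n)$, because $|g_j|\le1$. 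Consequently $\sqrt n\bigl(\frac1{n-1}\sum_iE[W_i\mid\mathcal F_X]-\frac1n\sum_i h(X_i)\bigr)=o_p(1)$. In the same way, every quantity of the form $\sum_i \phi(g(X_{(i)}))\psi(g(X_{(i\pm1)}))$ with Lipschitz $\phi,\psi$ can be replaced by $\sum_i\phi\psi(g(X_i))$ up to $o(\sqrt n)$; we will need this below.

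This yields the decomposition
\[
\sqrt n(A_n-A)=\underbrace{\tfrac{1}{\sqrt n}\textstyle\sum_i (W_i-E[W_i\mid\mathcal F_X])}_{M_n}+\underbrace{\tfrac1{\sqrt n}\textstyle\sum_i(h(X_i)-A)}_{L_n}+o_p(1).
\]
Given $\mathcal F_X$, $M_n$ is a sum of bounded, $1$-dependent, centered terms. Its conditional variance equals $\frac1n\sum_i\{\mathrm{Var}(W_i\mid\mathcal F_X)+2\,\mathrm{Cov}(W_i,W_{i+1}\mid\mathcal F_X)\}$. By the replacement principle this converges in probability to $E[h-h^2]+2E[\tau-h^2]=E[h-3h^2+2\tau]$; here $P(W_i=W_{i+1}=1\mid\cdot)=\sum_j g_j(X_{(i)})g_j(X_{(i+1)})g_j(X_{(i+2)})\approx\tau$. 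Since $\mathrm{Var}(L_n)=\mathrm{Var}(h(X))$, the total is $\sigma_A^2$, matching \eqref{eq:sigma_A}.

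For the joint limit, $B_n$ must be handled in the same framework. Split $2(p_{Y_i}-B)$ into $2(p_{Y_i}-m(X_i))+2(m(X_i)-B)$. The first piece is conditionally centered given $\mathcal F_X$ and joins $M_n$; the second is a function of $X_i$ and joins $L_n$. We then apply a conditional CLT for $1$-dependent bounded arrays (Lindeberg is trivial because the terms are bounded) to $(M_n,M_n^B)$ given $\mathcal F_X$, and the ordinary CLT to $(L_n,L_n^B)$. Because the conditional limit is a deterministic Gaussian, the two parts combine into asymptotically independent Gaussian limits; this follows by conditioning and characteristic functions. The cross covariance is computed in two pieces. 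The conditional part is $\mathrm{Cov}(W_i,2p_{Y_i}+2p_{Y_{i+1}}\mid\cdot)\approx 4[\sum_jp_jg_j^2-hm]$. The unconditional part is $2\,\mathrm{Cov}(h,m)$. Together they reproduce $\sigma_{AB}$ in \eqref{eq:sigma_AB}. The delta method then gives $\kappa_D^2=\nabla f^T\Sigma\nabla f$.

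The main obstacle is making the replacement principle rigorous when the $g_j$ are discontinuous and $X$ may have ties. Random tie-breaking places tied points adjacently, where they have equal $g_j$ values, so ties cause no difficulty. The bound $\sum_i|g_j(X_{(i+r)})-g_j(X_{(i)})|\le rV_{I_n}(g_j)$ holds on the good event. Combined with the Lipschitz property of products of bounded functions, it makes every adjacent-versus-diagonal substitution an $o(\sqrt n)$ error, which suffices for the variance limits and for the centering. Strong consistency (Theorem~\ref{thm:consistency-xin}) is not needed beyond this argument.
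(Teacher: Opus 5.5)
Your proposal is correct and follows essentially the same route as the paper. You condition on the $X$'s (with tie-breaking) and split $\sqrt n(A_n-A,\,B_n-B)$ into an i.i.d.\ $X$-measurable part plus a conditionally centered, $1$-dependent label part. You control every adjacent-versus-diagonal replacement by $\sum_i|g_j(X_{(i+r)})-g_j(X_{(i)})|\le rV_{I_n}(g_j)$, merge the two Gaussian limits through conditional characteristic functions, and finish with the delta method.

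The differences are minor. You obtain the centering from the exact identity $ab=\tfrac12(a^2+b^2)-\tfrac12(a-b)^2$ rather than the paper's Lipschitz bound. Where the paper uses the Baldi--Rinott dependency-graph Berry--Esseen bound, you invoke a Lindeberg/blocking CLT for $1$-dependent arrays. The paper's bound makes two points explicit: that the conditional convergence holds only in probability, and what happens in degenerate-variance directions. You would need a subsequence argument for the first and a simple $L^2$ bound for the second.
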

The detailed proof is deferred to Appendix~\ref{app:proof_general_clt}.

Since the asymptotic variance $\kappa^2_D$ depends on the unknown conditional probability functions $g_j(x)$, practical implementation requires a plug-in estimator. Throughout the asymptotic theory, $k$ denotes the fixed number of response categories. It is distinct from $k_{\mathrm{nn}}$, the neighborhood size used only in the k-nearest-neighbor plug-in variance estimator. We use the following nonparametric procedure.
\begin{enumerate}
    \item Estimate the conditional probabilities $\hat{g}_j(X_i)$ using a k-nearest-neighbor (k-NN) estimator. Specifically, for each sample $X_i$, let $\mathcal{N}_{k_{\mathrm{nn}}}(X_i)$ denote the set of indices of the $k_{\mathrm{nn}}$ nearest neighbors of $X_i$. The estimator is given by the local proportion of class $j$:
    \[
    \hat{g}_j(X_i) = \frac{1}{k_{\mathrm{nn}}} \sum_{\ell \in \mathcal{N}_{k_{\mathrm{nn}}}(X_i)} \mathbf{1}\{Y_\ell = j\}. \tag{*}
    \]
    For consistency of the plug-in estimator, $k_{\mathrm{nn}}$ is chosen so that $k_{\mathrm{nn}}\to\infty$ and $k_{\mathrm{nn}}/n\to0$. In the simulations, we use the default choice $k_{\mathrm{nn}}=\lfloor\sqrt n\rfloor$.

    \item Compute the empirical counterparts of the moment functions:
    \[
    \hat{h}(X_i) = \sum_{j=1}^k \hat{g}_j(X_i)^2, \qquad \hat{\tau}(X_i) = \sum_{j=1}^k \hat{g}_j(X_i)^3.
    \]

    \item Replace the population expectations $E[\cdot]$ in Eq.~\eqref{eq:sigma_A}--\eqref{eq:sigma_AB} with their moment estimators and substitute the sample proportions $\hat{p}_j$ for $p_j$.
\end{enumerate}
The above procedure yields a plug-in estimator $\hat{\kappa}_{D}^2$, which enables the construction of Wald-type confidence intervals for $\xi'$.

\begin{proposition}[Consistency of $\hat{\kappa}_{D}$]
\label{prop:variance_consistency}
Assume the conditions of Theorem~\ref{thm:general_clt}. In addition, suppose that the nonparametric estimators satisfy
\[
    \frac1n\sum_{i=1}^n |\hat g_j(X_i)-g_j(X_i)|\xrightarrow{p}0,
    \qquad j=1,\ldots,k. \tag{**}
\]
Then the plug-in variance estimator is consistent:
\[
\hat{\kappa}_{D}^2 \xrightarrow{p} \kappa_{D}^2.
\]
\end{proposition}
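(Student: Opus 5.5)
The plan is to write $\hat\kappa_D^2=\nabla f(A_n,B_n)^T\hat\Sigma\,\nabla f(A_n,B_n)$, where $\hat\Sigma$ collects the plug-in versions $\hat\sigma_A^2,\hat\sigma_B^2,\hat\sigma_{AB}$. By Theorem~\ref{thm:consistency-xin} and the strong law, $A_n\to A$ and $B_n\to B$ almost surely. Since $B<1$, the map $(a,b)\mapsto\nabla f(a,b)$ is continuous at $(A,B)$, so $\nabla f(A_n,B_n)\xrightarrow{p}\nabla f(A,B)$. By the continuous mapping theorem it then suffices to show that each entry of $\hat\Sigma$ converges in probability to the corresponding entry of $\Sigma$.

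Every quantity entering $\Sigma$ is an expectation $E[\Phi(g_1(X),\dots,g_k(X);p)]$, or a product and difference of such expectations, where $\Phi$ is a polynomial in $g_1,\dots,g_k,p_1,\dots,p_k$. This covers $E[h]$, $E[h^2]$, $E[\tau]$, $E[\sum_j p_jg_j^2]$, $E[hm]$, $E[m]$, $\mathrm{Var}(h)=E[h^2]-A^2$, and $E[(h-A)(m-B)]=E[hm]-AE[m]$, together with $\rho$ and $B$. The plug-in version replaces these by $\frac1n\sum_i\Phi(\hat g(X_i);\hat p)$. I would split each estimator as
\[
\frac1n\sum_i\Phi(\hat g(X_i);\hat p)-E\Phi(g(X);p)
=\Bigl[\frac1n\sum_i\Phi(\hat g(X_i);\hat p)-\frac1n\sum_i\Phi(g(X_i);\hat p)\Bigr]
+\Bigl[\frac1n\sum_i\Phi(g(X_i);\hat p)-E\Phi(g(X);p)\Bigr].
\]
The second bracket tends to zero almost surely. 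Indeed, $\Phi(g(X_i);\hat p)$ is a polynomial in $\hat p$ whose coefficients are bounded averages of functions of $g(X_i)$, and both $\hat p\to p$ and those averages converge by the strong law.

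The first bracket is the main step. The key observation is that all arguments lie in the compact cube $[0,1]^{2k}$: $\hat g_j\in[0,1]$ as class proportions, and $g_j,\hat p_j\in[0,1]$. A polynomial is Lipschitz on this cube with some constant $L_\Phi$, so
\[
\Bigl|\Phi(\hat g(X_i);\hat p)-\Phi(g(X_i);\hat p)\Bigr|\le L_\Phi\sum_{j=1}^k|\hat g_j(X_i)-g_j(X_i)|.
\]
Averaging over $i$ and invoking condition (**) shows the first bracket is $o_p(1)$. No uniform consistency of the $k$-NN estimator is needed, because only this $L^1$-type empirical average enters, and boundedness removes any moment issues.

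Finally, I would assemble $\hat\sigma_A^2$, $\hat\sigma_B^2=4(\hat\rho-B_n^2)$, and $\hat\sigma_{AB}$ as continuous functions of these consistent averages, including the centering terms $A$ and $B$, which may be estimated either by the plug-in averages or by $A_n,B_n$; both are consistent. Slutsky's lemma and the continuous mapping theorem then give $\hat\kappa_D^2\xrightarrow{p}\kappa_D^2$. The only point needing care is checking that every term in Eqs.~\eqref{eq:sigma_A}--\eqref{eq:sigma_AB} really has this bounded polynomial form, in particular the product terms involving $m$, so that the single Lipschitz bound covers all of them.
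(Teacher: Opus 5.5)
Your proposal is correct and follows essentially the same route as the paper. Both arguments reduce, via the continuous mapping theorem, to consistency of the plug-in moment averages, and then split each average into two parts: an estimation-error term bounded by a Lipschitz constant times $\frac1n\sum_i\sum_j|\hat g_j(X_i)-g_j(X_i)|$, which vanishes by condition~(**), and a law-of-large-numbers term. The paper works out only $E[h(X)]$ and says the rest is "similar"; your version is more explicit about substituting $\hat p_j$ for $p_j$, the centering constants, and the gradient $\nabla f(A_n,B_n)$.
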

The proof is provided in Appendix~\ref{app:proof_variance_consistency}.
Moreover, if $X$ has an absolutely continuous distribution and each $g_j$ is bounded and continuous, then it follows from \cite{Gyorfietal2002} that the k-NN estimator defined in~(*) satisfies~(**).

Based on the asymptotic normality and the consistent variance estimator, we construct an asymptotic $(1-\alpha)$ confidence interval for $\xi'$. Let $z_{\alpha/2}$ denote the $(1-\alpha/2)$-quantile of the standard normal distribution. We define the confidence interval $\mathcal{C}_n(1-\alpha)$ as
\begin{equation} \label{eq:ci_definition}
    \mathcal{C}_n(1-\alpha) = \left[ \xi_n' - z_{\alpha/2} \frac{\hat{\kappa}_D}{\sqrt{n}}, \quad \xi_n' + z_{\alpha/2} \frac{\hat{\kappa}_D}{\sqrt{n}} \right].
\end{equation}

\begin{corollary}[Asymptotic confidence interval]
\label{cor:confidence_interval}
Under the conditions of Theorem~\ref{thm:general_clt} and Proposition~\ref{prop:variance_consistency}, assume in addition that $\kappa_D^2>0$. Then the confidence interval $\mathcal{C}_n(1-\alpha)$ defined in Eq.~\eqref{eq:ci_definition} is asymptotically valid. That is, for any $\alpha \in (0,1)$,
\[
\lim_{n\to\infty} P\bigl(\xi' \in \mathcal{C}_n(1-\alpha)\bigr) = 1-\alpha.
\]
\end{corollary}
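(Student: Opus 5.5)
This follows from Theorem~\ref{thm:general_clt}, Proposition~\ref{prop:variance_consistency} and Slutsky's lemma. Define the studentized statistic
\[
T_n := \frac{\sqrt n\,(\xi_n'-\xi')}{\hat\kappa_D}.
\]
Whenever $\hat\kappa_D>0$, the coverage event $\{\xi'\in\mathcal C_n(1-\alpha)\}$ is exactly the event $\{|T_n|\le z_{\alpha/2}\}$. The plan is to show $T_n\xrightarrow{d}N(0,1)$. Continuity of the standard normal distribution function then gives
\[
P(|T_n|\le z_{\alpha/2})\to 2\Phi(z_{\alpha/2})-1=1-\alpha .
\]
Here $z_{\alpha/2}$ is the $(1-\alpha/2)$-quantile, as in the paper's convention.

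\textbf{Step 1.} By Theorem~\ref{thm:general_clt},
\[
\sqrt n(\xi_n'-\xi')\xrightarrow{d}N(0,\kappa_D^2).
\]

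\textbf{Step 2.} By Proposition~\ref{prop:variance_consistency}, $\hat\kappa_D^2\xrightarrow{p}\kappa_D^2$. Since $\kappa_D^2>0$ by assumption, the map $t\mapsto\sqrt{\max(t,0)}$ is continuous at $\kappa_D^2$. The continuous mapping theorem therefore gives $\hat\kappa_D\xrightarrow{p}\kappa_D>0$, and also $1/\hat\kappa_D\xrightarrow{p}1/\kappa_D$.

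\textbf{Step 3.} Two technical points need handling.
\begin{itemize}
\item The plug-in $\hat\kappa_D^2$ can be negative or zero in finite samples, so $\hat\kappa_D$ should be read as $\sqrt{\max(\hat\kappa_D^2,0)}$. The event $E_n=\{\hat\kappa_D^2>\kappa_D^2/2\}$ has probability tending to one by Step 2. On $E_n$ the equivalence between coverage and $\{|T_n|\le z_{\alpha/2}\}$ holds exactly, and the complement of $E_n$ contributes $o(1)$ to all probabilities.
\item The same argument handles the denominator $1-B_n$ in $\xi_n'$. Since $Y$ is not almost surely constant, $B<1$, and $B_n\to B$ almost surely, so $1-B_n>0$ with probability tending to one.
\end{itemize}

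\textbf{Step 4.} Slutsky's lemma applied to Steps 1 and 2 gives $T_n\xrightarrow{d}N(0,1)$. Hence
\[
P\bigl(\xi'\in\mathcal C_n(1-\alpha)\bigr)=P(|T_n|\le z_{\alpha/2})+o(1)\to1-\alpha .
\]

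\textbf{Main obstacle.} There is no substantive difficulty, since all the heavy lifting is done by Theorem~\ref{thm:general_clt} and Proposition~\ref{prop:variance_consistency}. The only points needing care are the bookkeeping in Step 3: the positivity of $\hat\kappa_D$ and the event on which the coverage equivalence holds. The assumption $\kappa_D^2>0$ is essential here. Without it, the ratio $T_n$ would be degenerate, and the coverage could converge to a limit other than $1-\alpha$.
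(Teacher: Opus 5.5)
Your proof is correct and takes essentially the same route as the paper. You studentize $\sqrt n(\xi_n'-\xi')$ by $\hat\kappa_D$, combine Theorem~\ref{thm:general_clt}, Proposition~\ref{prop:variance_consistency} and Slutsky's theorem to get $T_n\xrightarrow{d}N(0,1)$, and identify the coverage event with $\{|T_n|\le z_{\alpha/2}\}$. Your extra bookkeeping on the event $\{\hat\kappa_D^2>\kappa_D^2/2\}$, where $\hat\kappa_D>0$ and this identification is exact, is a harmless refinement that the paper leaves implicit.
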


The proof is provided in Appendix~\ref{app:proof_confidence_interval}.

\begin{remark}[Practical implementation]
Since the population coefficient $\xi'$ is bounded within $[0,1]$, the asymptotic confidence interval $\mathcal{C}_n(1-\alpha)$ constructed above may occasionally exceed these bounds in finite samples. In practice, the interval should be truncated to lie within $[0,1]$:
\[
\tilde{\mathcal{C}}_n(1-\alpha) = \mathcal{C}_n(1-\alpha) \cap [0, 1].
\]
If a bootstrap-based confidence interval is desired, the standard bootstrap with replacement is not suitable for $\xi_n'$, because duplicate observations create artificial adjacent matches after sorting. In our numerical experiments, the $m$-out-of-$n$ procedure, implemented by sampling $m$ observations without replacement, provides a more reasonable resampling alternative. A similar procedure is used for $\xi_n$ in \cite{DetteKroll2025}. See Section~\ref{subsec:bootstrap-comparison} for a numerical comparison.
\end{remark}

\section{Simulation studies}
\label{sec:sim}

We report comprehensive simulation results to evaluate the finite-sample performance of $\xi_n'$. Our simulation study is organized into four parts. First, we examine the asymptotic normality and calibration, confirming that the standardized statistic follows the standard normal distribution under both independence ($H_0$) and dependence ($H_1$), for finite but large sample sizes. We also validate the empirical coverage probabilities of our confidence intervals, ensuring that they align with the nominal levels in finite samples. Second, we compare the proposed plug-in confidence intervals with the standard bootstrap and the $m$-out-of-$n$ bootstrap. Third, we demonstrate coding invariance by contrasting the stability of $\xi_n'$ under random label permutations, compared with Chatterjee's $\xi_n$. Finally, we numerically compare the performance of the $\xi_n'$-based independence test against established competitors such as the independence tests based on Distance Covariance (dCov) and the Hilbert-Schmidt Independence Criterion (HSIC).

\subsection{Simulation setup: Block design model}
To control the dependence strength, we employ a piecewise-constant block design model. We consider $k=6$ categories for $Y$, and draw $X \sim \mathrm{Unif}(0,1)$ and partition the unit interval into 6 equal-sized blocks. For each block $b \in \{1, \dots, 6\}$, we assign a preferred class label $j_b = b$. Conditional on $X$ falling in block $b$, the label $Y$ is sampled from:
\begin{align}
\Pr(Y=j_b \mid X \in \text{block } b) &= \theta + \frac{1-\theta}{6}, \nonumber \\
\Pr(Y = j \mid X \in \text{block } b) &= \frac{1-\theta}{6}, \qquad j \neq j_b,
\label{setting:block}
\end{align}
where $\theta \in [0,1]$ controls the signal strength. 
Here, $\theta=0$ corresponds to independence between $X$ and $Y$ (uniform noise), while $\theta=1$ yields a deterministic functional relationship.
Figure~\ref{fig:xy-panels} visualizes the data distribution. As $\theta$ increases, the concentration of labels within specific $X$-intervals becomes pronounced.

\begin{figure}[tp]
\centering
\includegraphics[width=\linewidth]{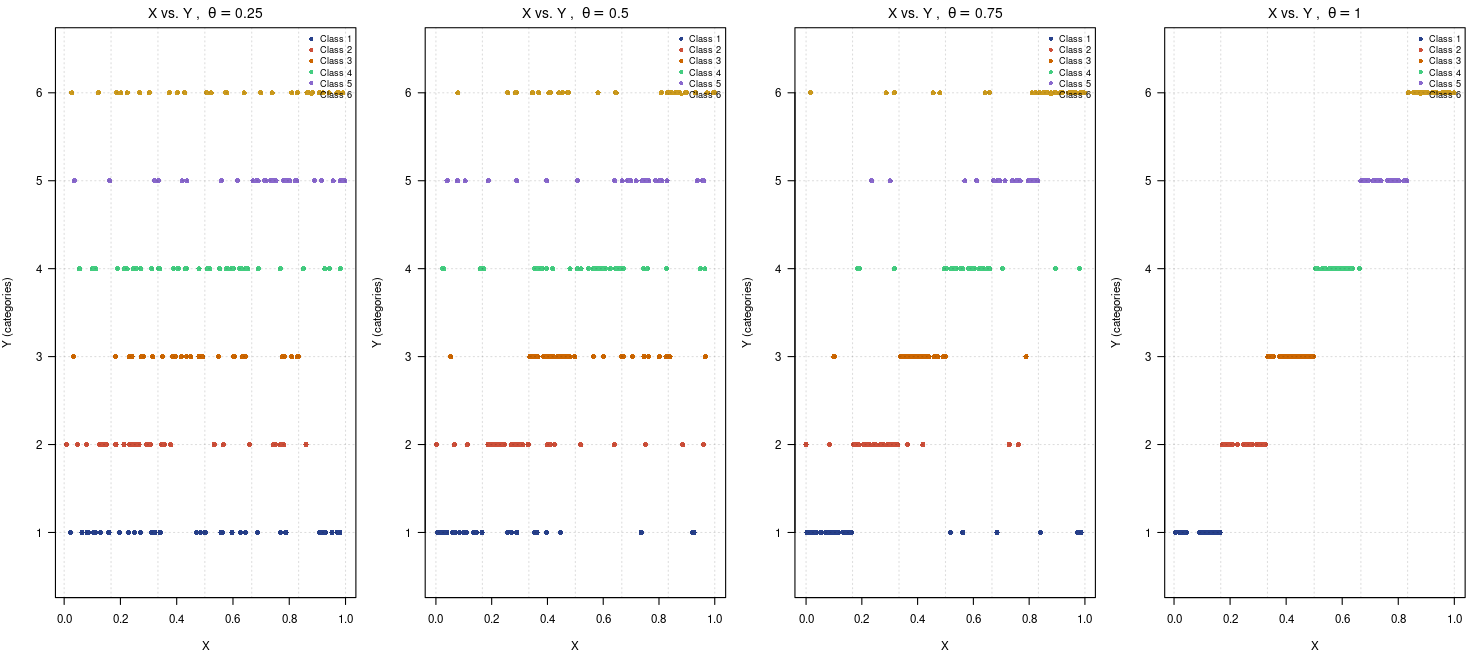}
\caption{Scatterplots of $(X,Y)$ under the block design model. As the signal $\theta$ increases from 0 to 1, the label homogeneity within each block becomes more noticeable, transitioning from random noise to a deterministic step function.}
\label{fig:xy-panels}
\end{figure}

\subsection{Asymptotic normality and validity of confidence intervals}
\subsubsection{Asymptotic normality under independence}
\label{subsec:nullcalib}

We first validate Theorem~\ref{thm:asymptotic-xin} by examining the null distribution under independence between $X$ and $Y$.
We generated data with $\theta=0$ and $k=6$ for varying sample sizes $n \in \{50, 100, 200, 400\}$. For each replicate, we computed both the original standardized statistic
\[
Z_n=\frac{\sqrt n\,\xi_n'}{\widehat\kappa}
\]
and its finite-sample centered version
\[
Z_n^{\mathrm{c}}
=
\frac{\sqrt n\left\{\xi_n'+1/(n-1)\right\}}{\widehat\kappa},
\]
where $\widehat\kappa^2$ is the plug-in variance estimator defined in Theorem~\ref{thm:asymptotic-xin}.

Figure~\ref{fig:null-panels} displays the histograms of $Z_n^{\mathrm{c}}$ based on $R_{\mathrm{MC}}=2000$ Monte Carlo replicates. The solid blue curve is the standard normal density, and the red dashed curve is the empirical Gaussian fit. Table~\ref{tab:null-variance} compares the original and centered statistics. The original $Z_n$ has a visible negative mean, whereas centering moves the empirical mean close to zero and brings the one-sided rejection rate closer to the nominal level. The empirical variance of $Z_n^{\mathrm{c}}$ approaches one, and the Shapiro--Wilk results indicate improving normal approximation as $n$ increases. These results support the use of the centered one-sided Wald test for independence.

\begin{figure}[tp]
\centering
\includegraphics[width=\linewidth]{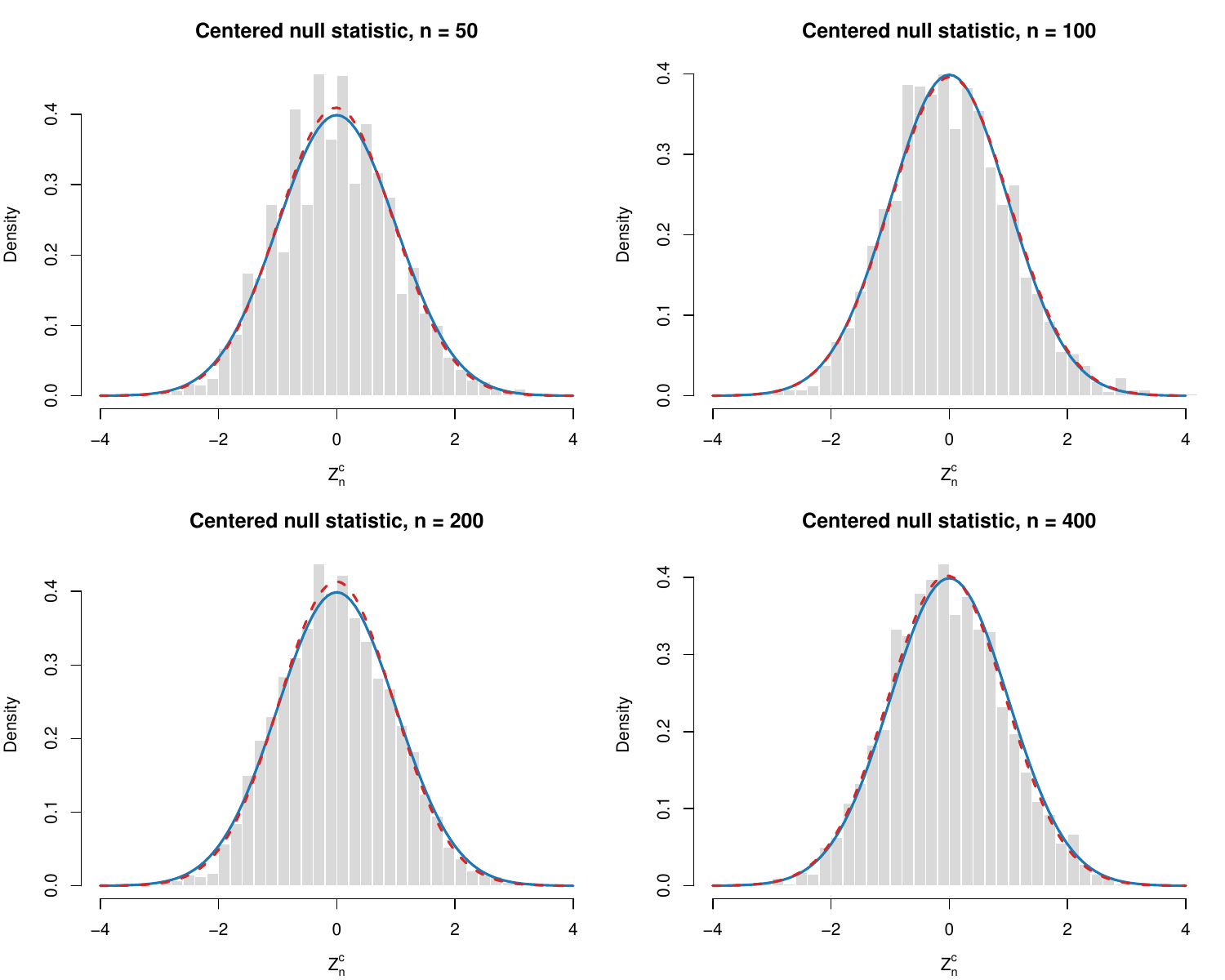}
\caption{Null calibration of $Z_n^{\mathrm{c}}=\sqrt n\{\xi_n'+1/(n-1)\}/\widehat\kappa$ for $k=6$. Histograms are overlaid with the standard normal density (solid blue) and the empirical Gaussian fit (red dashed).}
\label{fig:null-panels}
\end{figure}

\begin{table}[t]
\centering
\small
\caption{Null calibration under $H_0$ based on $R_{\mathrm{MC}}=2000$ Monte Carlo replicates. Reported are the empirical means of the original and centered standardized statistics, the empirical variance of $Z_n^{\mathrm{c}}$, their one-sided rejection rates at level $0.05$, and the Shapiro--Wilk test $p$-value for $Z_n^{\mathrm{c}}$.}
\label{tab:null-variance}
\begin{tabular}{rcccccc}
\toprule
$n$ & Mean $Z_n$ & Mean $Z_n^{\mathrm{c}}$ & Var.\ $Z_n^{\mathrm{c}}$ & Rej.\ $Z_n$ & Rej.\ $Z_n^{\mathrm{c}}$ & Shapiro $p$ \\
\midrule
 50 & $-0.3234$ & $-0.0129$ & 0.9494 & 0.024 & 0.046 & $6.02\times 10^{-3}$ \\
100 & $-0.2067$ & $ 0.0147$ & 1.0135 & 0.038 & 0.055 & $2.48\times 10^{-4}$ \\
200 & $-0.1639$ & $-0.0066$ & 0.9300 & 0.031 & 0.045 & $1.44\times 10^{-2}$ \\
400 & $-0.1513$ & $-0.0398$ & 0.9835 & 0.041 & 0.050 & $3.98\times 10^{-1}$ \\
\bottomrule
\end{tabular}
\end{table}

\subsubsection{Asymptotic normality under dependence}
\label{subsec:general_normality}

Next, we verify the asymptotic normality under dependence (Theorem~\ref{thm:general_clt}). 
We computed the standardized statistic $Z = \sqrt{n}(\xi_n' - \xi') / \hat{\kappa}_{D}$ using the consistent variance estimator $\hat{\kappa}_{D}^2$ described in Section~\ref{subsec:general_clt}.

We simulated data under the block design model with a moderate signal ($\theta=0.5$) and sample size $n=2000$. Figure~\ref{fig:h1-normality} shows the histogram of the standardized statistic across 1000 replicates. The distribution aligns well with the standard normal density (red dashed curve).

\begin{figure}[tp]
\centering
\includegraphics[width=\linewidth]{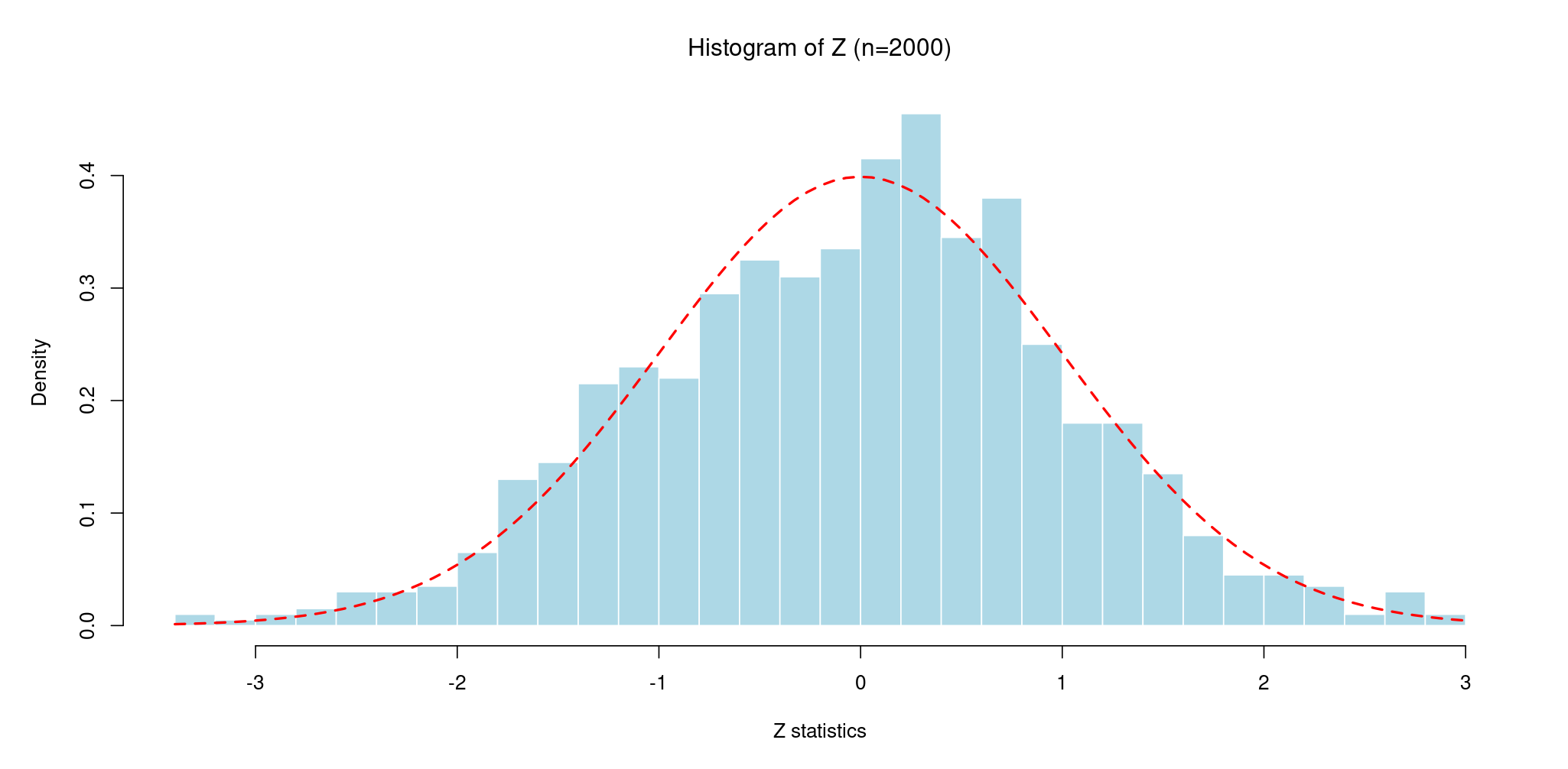} 
\caption{Distribution of the standardized statistic under dependence ($\theta=0.5, n=2000$). The histogram represents the Monte Carlo samples of $\sqrt{n}(\xi_n' - \xi')/\hat{\kappa}_{D}$, and the red dashed curve is the standard normal density $N(0,1)$.}
\label{fig:h1-normality}
\end{figure}

Table~\ref{tab:h1-stats} summarizes the empirical characteristics of the standardized statistic $Z$. The empirical mean and standard deviation are remarkably close to $0$ and $1$, respectively. This confirms that the proposed variance estimator $\hat{\kappa}_{D}^2$ is consistent even under dependence, allowing for the construction of valid confidence intervals for $\xi'$.

\begin{table}[t]
\centering
\small
\caption{Summary statistics of the standardized statistic $Z = \sqrt{n}(\xi_n' - \xi')/\hat{\kappa}_{D}$ under dependence ($\theta=0.5, n=2000$) based on 1000 replicates.}
\label{tab:h1-stats}
\begin{tabular}{ccccc}
\toprule
Mean & Median & SD & Min & Max \\
\midrule
-0.021 & 0.059 & 1.031 & -3.376 & 2.961  \\
\bottomrule
\end{tabular}
\end{table}

The finite-sample slight downward bias shown in Table~\ref{tab:h1-stats} is not universal but an artifact of this block-design experiment. To see this, consider the adjacent-match term $A_n$. Write $g_i=g(X_{(i)})$, where $g(x)=(g_1(x),\ldots,g_k(x))$, and let $q_i=\lVert g_i\rVert^2$. Conditional on the ordered $X$-values,
\[
\begin{aligned}
E(A_n\mid X_1,\ldots,X_n)-\frac1n\sum_{i=1}^nq_i
&=\frac{n^{-1}\sum_{i=1}^nq_i-(q_1+q_n)/2}{n-1}\\
&\quad-\frac1{2(n-1)}\sum_{i=1}^{n-1}\lVert g_i-g_{i+1}\rVert^2.
\end{aligned}
\]
In general, the first term need not have a fixed sign, so this finite-sample bias in the adjacent-match term does not have a fixed sign. In the block design model, however, the conditional probability vectors in different blocks are permutations of the same vector, and hence $q_1=\cdots=q_n$. The first term is then zero, while the second is non-positive. Thus, $A_n$ has a negative bias in this design.

\subsubsection{Empirical coverage of confidence intervals}
\label{subsec:ci_coverage}

A distinct advantage of the block design model defined in Eq.~\eqref{setting:block} is that the population quantity $\xi'$ admits a closed-form expression for any given $\theta$. This analytical tractability allows us to evaluate the finite-sample validity of our proposed confidence intervals. 
We validated whether the 95\% confidence intervals constructed using the asymptotic variance estimator $\widehat{\kappa}_D^2$ correctly contain the true $\xi'$.
Table~\ref{tab:coverage_sim} summarizes the results based on 1000 replicates. We considered a range of signal strengths $\theta \in \{0.1, 0.3, 0.5, 0.7, 0.9\}$ and sample sizes $n \in \{200, 400, 800, 1600\}$.
The results demonstrate that the proposed method achieves a coverage close to the nominal 95\% level as $n$ increases.

\begin{table}[t]
\centering
\small
\caption{Empirical coverage probabilities of 95\% confidence intervals for $\xi'$ under the block design model (1000 replicates).}
\label{tab:coverage_sim}
\begin{tabular}{cccccc}
\toprule
 & \multicolumn{5}{c}{Signal Strength ($\theta$)} \\
\cmidrule(lr){2-6}
Sample Size ($n$) & 0.1 & 0.3 & 0.5 & 0.7 & 0.9 \\
\midrule
200  & 0.970 & 0.955 & 0.946 & 0.918 & 0.929 \\
400  & 0.964 & 0.960 & 0.931 & 0.927 & 0.941 \\
800  & 0.964 & 0.962 & 0.940 & 0.944 & 0.952 \\
1600 & 0.963 & 0.946 & 0.949 & 0.951 & 0.952 \\
\bottomrule
\end{tabular}
\end{table}

Figure~\ref{fig:ci_zoom} provides a focused view of the finite-sample performance for $n=500$ within the interval $\theta \in [0.20, 0.25]$.
The x-axis represents the signal strength $\theta$, and the y-axis denotes the value of the statistic. The red line indicates the true population value $\xi'$, while the solid blue line tracks the sample estimates $\xi_n'$, obtained from one sample of size $n=500$ for each value of $\theta$. The shaded blue region represents the 95\% pointwise confidence intervals.

As illustrated, the lower limit of the confidence interval (dotted blue line) crosses zero at $\theta \approx 0.233$. At this threshold, the population association is merely $\xi' \approx 0.054$. This indicates that our confidence interval is precise enough to identify meaningful dependence even when the underlying signal is very weak.

\begin{figure}[tp]
\centering
\includegraphics[width=\linewidth]{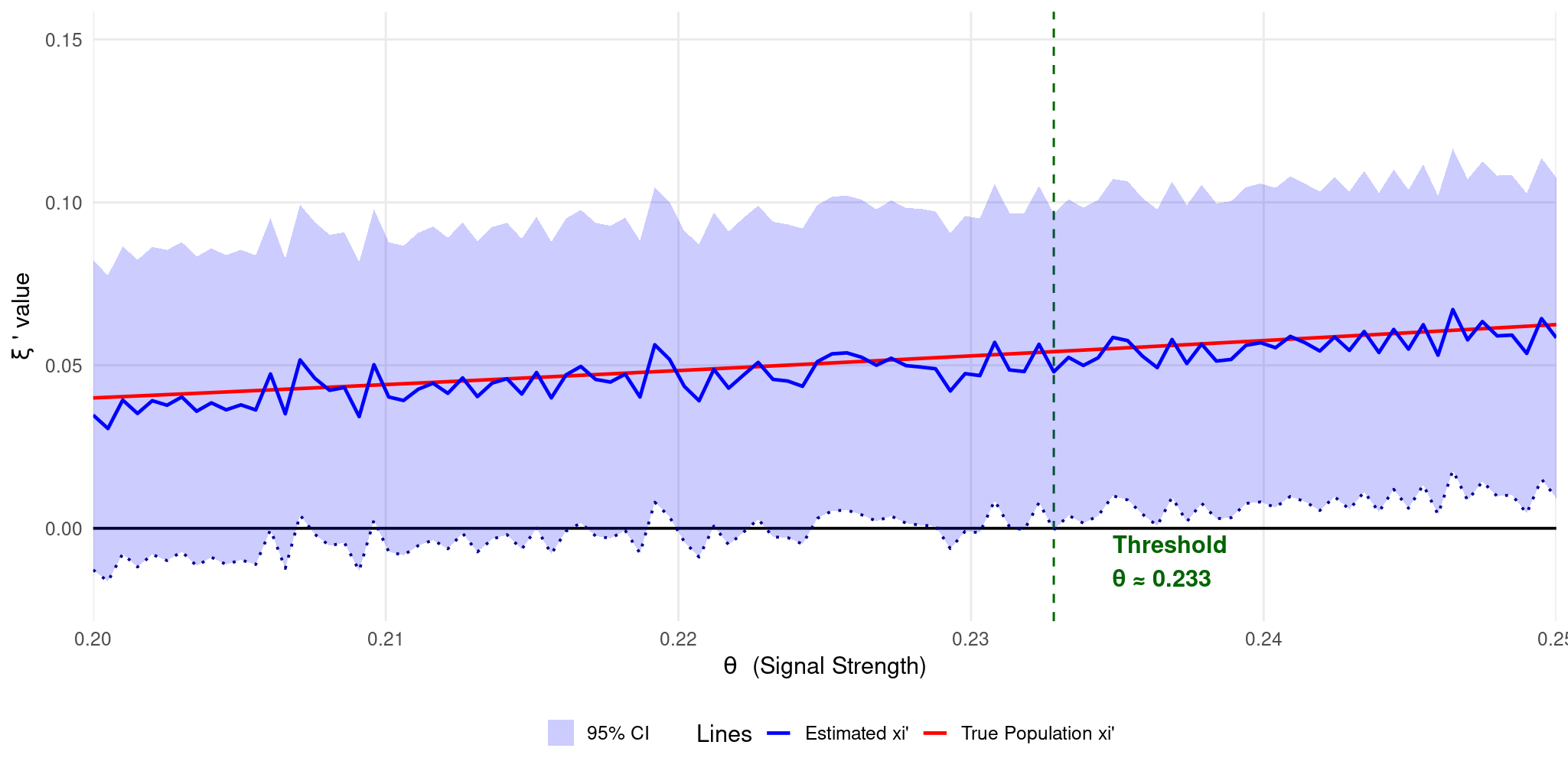}
\caption{Finite-sample performance of $\xi_n'$ around the detection threshold ($n=500$). The plot zooms in on the interval $\theta \in [0.20, 0.25]$. The lower confidence bound (dotted blue line) crosses the zero at $\theta \approx 0.233$, corresponding to a true association of $\xi' \approx 0.054$.}
\label{fig:ci_zoom}
\end{figure}

\subsubsection{Comparison with bootstrap confidence intervals}
\label{subsec:bootstrap-comparison}

We further compare three methods for constructing confidence intervals for $\xi'$ under dependence:
\begin{enumerate}[label=(\roman*)]
\item the proposed plug-in variance estimator $\hat\kappa_D$;
\item an $m$-out-of-$n$ bootstrap implemented by drawing $m_n=\lfloor n^\gamma\rfloor$ observations without replacement, equivalently a subsampling procedure;
\item the standard bootstrap with replacement.
\end{enumerate}
The $m$-out-of-$n$ bootstrap has been studied for inference on Chatterjee's rank correlation, whose rank-adjacency structure is closely related to that of $\xi_n'$ \citep{DetteKroll2025}. In that setting, the standard bootstrap is shown to fail, which motivates the use of an $m$-out-of-$n$ resampling scheme.

All methods in this subsection are evaluated under the six-category block design model in Eq.~\eqref{setting:block}, so the bootstrap comparison uses the same simulation setting as the rest of this section. In this model, the marginal class probabilities are $p_j=1/6$, and the population coefficient has the closed form $\xi'=\theta^2$.
The simulation was run with $R_{\mathrm{MC}}=1000$ Monte Carlo repetitions and $R_{\mathrm{boot}}=999$ bootstrap repetitions. Thus, for a nominal coverage probability of $0.95$, the Monte Carlo standard error is approximately
\[
    \sqrt{\frac{0.95(1-0.95)}{1000}}\approx 0.0069.
\]

For the standard bootstrap, we estimate the asymptotic variance by
\[
\widehat v_{\mathrm{boot}}
=
n\,\operatorname{Var}^*(\xi_n^{\prime *}),
\]
whereas the $m$-out-of-$n$ method uses
\[
\widehat v_{m}
=
m_n\,\operatorname{Var}^*(\xi_{m_n}^{\prime *}).
\]
Both resampling intervals are the Wald intervals
\[
\xi_n'\pm z_{\alpha/2}\sqrt{\widehat v/n},
\]
with the corresponding variance estimate $\widehat v$.

The standard bootstrap is structurally problematic because $\xi'$ is based on adjacent label matches after sorting by $X$. In an ordinary $n$-out-of-$n$ bootstrap sample, the proportion of distinct original observations converges to $1-e^{-1}$, leaving a nonvanishing fraction of repeated copies. After sorting by $X$, these copies become adjacent and share the same label, creating artificial adjacent matches whose effect does not vanish as $n$ increases. This distorts the bootstrap variance estimate. A related failure of the standard bootstrap has been established for Chatterjee's rank correlation, which has a closely related adjacency-based structure \citep{lin_failure_2024}.

Table~\ref{tab:bootstrap-method-comparison} summarizes the results for the representative dependence setting $\theta=0.75$. Here $k_{\mathrm{nn}}$ denotes the number of neighbors used in the plug-in estimator, and $m_n=\lfloor n^\gamma\rfloor$ for the $m$-out-of-$n$ procedure. For this comparison, we use $k_{\mathrm{nn}}=\lfloor\sqrt n\rfloor$ for the plug-in estimator and $m_n=\lfloor n^{0.65}\rfloor$ for the $m$-out-of-$n$ procedure.
Because \citet{DetteKroll2025} consider several fixed and data-adaptive choices of $\gamma$, we examined
\[
    \gamma\in\{0.50,0.65,0.75\}.
\]
The choices $0.50$ and $0.75$ are among those considered by \citet{DetteKroll2025}, and $0.65$ is an intermediate value. We use $\gamma=0.65$, as it yielded the best performance in our setting. This choice is empirical rather than theoretically justified, and the results should therefore be interpreted as a finite-sample comparison rather than a theoretical validation.
Here rRMSE denotes the relative root mean squared error of the variance estimate, defined as
\[
    \mathrm{rRMSE}
    =
    \frac{
    \sqrt{
    R_{\mathrm{MC}}^{-1}
    \sum_{r=1}^{R_{\mathrm{MC}}}
    \left(\widehat v_r-\kappa_D^2\right)^2
    }
    }{
    \kappa_D^2
    },
\]
where $\widehat v_r$ is the variance estimate from the $r$-th Monte Carlo repetition. Under the block design model, the population asymptotic variance also has the closed form
\[
\kappa_D^2
=
\frac{1+6\theta^2+8\theta^3-15\theta^4}{5},
\]
which we use as the target variance in the rRMSE calculation.

\begin{table}[htbp]
\centering
\small
\caption{Comparison of variance and confidence interval methods under the block design model for $\theta=0.75$ ($\xi'=0.5625$). Table reports coverage, average confidence interval length, and relative RMSE of the variance estimate.}
\label{tab:bootstrap-method-comparison}
\begin{tabular}{clccc}
\toprule
$n$ & Method & Coverage & Avg. length & rRMSE \\
\midrule
200 & Plug-in, $k_{\mathrm{nn}}=\lfloor\sqrt n\rfloor$ & 0.929 & 0.199 & 0.143 \\
200 & $m$-out-of-$n$, $m_n=\lfloor n^{0.65}\rfloor$ & 0.893 & 0.181 & 0.291 \\
200 & Standard bootstrap & 0.685 & 0.114 & 0.718 \\
\midrule
400 & Plug-in, $k_{\mathrm{nn}}=\lfloor\sqrt n\rfloor$ & 0.932 & 0.144 & 0.109 \\
400 & $m$-out-of-$n$, $m_n=\lfloor n^{0.65}\rfloor$ & 0.918 & 0.136 & 0.207 \\
400 & Standard bootstrap & 0.686 & 0.081 & 0.716 \\
\midrule
800 & Plug-in, $k_{\mathrm{nn}}=\lfloor\sqrt n\rfloor$ & 0.932 & 0.103 & 0.081 \\
800 & $m$-out-of-$n$, $m_n=\lfloor n^{0.65}\rfloor$ & 0.925 & 0.099 & 0.153 \\
800 & Standard bootstrap & 0.711 & 0.057 & 0.714 \\
\bottomrule
\end{tabular}
\end{table}

In this block design experiment, the plug-in estimator performs best overall, providing coverage closest to the nominal level and the lowest rRMSE across sample sizes. The $m$-out-of-$n$ procedure improves as the sample size increases and clearly outperforms the standard bootstrap, although it remains less accurate than the plug-in estimator in terms of rRMSE. 
By contrast, the standard bootstrap exhibits undercoverage and much larger rRMSE. Its shorter intervals reflect an underestimation of variance rather than improved performance. Small coverage differences should be interpreted in view of Monte Carlo variability.

\begin{table}[htbp]
\centering
\small
\caption{Coverage by dependence strength under the block design model for $n=800$.}
\label{tab:bootstrap-coverage-strength}
\begin{tabular}{ccccc}
\toprule
$\theta$ & $\xi'$ & Plug-in & $m$-out-of-$n$ & Standard bootstrap \\
\midrule
0.25 & 0.0625 & 0.957 & 0.944 & 0.884 \\
0.50 & 0.2500 & 0.949 & 0.942 & 0.793 \\
0.75 & 0.5625 & 0.932 & 0.925 & 0.711 \\
\bottomrule
\end{tabular}
\end{table}
Table~\ref{tab:bootstrap-coverage-strength} compares the three methods across dependence strengths at $n=800$. The plug-in method performs best overall, with coverage remaining closest to the nominal level and the most accurate variance estimation. The $m$-out-of-$n$ procedure follows closely, supporting its use as a viable empirical alternative. By contrast, the standard bootstrap fails substantially as the signal becomes stronger.

\subsection{Stability under label permutations}
\label{subsec:perm-stability}

A critical limitation of applying integer-based rank correlations (such as Chatterjee's $\xi_n$) to categorical data is their sensitivity to the arbitrary mapping of nominal categories to integers. To demonstrate this, we generated a single dataset using the block design ($\theta=0.5$, $n=100$, $k=6$) and computed both $\xi_n'$ and $\xi_n$ for all possible $6! = 720$ permutations of the class labels (See Eq.~\eqref{eq:chatterjee-xin} for $\xi_n$).
Figure~\ref{fig:xi-perm-hist} illustrates the results. 

The proposed $\xi_n'$ is coding-invariant, yielding a constant value of $0.16$ (pink solid line) regardless of the labeling. 

In contrast, $\xi_n$ exhibits significant variability across permutations, with values ranging from approximately $0.05$ to $0.31$ (histogram). This structural failure stems from defining rank distances on unordered data. Whether groups are separated by means or overlap due to variance, the estimator penalizes transitions based on the numerical distance between category codes. The score is arbitrarily maximized when adjacent groups in the $X$-space are assigned adjacent integer codes, and minimized otherwise.

This sensitivity extends beyond mere numerical fluctuations, and affects the validity of hypothesis tests. 
Consider testing the independence between continuous $X$ and categorical $Y$, utilizing either $\xi_n$ or $\xi_n'$. When $\xi_n$ is calculated based on the standard integer coding $\{1, \dots, 6\}$ (blue solid line), the observed $\xi_n$ fails to exceed its corresponding critical value (blue dashed line). 
Consequently, the test fails to reject the null hypothesis despite the presence of dependence (a Type II error).
In contrast, the observed $\xi_n'$ clearly exceeds its critical value (pink dashed line), demonstrating that the proposed method reliably rejects $H_0$ and avoids the pitfalls of arbitrary encoding.
\begin{figure}[tp]
\centering
\includegraphics[width=\linewidth]{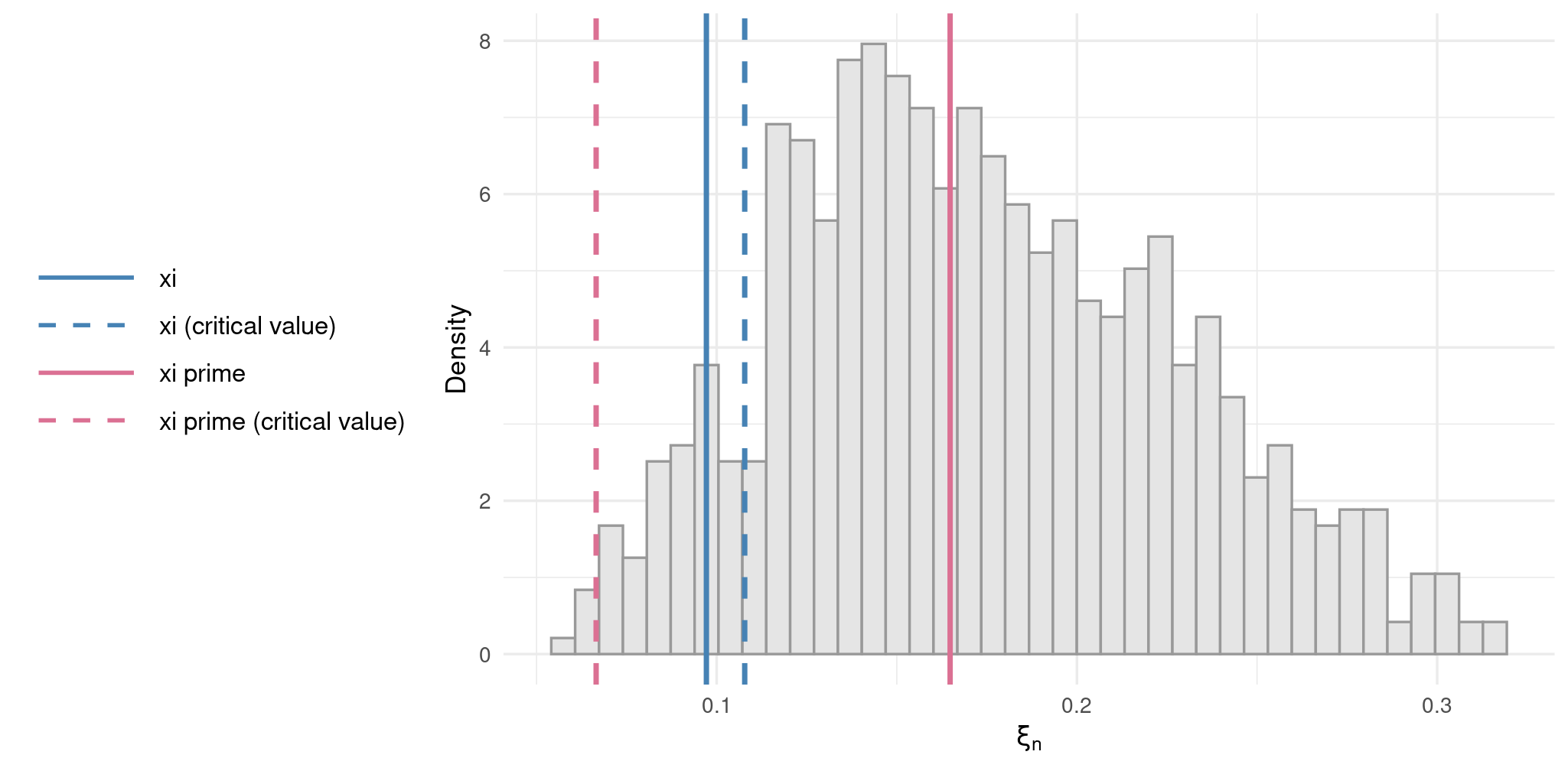}
\caption{Stability comparison under $6!$ label permutations. While the proposed measure $\xi_n'$ (pink solid) remains invariant, $\xi_n$ (histogram) exhibits high variability, crossing the rejection threshold (dashed line) purely due to relabeling.}
\label{fig:xi-perm-hist}
\end{figure}

\subsection{Power and run time comparisons}
\label{subsec:power-comparison}

We benchmark $\xi_n'$ against several widely used measures of dependence between a continuous covariate $X$ and a categorical response $Y$. For clarity, we provide the explicit forms of the statistics:

\begin{itemize}
    \item[1.] \textbf{Distance covariance (dCov)} \citep{SzekelyRizzoBakirov2007}:   
    Let $a_{ij} = |X_i - X_j|$ and $b_{ij} = \|e_{Y_i}-e_{Y_j}\|$, where $e_y=(\mathbf 1\{y=1\},\ldots,\mathbf 1\{y=k\})$ is the standard dummy-variable, or one-hot, representation of category $y$. Define the double-centered distance matrices $D^X=(D^X_{ij})$ and $D^Y=(D^Y_{ij})$ by
    \[
        D^X_{ij}=a_{ij}-a_{i\cdot}-a_{\cdot j}+a_{\cdot\cdot},
        \qquad
        D^Y_{ij}=b_{ij}-b_{i\cdot}-b_{\cdot j}+b_{\cdot\cdot}.
    \]
    The test statistic is defined as the sample squared distance covariance:
    \begin{align}
    \mathrm{dCov}_n^2(X,Y) = \frac{1}{n^2}\sum_{i,j} D^X_{ij}D^Y_{ij}.
    \end{align}

    \item[2.] \textbf{Hilbert--Schmidt independence criterion (HSIC)} \citep{Gretton2005}:   
    Given a Gaussian kernel $k(\cdot,\cdot)$ for $X$ and a delta kernel $\delta(\cdot,\cdot)$ for $Y$ (where $\delta(y, y') = 1$ if $y=y'$ and 0 otherwise), let $K=(K_{ij})$ and $L=(L_{ij})$ denote the $n \times n$ kernel matrices with entries $K_{ij} = k(X_i, X_j)$ and $L_{ij} = \delta(Y_i, Y_j)$. The test statistic is defined as:
    \begin{align}
    \mathrm{HSIC}_n(X,Y) = \frac{1}{n^2}\sum_{i,j} K_{ij}L_{ij} 
      + \frac{1}{n^4}\sum_{i,j,q,r} K_{ij}L_{qr}
      - \frac{2}{n^3}\sum_{i,j,q} K_{ij}L_{iq}.
    \end{align}

    \item[3.] \textbf{Chatterjee’s $\xi_n$} \citep{Chatterjee2021}:   
    To apply this method to the categorical response $Y$, we utilize integer coding by assigning arbitrary values $\{1, \dots, k\}$ to the categories. With the data sorted by $X$ such that $X_{(1)} \le \dots \le X_{(n)}$, define
    \[
    r_i=\sum_{m=1}^n\mathbf 1\{Y_m\le Y_{(i)}\},
    \qquad
    l_i=\sum_{m=1}^n\mathbf 1\{Y_m\ge Y_{(i)}\},
    \]
    where the comparisons use the assigned integer codes. The tie-adjusted test statistic is given by:
    \begin{align}
    \label{eq:chatterjee-xin}
    \xi_n(X,Y)
    =1-\frac{n\sum_{i=1}^{n-1}|r_{i+1}-r_i|}
    {2\sum_{i=1}^n l_i(n-l_i)}.
    \end{align}

    \item[4.] \textbf{ANOVA ($\eta^2$)}:
    Let $n_j$ denote the number of observations in the $j$-th category of $Y$, and let $\bar{X}_j$ denote the sample mean of $X$ within that category. Letting $\bar{X}$ denote the grand mean of $X$, we define the between-group sum of squares $\mathrm{SS}_{\text{bet}} = \sum_{j=1}^k n_j (\bar{X}_j - \bar{X})^2$ and the total sum of squares $\mathrm{SS}_{\text{tot}} = \sum_{i=1}^n (X_i - \bar{X})^2$. The test statistic is defined as:
    \begin{align}
    \eta^2 = \frac{\mathrm{SS}_{\text{bet}}}{\mathrm{SS}_{\text{tot}}}.
    \end{align}
\end{itemize}

While all these statistics are consistent with their respective population counterparts under mild conditions, they differ in computational complexity. 
Specifically, the HSIC requires $O(n^2)$ operations. 
In contrast, the classical ANOVA statistic $\eta^2$ is computationally lightweight, requiring only $O(n)$ operations.
Regarding dCov, although its naive computation is quadratic, \citet{HuoSzekely2016} proposed a fast algorithm that reduces its complexity to $O(n \log n)$. 
Consequently, dCov, $\xi_n$, $\xi_n'$, and $\eta^2$ all scale efficiently as $O(n \log n)$ or better, making them suitable for large-scale applications.

Regarding the testing procedure, distinct approaches are required depending on the statistic. 
Chatterjee's $\xi_n$ and our proposed $\xi_n'$ possess asymptotically normal null distributions, enabling the construction of computationally efficient Wald tests. 
In contrast, the limiting null distributions of dCov and HSIC involve complex infinite sums of weighted chi-square variables, for which analytical critical values are difficult to compute directly. 
Similarly, while the standard test for $\eta^2$ (i.e., ANOVA F-test) relies on the normality assumption, it may not be valid for general non-Gaussian data.
Therefore, to ensure a fair and consistent comparison across all methods without relying on distributional assumptions, we employed the permutation test with $199$ random permutations. 
For our proposed $\xi_n'$, we additionally report the asymptotic Wald test. The proposed version uses the finite-sample correction in $Z_n^{\mathrm{c}}$.

First, we examined empirical size under the null hypothesis ($\theta=0$). At $n=200$, the rejection rates of all methods were close to the nominal level. The centered $\xi_n'$ Wald test also remained well calibrated over a broad range of sample sizes in the separate null-centering experiment reported in Table~\ref{tab:null-negative-centering-response} of Appendix~\ref{sec:appendix_sim}.

\begin{table}[t]
\centering
\caption{Empirical power comparison under the weak signal regime ($\theta=0.25$) across varying sample sizes based on 1,000 replicates.}
\label{tab:power}
\begin{tabular}{c c c c c c c}
\toprule
$n$ & $\xi_n'$ (centered Wald) & $\xi_n'$ (Perm) & $\xi_n$ & HSIC & dCov & $\eta^2$ \\
\midrule
50  & 0.225 & 0.179 & 0.176 & 0.285 & 0.277 & 0.210 \\
100 & 0.355 & 0.306 & 0.243 & 0.579 & 0.569 & 0.430 \\
150 & 0.483 & 0.448 & 0.330 & 0.799 & 0.801 & 0.632 \\
200 & 0.554 & 0.512 & 0.366 & 0.914 & 0.920 & 0.763 \\
\bottomrule
\end{tabular}
\end{table}

Table~\ref{tab:power} summarizes empirical power under the weak signal block design. Power increases steadily with the sample size for all methods. In this setting, HSIC and dCov are the most powerful, followed by the ANOVA-based $\eta^2$. The tests based on $\xi_n'$ have moderate power but consistently outperform Chatterjee's $\xi_n$. Under moderate-to-strong dependence, these differences diminish as all methods approach unit power for sufficiently large samples, as shown in Appendix~\ref{sec:appendix_sim}.
Additional simulations examining sensitivity to the number of response categories are reported in Appendix~\ref{subsec:category_k_sensitivity}.

To complement the block design, we also consider a wiggly block design in which the preferred label changes repeatedly across narrow intervals of the $X$ domain and the same category recurs over several disjoint regions. This setting is designed to examine dependence carried by rapidly alternating local label patterns. While the kernel-based and distance-based methods are less sensitive to the rapidly changing label signal over the range considered, $\xi_n'$ maintains sensitivity by focusing on local label coincidence. A visualization of the design and results for additional sample sizes are provided in Appendix~\ref{sec:appendix_wiggly}.

As shown in Table~\ref{tab:power_wiggly_part}, all methods maintain appropriate rejection rates under the null. As the signal strengthens, the tests based on $\xi_n'$ gain power rapidly and are particularly effective under weak-to-moderate dependence. Chatterjee's $\xi_n$ catches up under stronger signals, whereas HSIC, dCov, and $\eta^2$ remain comparatively insensitive throughout the examined range. This contrast reflects the local adjacency structure of $\xi_n'$, which can capture rapid label changes that are less visible to methods based on global distances, kernel similarities, or mean differences.

\begin{table}[t]
\centering
\caption{Empirical power comparison under the wiggly block design, across varying signal strength based on 1,000 replicates.}
\label{tab:power_wiggly_part}
\begin{tabular}{c c c c c c c c}
\toprule
$n$ & $\theta$ & $\xi_n'$ (centered Wald) & $\xi_n'$ (Perm) & $\xi_n$ & HSIC & dCov & $\eta^2$ \\
\midrule
100 & 0.00 & 0.053 & 0.041 & 0.049 & 0.055 & 0.057 & 0.050 \\
    & 0.25 & 0.210 & 0.172 & 0.160 & 0.041 & 0.041 & 0.045 \\
    & 0.50 & 0.897 & 0.875 & 0.769 & 0.057 & 0.065 & 0.059 \\
    & 0.75 & 1.000 & 1.000 & 0.999 & 0.083 & 0.104 & 0.088 \\
    & 1.00 & 1.000 & 1.000 & 1.000 & 0.080 & 0.149 & 0.088 \\
\bottomrule
\end{tabular}
\end{table}

Table~\ref{tab:cost} reports the average runtime under the implementations and calibration procedures used in this study. The centered $\xi_n'$ Wald test is the fastest procedure across the examined sample sizes because it avoids the repeated computation required by permutation calibration. The permutation-calibrated methods are substantially more expensive, particularly for HSIC and dCov. These differences demonstrate the practical computational benefit of the centered Wald test in this setting.

\begin{table}[t]
\centering
\caption{Average test runtimes (milliseconds) per replicate under the implementations used in this study, averaged across all tested signal strengths ($\theta$). All permutation-calibrated procedures use $199$ random permutations.}
\label{tab:cost}
\begin{tabular}{c c c c c c c}
\toprule
$n$ & $\xi_n'$ (centered Wald) & $\xi_n'$ (Perm) & $\xi_n$ & HSIC & dCov & $\eta^2$ \\
\midrule
50  & 1.43 & 61.67 & 117.81 & 32.60 & 203.97 & 5.51 \\
100 & 1.26 & 67.31 & 128.99 & 70.31 & 329.18 & 5.08 \\
150 & 1.34 & 68.52 & 135.45 & 139.02 & 438.53 & 5.94 \\
200 & 1.31 & 72.09 & 143.24 & 273.90 & 536.97 & 6.22 \\
\bottomrule
\end{tabular}
\end{table}

\section{Application to the Cancer Genome Atlas (TCGA)}
\label{sec:realdata}

\subsection{Dataset description and biological validation}

To validate our proposed measure, we analyzed genomic data from \textit{the Cancer Genome Atlas} (TCGA) Breast Invasive Carcinoma (BRCA) project \citep{CancerGenomeAtlasNetwork2012}. The dataset was accessed via the \texttt{RTCGA} package \citep{RTCGA} in R.

Breast cancer is a heterogeneous disease comprising distinct molecular subtypes characterized by specific gene expression profiles \citep{Perou2000}. Critical biomarkers in this classification include the estrogen receptor (ER) and the progesterone receptor (PR). The \textit{ESR1} gene encodes the estrogen receptor alpha (ER$\alpha$) protein, a primary driver of hormone-receptor-positive breast cancer. Since functional ER signaling typically induces PR expression, these two receptors are closely linked and routinely evaluated together to determine clinical subtypes.
\citet{Perou2000} demonstrated a high concordance between \textit{ESR1} mRNA abundance and clinical ER protein status determined by immunohistochemistry.
This well-established biological relationship serves as a positive control to evaluate whether $\xi_n'$ can correctly identify strong functional dependencies in real-world data.

The objective is to quantify the dependence between the following variables:
\begin{itemize}
    \item Continuous variable ($X$): The mRNA expression level of the \textit{ESR1} gene.
    \item Categorical variable ($Y$): A composite clinical status derived from estrogen receptor (ER) and progesterone receptor (PR) outcomes. Patients were classified into four groups: ER-/PR-, ER-/PR+, ER+/PR-, and ER+/PR+.
\end{itemize}
We matched RNA-Seq samples with clinical records via patient identifiers. After removing samples with missing or indeterminate receptor status, the final dataset comprised $n=1,143$ patients. 
The proportions of the four subtypes were: \texttt{ER+/PR+} ($n=754$, 66.0\%), \texttt{ER-/PR-} ($n=238$, 20.8\%), \texttt{ER+/PR-} ($n=133$, 11.6\%), and \texttt{ER-/PR+} ($n=18$, 1.6\%).

\subsection{Detection of general dependence}
\label{subsec:real_data_analysis}

To evaluate the capability of the proposed measure in detecting general dependence structures, we analyzed the relationship between \textit{ESR1} expression ($X$) and breast cancer subtypes ($Y$) under three distinct scenarios. We considered the original data, mean-centered residuals where group means are aligned to zero, and standardized data where each group is normalized to have zero mean and unit variance. 
Note that removing mean and variance differences does not imply independence. 

Figure~\ref{fig:comparison_measures} and Table~\ref{tab:dependence_comparison} summarize the performance of various dependence measures across these scenarios. 

\begin{figure}[tp!]
    \centering
    \includegraphics[width=0.85\linewidth]{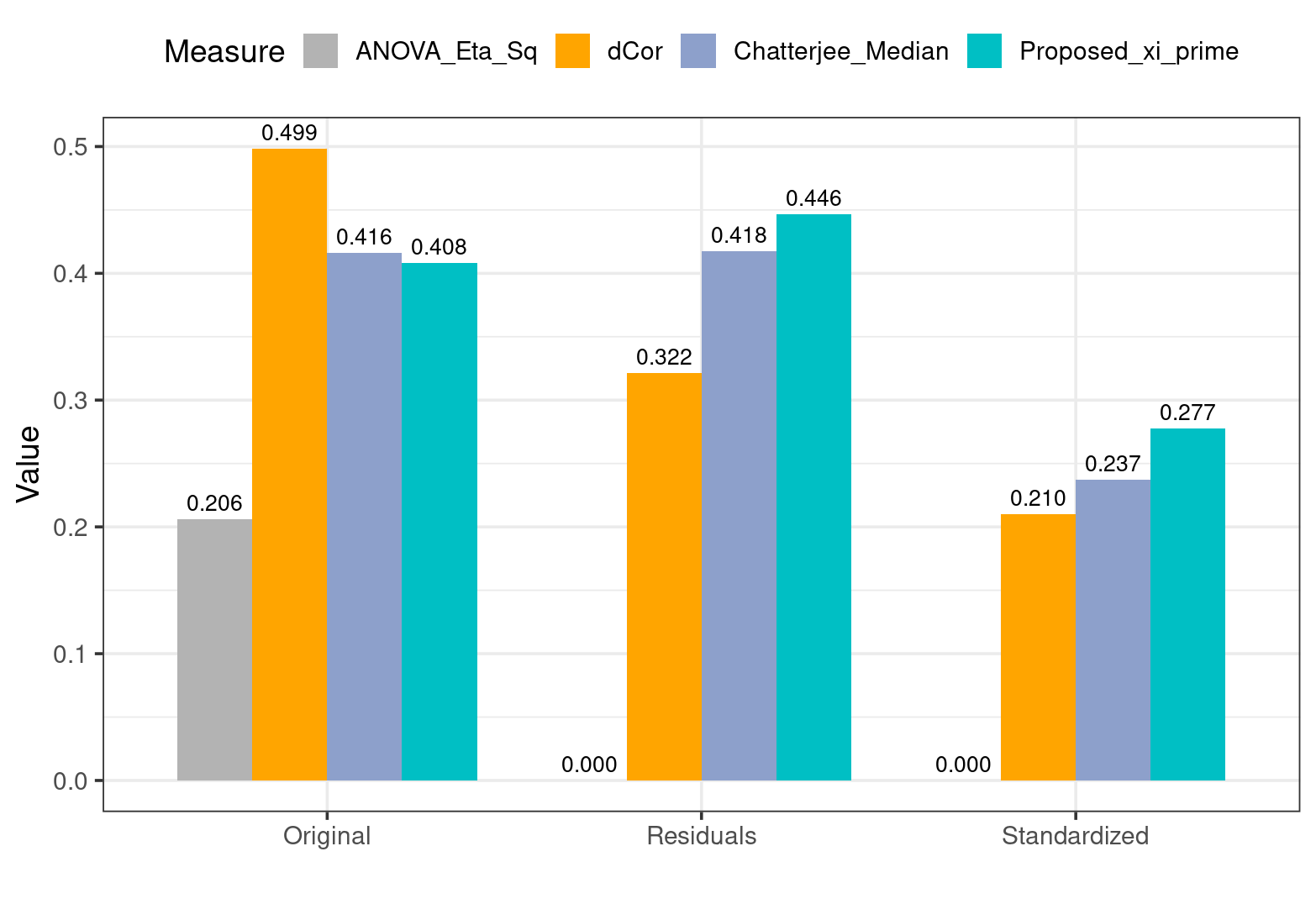}
    \caption{Comparison of dependence measures across three scenarios. While ANOVA ($\eta^2$) collapses to zero when mean differences are removed (Residuals), $\xi_n'$ increases to $0.446$, effectively detecting the variance heterogeneity. Distance correlation (dCor) also detects the signal but shows lower sensitivity to local variance structures than $\xi_n'$.}
    \label{fig:comparison_measures}
\end{figure}

\begin{table}[tp]
\centering
\caption{Comparison of dependence measures across three scenarios. The population target $\xi'$ and the measures $\eta^2$ and dCor lie in $[0,1]$, although the finite-sample estimator $\xi_n'$ can be negative. HSIC is not on a fixed $[0,1]$ scale.}
\label{tab:dependence_comparison}
\begin{tabular}{lccccc}
\toprule
\textbf{Scenario} & \textbf{Proposed} ${\xi_n'}$ & \textbf{ANOVA} ${\eta^2}$ & \textbf{Chatterjee} ${\xi_n}$ & \textbf{dCor} & \textbf{HSIC} \\
\midrule
Original      & 0.408 & 0.206 & 0.416 & 0.499 & 0.019 \\
Residuals     & 0.446 & 0.000 & 0.418 & 0.322 & 0.013 \\
Standardized  & 0.277 & 0.000 & 0.237 & 0.210 & 0.007 \\
\bottomrule
\end{tabular}
\end{table}

In the original data scenario, all methods captured the association. The proposed $\xi_n'$ yielded a score of $0.408$, while Chatterjee's $\xi_n$ (median of 24 values), ANOVA ($\eta^2$), and distance correlation (dCor) produced $0.416$, $0.206$, and $0.499$, respectively. The HSIC statistic was $0.019$; however, since it is not normalized to the $[0, 1]$ range, it is excluded from Figure~\ref{fig:comparison_measures} to preserve the visual scale.

In the residuals scenario, ANOVA ($\eta^2$) resulted in $0.000$, reflecting the absence of mean differences. 
The rank-based measures maintained their strong detection power: $\xi_n'$ increased to $0.446$, and the median of Chatterjee's $\xi_n$ remained comparable at $0.418$. 
In contrast, dCor and HSIC values decreased to $0.322$ and $0.013$, respectively, compared to the original setting. This comparison demonstrates that rank-based correlations are particularly sensitive to variance heterogeneity that persists after group mean centering, whereas traditional mean- or distance-based measures show reduced magnitudes in the absence of mean shifts.

Figure~\ref{fig:combined_analysis} visually clarifies the dependence structure in the residuals scenario. While the residuals for all subtypes share a zero mean, Figure~\ref{fig:sub_density} reveals differences in variance: the \texttt{ER+/PR+} group is widely dispersed, whereas the \texttt{ER-/PR-} group is highly concentrated. Figure~\ref{fig:sub_quantile} confirms this as a structural shift in conditional probability $P(Y=j|X)$. Across the 20 equal-frequency quantiles of ESR1 expression, we observe that the central quantiles are dominated by the \texttt{ER-/PR-} group, while the tails are dominated by the \texttt{ER+/PR+} group. The proposed $\xi_n'$ effectively captures this distributional variation as a strong signal of dependence.

\begin{figure}[tp]
    \centering
    \begin{subfigure}[b]{\linewidth}
        \centering
        \includegraphics[width=0.9\linewidth]{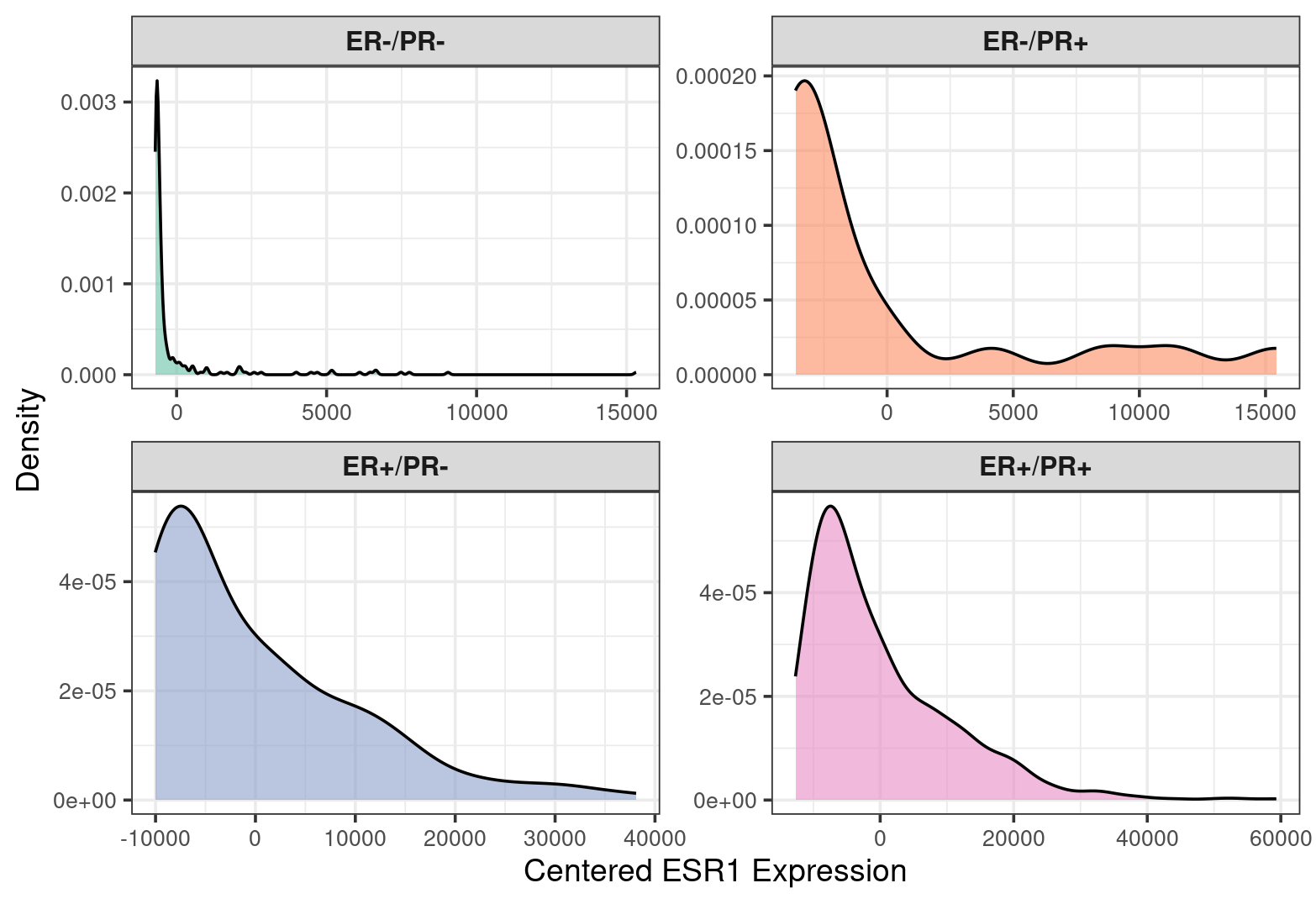}
        \caption{Density plots of mean-centered residuals.}
        \label{fig:sub_density}
    \end{subfigure}
    
    \vspace{0.5cm} 

    \begin{subfigure}[b]{\linewidth}
        \centering
        \includegraphics[width=0.9\linewidth]{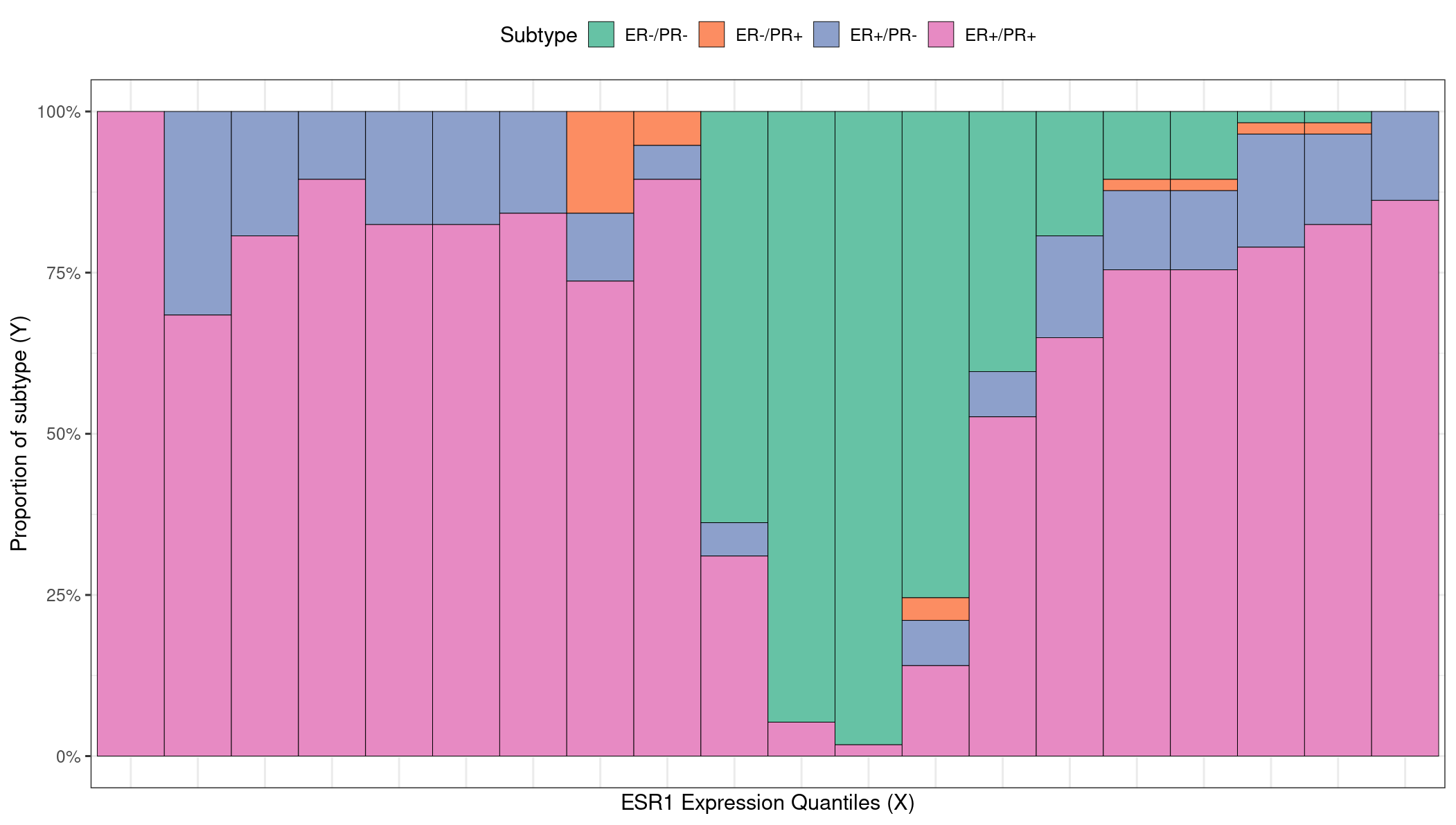}
        \caption{Conditional probability across ESR1 quantiles. The x-axis represents 20 equal-frequency quantiles.}
        \label{fig:sub_quantile}
    \end{subfigure}

    \caption{(a) Density plots show distinct variance differences between subtypes despite identical means. (b) Quantile histograms confirm this pattern as a structural shift in conditional probability.}
    \label{fig:combined_analysis}
\end{figure}

We also identified a critical limitation in applying rank-based correlations like Chatterjee's $\xi_n$ to categorical data: sensitivity to the arbitrary ordering of categories. To quantify this, we computed $\xi_n$ across all $4! = 24$ possible label permutations. As illustrated in Figure~\ref{fig:instability}, while the median of $\xi_n$ in the residuals scenario ($0.418$) is comparable to $\xi_n'$, the values fluctuate wildly between $0.288$ and $0.664$ depending on the encoding. This implies that a researcher could obtain a significantly underestimated or overestimated result purely by chance.
In contrast, our proposed $\xi_n'$ yields a single invariant estimate of 0.446, guaranteeing reproducibility and reliability for nominal data.


In the current analysis with a sufficient sample size ($n=1,143$), Chatterjee's $\xi_n$ rejects the null hypothesis of independence ($H_0: X \perp Y$) across all permutations. However, this rejection is not guaranteed; in settings where $n$ is small, the outcome of the hypothesis test could arbitrarily flip depending on the chosen integer coding. (See our simulation study in Section~\ref{subsec:perm-stability}.)
In contrast, our proposed $\xi_n'$ not only yields a single invariant estimate but also facilitates the construction of confidence intervals as defined in Eq.~\eqref{eq:ci_definition}. The 95\% confidence intervals for the Original, Residuals, and Standardized scenarios are estimated as $[0.354, 0.463]$, $[0.392, 0.501]$, and $[0.220, 0.335]$, respectively. These distinct intervals provide statistical evidence that the dependence strength significantly differs across scenarios.

\begin{figure}[tp]
    \centering
    \includegraphics[width=\linewidth]{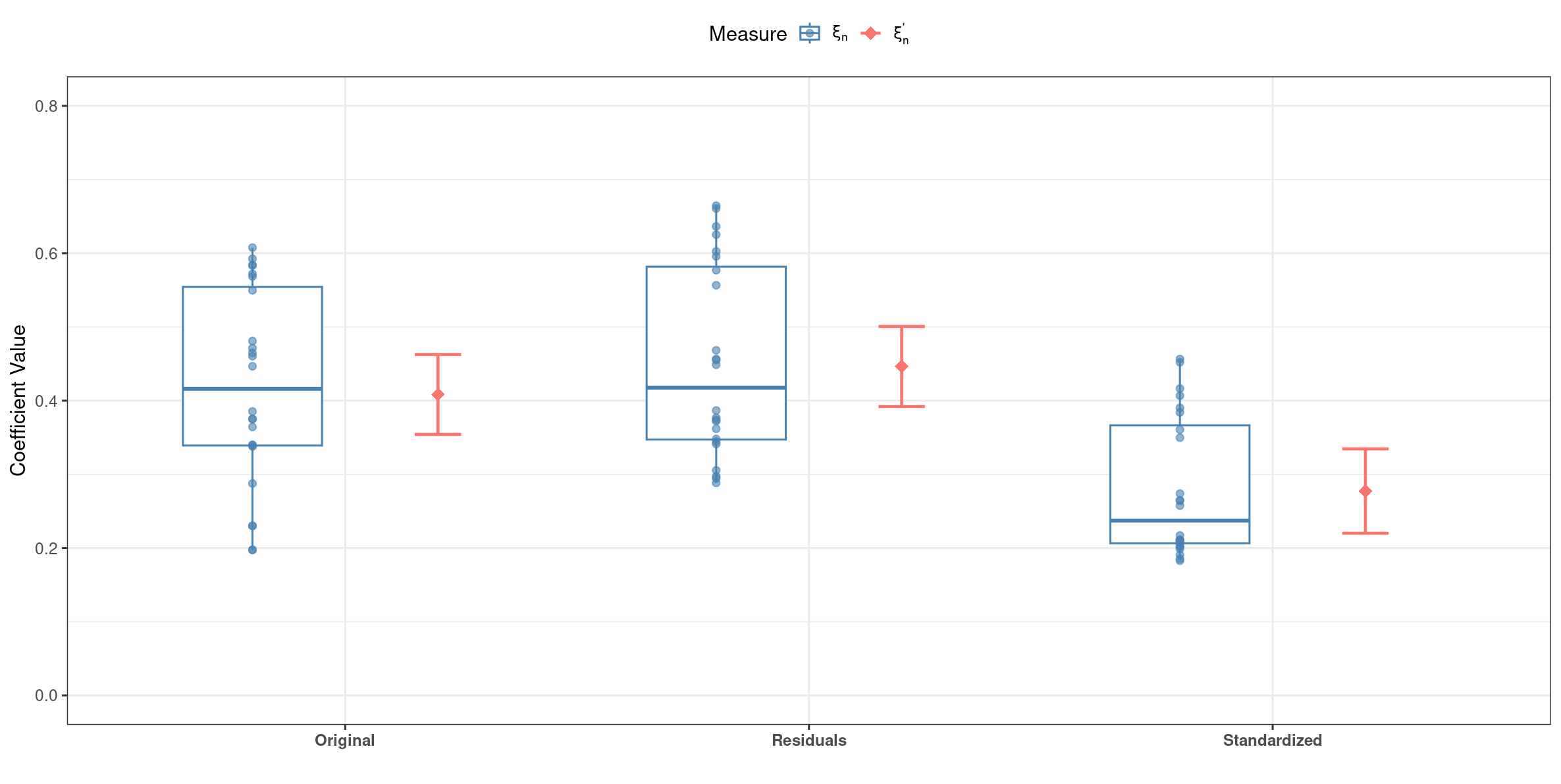}
    \caption{Instability of Chatterjee's $\xi_n$ versus the stability of $\xi_n'$. Blue dots represent $\xi_n$ values computed under all 24 possible label permutations. In the residuals setting, $\xi_n$ ranges from 0.288 to 0.664, demonstrating severe sensitivity to arbitrary coding. In contrast, the red diamond represents the proposed $\xi_n'$ estimate, with the 95\% asymptotic confidence interval.}
    \label{fig:instability}
\end{figure}

To understand the increase in $\xi_n'$ in the residual data compared to the original data, we examine the contribution of each category to the measure $\xi_n'$. In Eq.~\eqref{eq:xisample}, the numerator of $\xi_n'$ can be written as $\sum_{j=1}^k C_j$, where
\[
C_j = \frac{1}{n-1}\sum_{i=1}^{n-1} \mathbf{1}\{Y_{(i)}=j, Y_{(i+1)}=j\} - \left( \frac{1}{n} \sum_{i=1}^n \mathbf{1}\{Y_i=j\} \right)^2.
\]
Intuitively, $C_j$ measures the excess adjacent-match frequency for category $j$ in the $X$-ordered label sequence relative to its empirical marginal coincidence probability. Accordingly, we interpret $C_j$ as the net contribution of category $j$.

Figure~\ref{fig:net_contribution} presents the decomposition of these contributions for each category $j$. In the residuals scenario, we observe distinct behaviors across subgroups: the contribution of the minority group \texttt{ER-/PR-} (20.8\% of the data) increased, whereas that of the majority group \texttt{ER+/PR+} (66.0\% of the data) decreased. Because $\xi_n'$ employs an unweighted aggregation strategy, the local signal from the minority group outweighed the reduction in the majority group, thereby driving the $\xi_n'$ value up.

In the standardized scenario, the number of matches decreased globally across all categories as variance differences were normalized. This resulted in a uniform reduction in all $C_j$'s, leading to the observed decrease in the $\xi_n'$ value.

\begin{figure}[tp]
    \centering
    \includegraphics[width=\linewidth]{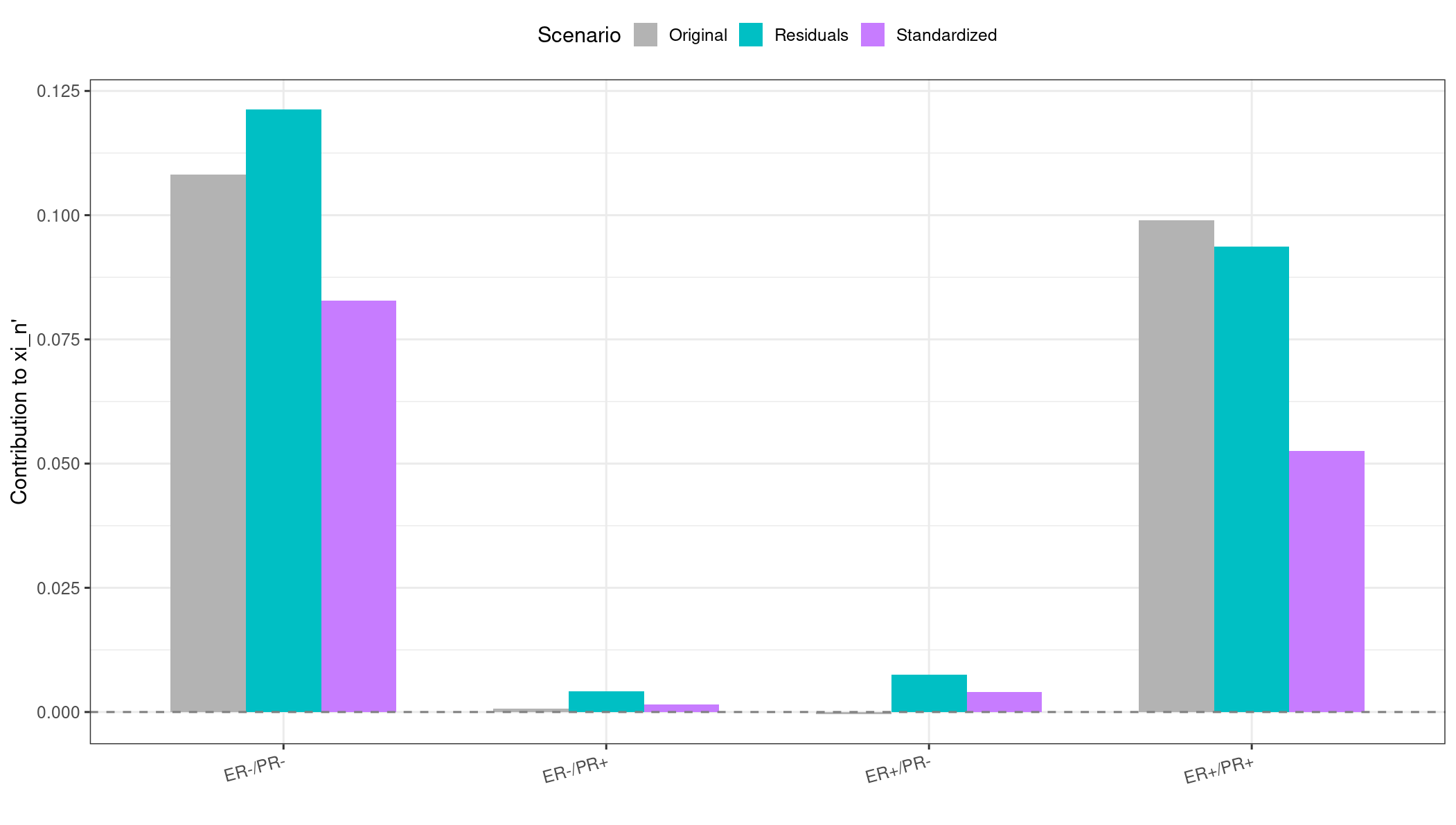}
    \caption{Decomposition of contributions. In the residuals scenario, the contribution of the minority class (\texttt{ER-/PR-}) increases significantly due to its low variance (high local density). The unweighted aggregation of $\xi_n'$ allows this local signal to outweigh the noise from the majority class.}
    \label{fig:net_contribution}
\end{figure}

\subsection{Biological validation}

Finally, we validated the coefficient by identifying the primary source of the dependence based on known biological mechanisms. The \textit{ESR1} gene directly encodes the ER protein, whereas PR expression is a downstream effect regulated by ER. Therefore, \textit{ESR1} expression should exhibit a stronger direct dependence on ER status than on PR status. When calculating $\xi_n'$ for each receptor separately, the coefficient for ER status ($0.670$) is substantially higher than for PR status ($0.419$). This aligns with the biological hierarchy.

Interestingly, the composite variable representing the four ER/PR subtypes yielded $\xi_n' \approx 0.408$, which is lower than the ER-only case. This reduction is statistically consistent with the data structure. We investigated the distributional overlap within the receptor subgroups to understand why stratifying by PR dilutes the signal. We focused our analysis on the ER-positive group ($n=887$), as the ER-negative/PR-positive subgroup was too small ($n=18$) to permit reliable distributional inference.

As illustrated in Figure~\ref{fig:overlap_evidence}, within the ER-positive group, the distributions of \textit{ESR1} expression for PR+ and PR- tumors exhibit substantial overlap. 
Since \textit{ESR1} expression levels do not significantly differ between PR+ and PR- tumors within the same ER group, stratifying by PR introduces label variation that $X$ cannot predict. Consequently, the probability of adjacent matches decreases in the sorted sequence, leading to a lower association. This confirms that $\xi_n'$ correctly reflects the dilution of dependence when sub-categories lacking distinct separability by $X$ are introduced.
\begin{figure}[tp]
    \centering
    \includegraphics[width=\linewidth]{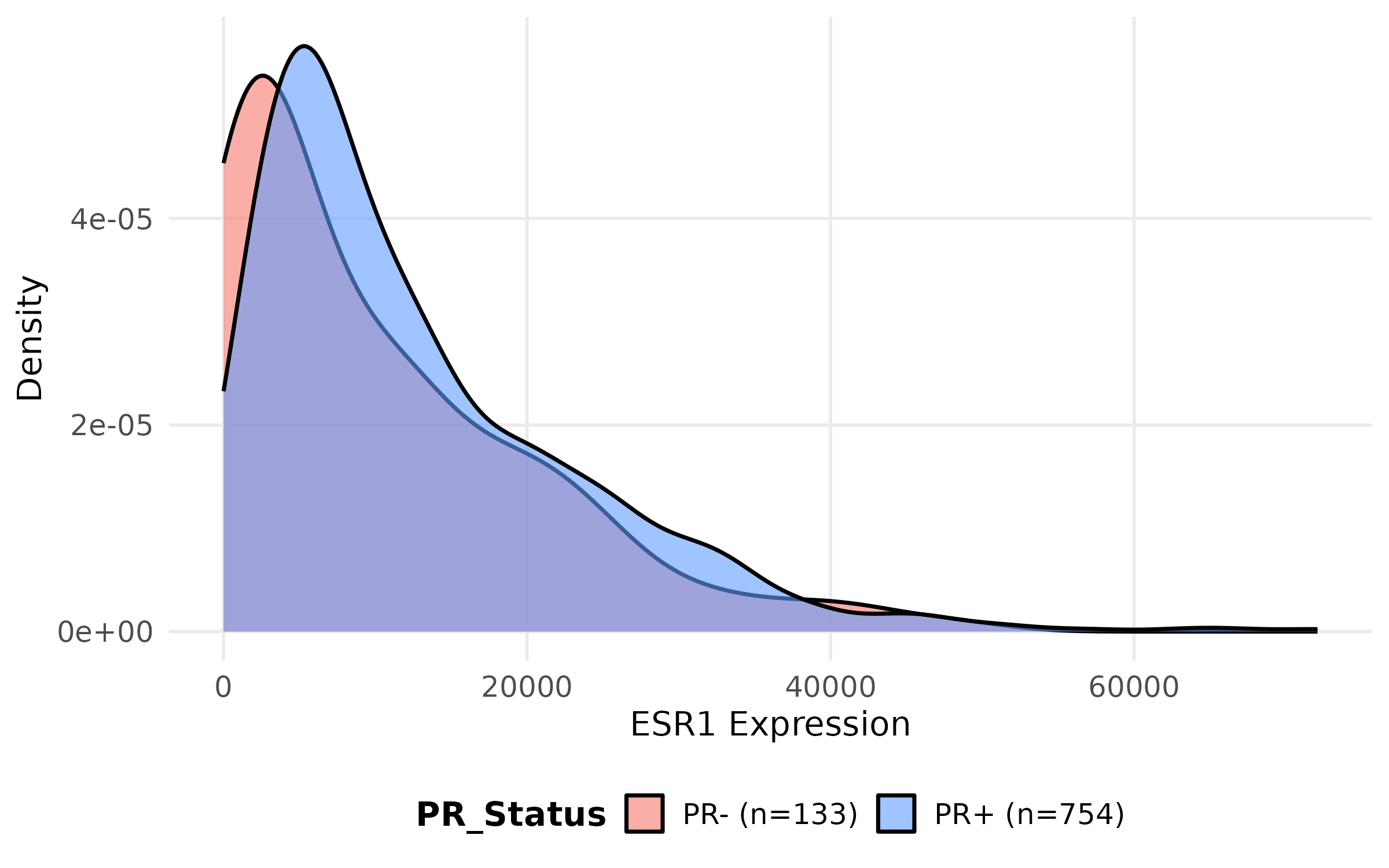}
    \caption{Density plots of \textit{ESR1} expression within the ER-positive group ($n=887$), stratified by PR status. The substantial overlap between PR+ (blue) and PR- (red) distributions explains why adding PR status to the classification dilutes the dependence measured by $\xi_n'$.}
    \label{fig:overlap_evidence}
\end{figure}

\section{Discussion}

The results presented above establish $\xi'$ as a simple coefficient of association between a real-valued $X$ and a categorical $Y$, together with the sample estimator $\xi_n'$. 
At the population level, $\xi'$ is interpretable on a $0$--$1$ scale: it equals $0$ if and only if $X$ and $Y$ are independent, and it equals $1$ if and only if $Y=f(X)$ almost surely. 
The coefficient and its estimator are defined without parametric modeling assumptions, while the studentized inferential procedures account for the distribution-dependent variance through the proposed variance estimators.
Furthermore, $\xi'$ provides a label-invariant nominal-category analogue of Chatterjee's coefficient by employing unweighted aggregation, while $\xi_n'$ retains the efficient computational complexity of $O(n\log n)$.

Our empirical studies highlight three crucial advantages. 
Most notably, $\xi_n'$ guarantees coding invariance. Unlike $\xi_n$, whose value can vary substantially under label permutations because of coding artifacts, $\xi_n'$ remains strictly invariant. This stability makes $\xi_n'$ reliable for applications involving unordered categories.

Beyond stability, $\xi_n'$ offers a balance between power and computational efficiency. It outperforms $\xi_n$ in detection power in the settings considered and provides a competitive alternative to powerful yet time-consuming tests such as dCov and HSIC.
Moreover, in the simulation settings where the centered Wald test was well calibrated, it avoided permutation calibration and was substantially faster than the permutation-based procedures under the implementations used here.

Finally, the asymptotic theory of $\xi_n'$ enables the construction of confidence intervals. 
While most independence measures rely solely on testing to provide a $p$-value, $\xi_n'$ admits a consistent variance estimator that allows researchers to quantify the uncertainty of the estimated dependence. This capability is particularly valuable in scientific contexts where effect size estimation is as critical as significance testing.

Several directions for future work remain. Just as Chatterjee's coefficient has been extended to multivariate settings using graph-based approaches, we anticipate that $\xi_n'$ admits a similar generalization for multivariate $X$. Additionally, developing conditional versions analogous to the Azadkia–Chatterjee coefficient \citep{AzadkiaChatterjee2021} would further broaden its scope to causal discovery applications.

\clearpage

\appendix
\section{Additional examples}
\label{app:additional_examples}

\subsection{Multinomial logit illustration}
\label{app:sim_details}

For the illustrative simulation in Figure~\ref{fig:concentration_gain2}, the conditional probabilities were generated as follows. Let $X \sim U(0,4)$. For each class $j \in \{1, 2, 3\}$, we defined the logit function $z_j(x)$ as a quadratic form centered at $c_j = j$:
\[
z_j(x) = -\gamma (x - c_j)^2 + C,
\]
where $C$ is an arbitrary constant common to all classes and therefore cancels from the softmax probabilities. The conditional probabilities were obtained via the softmax transformation:
\[
P(Y=j \mid X=x) = \frac{\exp(z_j(x))}{\sum_{l=1}^3 \exp(z_l(x))}.
\]
The parameter $\gamma$ controls the concentration of the distribution. We set $\gamma = 0.5$ for the weak dependency scenario and $\gamma = 20$ for the strong dependency scenario.

\subsection{Rare class population comparisons}
\label{app:rare_case_plots}

This section supplements Figure~\ref{fig:example_k4} by displaying the same
rare-class construction for $k=3$ and $k=10$ (Figures~\ref{fig:example_k3} and~\ref{fig:example_k10}). In both cases, the weighted
measure $\xi''$ is markedly smaller than $\xi'$ when the dependence is
concentrated in a low-prevalence class, illustrating the effect of assigning
class contributions in proportion to marginal prevalence.

\begin{figure}[htbp]
\centering
\includegraphics[width=0.85\linewidth]{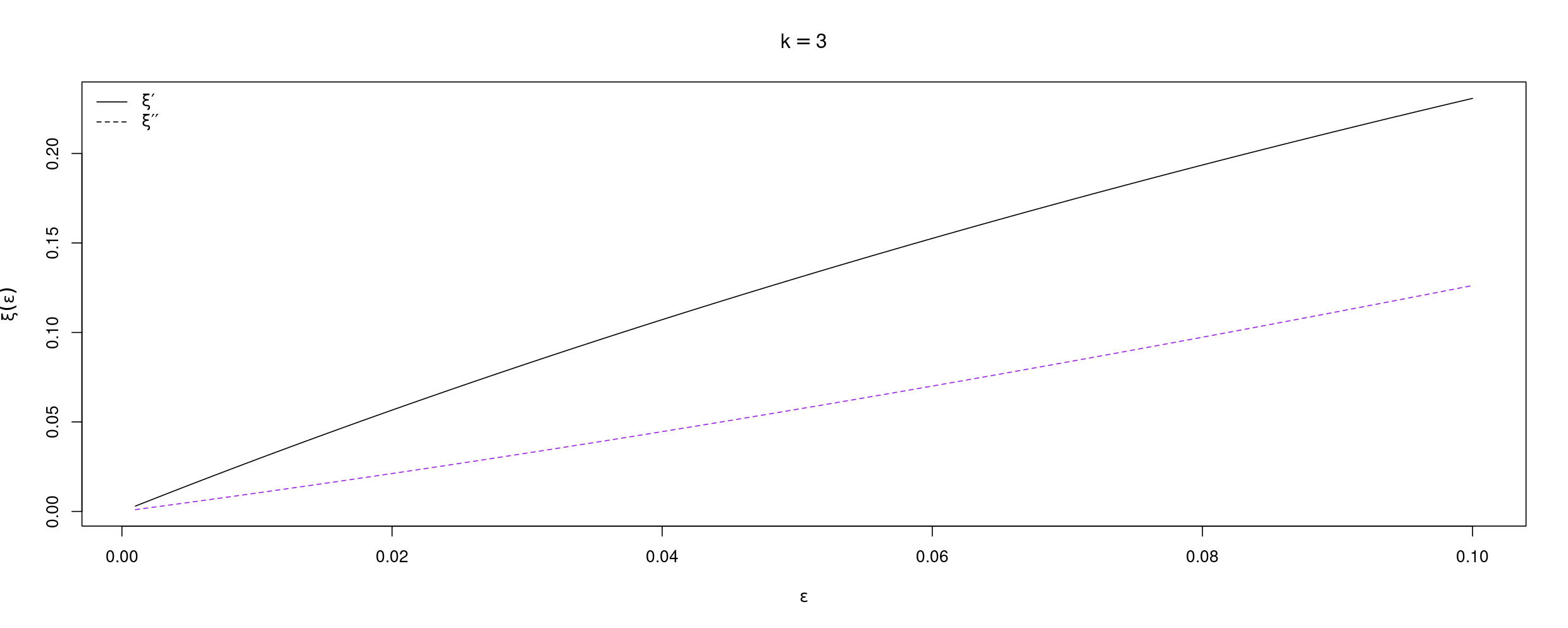}
\caption{
Comparison between $\xi'$ (solid) and $\xi''$ (dashed) in the rare-class
construction with $k=3$. The prevalence-weighted measure $\xi''$ is smaller
because the contribution of the rare class is scaled by its marginal
probability, even though that class is perfectly predicted.
}
\label{fig:example_k3}
\end{figure}

\begin{figure}[htbp]
\centering
\includegraphics[width=0.85\linewidth]{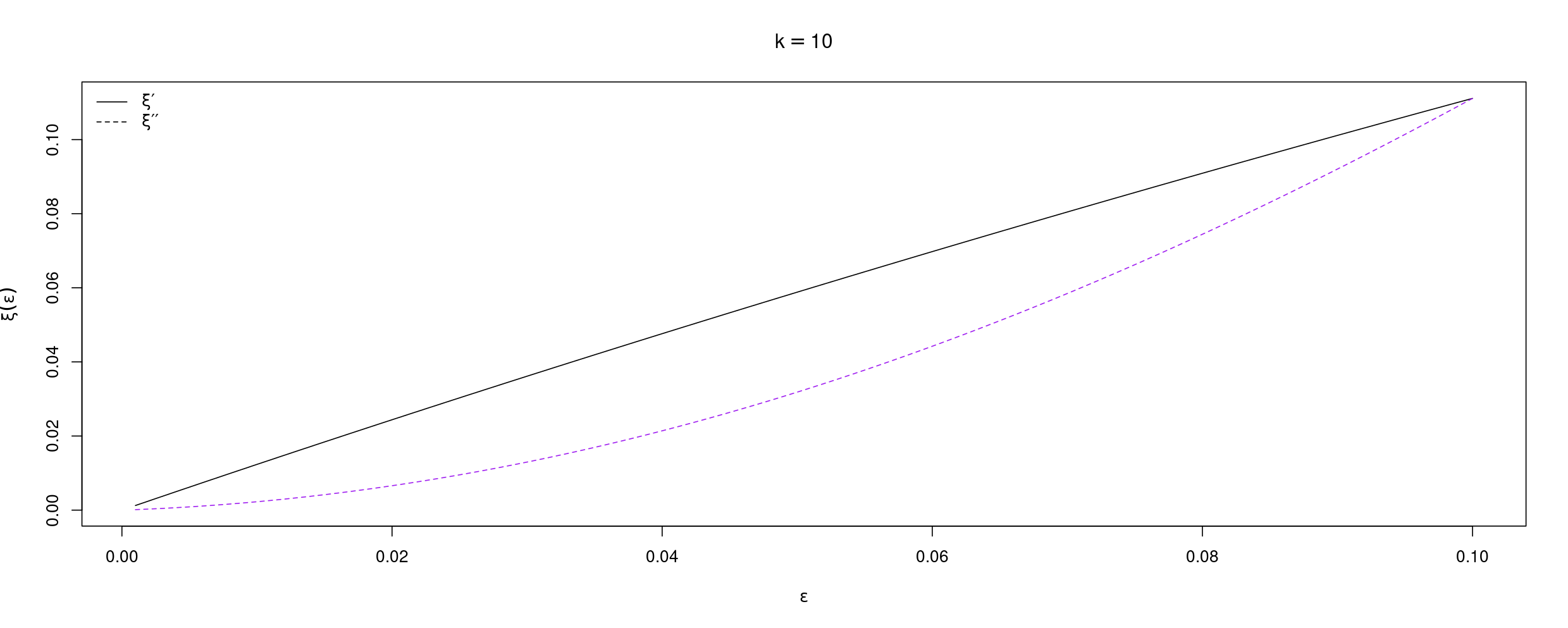}
\caption{
Comparison between $\xi'$ (solid) and $\xi''$ (dashed) in the rare-class
construction with $k=10$. The discrepancy becomes more pronounced as the
number of categories increases, illustrating the effect of prevalence
weighting on the contribution of rare classes as $k$ grows.
}
\label{fig:example_k10}
\end{figure}

\clearpage
\section{Additional finite-sample studies}
\label{app:finite_sample_studies}

Unless otherwise stated, all tests in this section are conducted at significance level $\alpha=0.05$, and empirical rejection probabilities are computed as the proportions of rejections over the stated numbers of independent Monte Carlo replicates. For every permutation-based test, if $T_{\mathrm{obs}}$ is the observed statistic and $T_1^*,\ldots,T_{B_{\mathrm{perm}}}^*$ are the statistics obtained from $B_{\mathrm{perm}}$ random label permutations, we use the plus-one permutation $p$-value
\[
p_{\mathrm{perm}}
=
\frac{1+\sum_{b=1}^{B_{\mathrm{perm}}}
\mathbf 1\{T_b^*\ge T_{\mathrm{obs}}\}}
{B_{\mathrm{perm}}+1}.
\]
The permutations are sampled independently with replacement from the permutation space, and the test rejects when $p_{\mathrm{perm}}\le\alpha$.

When reported, $\xi$ without a prime denotes Chatterjee's population coefficient, whereas $\xi_n$ denotes its sample counterpart. When HSIC is included, we use a Gaussian kernel for $X$, with bandwidth equal to the sample median of the positive pairwise distances and held fixed across label permutations, and a delta kernel for $Y$.

\subsection{Dominant noise, rare signal design}
\label{app:dominant_noise_sim}

This appendix supplements Example~\ref{ex:dominant_noise} with a finite sample power comparison in its rare signal regime. For the simulations, we generated $X\sim\operatorname{Unif}(0,1)$ and set
\[
I_j=((j-1)\delta,j\delta],
\qquad j=1,\ldots,k-1.
\]
On each interval $I_j$, the rare class component is concentrated on class $j$, while the dominant background class still occurs with probability $1-(k-1)\epsilon$. Hence $Y$ is not deterministic given $X$, but the rare class signal is locally structured.

We then compared permutation tests based on $\xi_n'$, the prevalence-weighted statistic $\xi_n''$, and Chatterjee's rank coefficient $\xi_n$ applied to the numerical coding $Y\in\{1,\ldots,k\}$. Chatterjee's statistic is included only as a representative ordinal-coding comparison, since for nominal categorical $Y$ its value depends on the numerical coding of the categories.

For completeness, let
\[
\widehat A_{j,n}
=
\frac{1}{n-1}\sum_{i=1}^{n-1}
\mathbf 1\{Y_{(i)}=j,\;Y_{(i+1)}=j\}.
\]
The prevalence-weighted sample statistic used in this comparison is
\[
\xi_n''
=
\frac{\displaystyle
\sum_{j=1}^k \widehat p_j
\bigl(\widehat A_{j,n}-\widehat p_j^2\bigr)}
{\displaystyle
\sum_{j=1}^k \widehat p_j^2(1-\widehat p_j)}.
\]

For the numerical experiment, we set $k=4$ and $\delta=0.30$. We considered
\[
\epsilon\in\{0.03,0.08,0.15,0.20\}
\]
and
\[
n\in\{300,600,900,1200\}.
\]
For each configuration, empirical power was computed from 300 Monte Carlo replicates using 399 random permutations.

\begin{table}[htbp]
\centering
\small
\setlength{\tabcolsep}{4.5pt}
\caption{Permutation-test power comparison under the dominant noise, rare signal design. We use $k=4$, $\delta=0.30$, 300 Monte Carlo replicates, 399 permutations, and significance level $\alpha=0.05$. Population coefficients and powers are rounded to two decimal places.}
\label{tab:dominant-noise-power-epsilon}
\begin{tabular}{cccccccc}
\toprule
$\epsilon$
& $n$
& $\xi'$
& $\xi''$
& $\xi$
& Power of $\xi_n'$
& Power of $\xi_n''$
& Power of $\xi_n$ \\
\midrule
0.03 & 300 & 0.03 & 0.00 & 0.00 & 0.13 & 0.05 & 0.04 \\
0.03 & 600 & 0.03 & 0.00 & 0.00 & 0.27 & 0.09 & 0.07 \\
0.03 & 900 & 0.03 & 0.00 & 0.00 & 0.30 & 0.08 & 0.08 \\
0.03 & 1200 & 0.03 & 0.00 & 0.00 & 0.31 & 0.06 & 0.05 \\
\midrule
0.08 & 300 & 0.09 & 0.02 & 0.01 & 0.56 & 0.08 & 0.06 \\
0.08 & 600 & 0.09 & 0.02 & 0.01 & 0.83 & 0.12 & 0.09 \\
0.08 & 900 & 0.09 & 0.02 & 0.01 & 0.94 & 0.17 & 0.12 \\
0.08 & 1200 & 0.09 & 0.02 & 0.01 & 0.95 & 0.20 & 0.13 \\
\midrule
0.15 & 300 & 0.19 & 0.09 & 0.07 & 0.99 & 0.64 & 0.34 \\
0.15 & 600 & 0.19 & 0.09 & 0.07 & 1.00 & 0.91 & 0.59 \\
0.15 & 900 & 0.19 & 0.09 & 0.07 & 1.00 & 0.93 & 0.68 \\
0.15 & 1200 & 0.19 & 0.09 & 0.07 & 1.00 & 0.98 & 0.79 \\
\midrule
0.20 & 300 & 0.30 & 0.23 & 0.16 & 1.00 & 1.00 & 0.94 \\
0.20 & 600 & 0.30 & 0.23 & 0.16 & 1.00 & 1.00 & 0.99 \\
0.20 & 900 & 0.30 & 0.23 & 0.16 & 1.00 & 1.00 & 1.00 \\
0.20 & 1200 & 0.30 & 0.23 & 0.16 & 1.00 & 1.00 & 1.00 \\
\bottomrule
\end{tabular}
\end{table}

The results confirm the population level comparison. When the rare class signal is weak, the population value of $\xi''$ is much smaller than that of $\xi'$, and the corresponding permutation test has substantially lower power. For example, when $\epsilon=0.08$, the population values are
\[
\xi'=0.086,
\qquad
\xi''=0.018,
\qquad
\xi=0.012,
\]
and the power of $\xi_n'$ is already $0.830$ at $n=600$, whereas the powers of $\xi_n''$ and $\xi_n$ are $0.117$ and $0.090$, respectively. As $\epsilon$ increases, the rare classes are less down-weighted, and the prevalence weighted test eventually catches up.

\subsection{Minority- versus majority-class signals}
\label{app:signal_location_sim}

To compare minority and majority class signals directly, we considered $k=4$ with marginal class probabilities
\[
(p_1,p_2,p_3,p_4)
=
\left(0.60,\frac{0.40}{3},\frac{0.40}{3},\frac{0.40}{3}\right).
\]
Let $c$ denote the signal class. For $X\sim\operatorname{Unif}(0,1)$, define the structured conditional probabilities
\[
h_c(x)=\mathbf 1\{x\le p_c\},
\qquad
h_j(x)=\frac{p_j}{1-p_c}\mathbf 1\{x>p_c\},
\quad j\ne c.
\]
For $x\le p_c$, this structured model assigns class $c$ deterministically. For $x>p_c$, it assigns only the remaining classes in proportions chosen to preserve their marginal probabilities. Indeed, $E\{h_j(X)\}=p_j$ for every $j$. We interpolate between independence and this structured model by setting
\[
P(Y=j\mid X=x)=(1-\lambda)p_j+\lambda h_j(x),
\qquad 0\le\lambda\le1,
\]
where $\lambda$ controls the signal strength. For each choice of the signal-bearing class, we selected $\lambda$ so that the population target $\xi'$ equaled $0.02$, $0.10$, or $0.30$. We took $c=1$ for the majority-class signal and $c=2$ for the minority-class signal. For each configuration, empirical power was estimated from $2{,}000$ Monte Carlo replicates at $n=300$, using $399$ random permutations.

\begin{table}[htbp]
\centering
\small
\setlength{\tabcolsep}{5pt}
\caption{Permutation test power by signal location at matched values of the unweighted population target $\xi'$. The majority class has prevalence $0.60$; each minority class has prevalence $0.40/3$.}
\label{tab:signal-location-power}
\begin{tabular}{lcccc}
\toprule
Signal-bearing class
& $\xi'$
& $\xi''$
& Power of $\xi_n'$
& Power of $\xi_n''$ \\
\midrule
Majority & 0.02 & 0.030 & 0.105 & 0.141 \\
Majority & 0.10 & 0.149 & 0.813 & 0.906 \\
Majority & 0.30 & 0.447 & 1.000 & 1.000 \\
\midrule
Minority & 0.02 & 0.017 & 0.113 & 0.100 \\
Minority & 0.10 & 0.086 & 0.769 & 0.559 \\
Minority & 0.30 & 0.259 & 1.000 & 0.999 \\
\bottomrule
\end{tabular}
\end{table}

At the moderate signal level $\xi'=0.10$, prevalence weighting improves power when the majority class carries the signal ($0.906$ versus $0.813$), but both tests remain highly sensitive. When a minority class carries a signal of the same unweighted population strength, the direction reverses and the power advantage of $\xi_n'$ is larger ($0.769$ versus $0.559$). Neither test is highly sensitive at the weakest level, and both are essentially fully powered at the strongest level. Together with the exact equality $\xi'=\xi''$ under uniform class probabilities, these results show how the practical effect of aggregation depends on the location and strength of the signal rather than establishing uniform dominance of either target.

\subsection{Additional block-design results}
\label{sec:appendix_sim}

This appendix provides simulation results under the block design model for signal strengths not reported in the main text.
Because $\xi_n'$ is an affine function of the integer-valued number of runs conditional on the class counts, its permutation distribution is discrete and contains many ties. Even distinct permutations therefore often produce the same run count. The resulting nonrandomized upper-tail test can be slightly conservative under the null. In a separate null experiment with $10{,}000$ replicates, the rejection rates of the $\xi_n'$ permutation test were $0.0383$, $0.0452$, $0.0450$, and $0.0519$ for $n=200,500,1000$, and $2000$, respectively; the corresponding mean proportions of permuted statistics tied with the observed value decreased from $0.0545$ to $0.0167$.
Table~\ref{tab:type1} presents the empirical Type I error rates ($\theta=0$) across varying sample sizes, with the proposed Wald test computed from $Z_n^{\mathrm{c}}$. Table~\ref{tab:null-negative-centering-response} separately documents the effect of the finite-sample correction by comparing the original and centered Wald tests over a wider range of sample sizes.
Table~\ref{tab:power_supp} summarizes the power for moderate to strong signals ($\theta \in \{0.50, 0.75\}$).
As expected, as the signal strength increases, the power of all methods converges to 1.
For $\theta=0.50$, the centered $\xi_n'$ Wald test reaches unit power at $n=150$. For $\theta=0.75$, all methods have power of at least $0.990$ at $n=50$ and attain unit power from $n=100$ onward.

\begin{table}[htbp]
\centering
\caption{Empirical Type I error rates ($\theta=0$) at $\alpha=0.05$ based on 1,000 replicates. The Wald column reports the centered $\xi_n'$ test based on $Z_n^{\mathrm{c}}$.}
\label{tab:type1}
\begin{tabular}{c c c c c c c}
\toprule
$n$ & $\xi_n'$ (centered Wald) & $\xi_n'$ (Perm) & $\xi_n$ & HSIC & dCov & $\eta^2$ \\
\midrule
50  & 0.054 & 0.036 & 0.048 & 0.055 & 0.055 & 0.054 \\
100 & 0.063 & 0.045 & 0.053 & 0.053 & 0.047 & 0.041 \\
150 & 0.045 & 0.034 & 0.051 & 0.058 & 0.051 & 0.061 \\
200 & 0.043 & 0.032 & 0.053 & 0.044 & 0.049 & 0.053 \\
\bottomrule
\end{tabular}
\end{table}

\begin{table}[htbp]
\centering
\small
\setlength{\tabcolsep}{4pt}
\caption{Finite-sample null centering and empirical rejection rates under independence for $k=6$, based on $50{,}000$ Monte Carlo replicates. The conditional target $-1/(n-1)$ is the exact conditional mean on the event $\{B_n<1\}$. The original Wald test is based on $\sqrt n\,\xi_n'/\widehat\kappa$, whereas the centered Wald test is based on $\sqrt n\{\xi_n'+1/(n-1)\}/\widehat\kappa$.}
\label{tab:null-negative-centering-response}
\begin{tabular}{ccccc}
\toprule
$n$ & \shortstack{Conditional\\target} & \shortstack{Empirical\\mean} & \shortstack{Original Wald\\test} & \shortstack{Centered Wald\\test} \\
\midrule
200  & $-0.005025$ & $-0.004973$ & $0.0373$ & $0.0509$ \\
500  & $-0.002004$ & $-0.001989$ & $0.0416$ & $0.0518$ \\
1000 & $-0.001001$ & $-0.001018$ & $0.0440$ & $0.0511$ \\
2000 & $-0.000500$ & $-0.000528$ & $0.0446$ & $0.0492$ \\
\bottomrule
\end{tabular}
\end{table}

\begin{table}[htbp]
\centering
\caption{Empirical power for moderate ($\theta=0.50$) and strong ($\theta=0.75$) signals based on 1,000 replicates.}
\label{tab:power_supp}
\begin{tabular}{c c c c c c c c}
\toprule
$\theta$ & $n$ & $\xi_n'$ (centered Wald) & $\xi_n'$ (Perm) & $\xi_n$ & HSIC & dCov & $\eta^2$ \\
\midrule
0.50 & 50  & 0.885 & 0.855 & 0.750 & 0.906 & 0.913 & 0.767 \\
     & 100 & 0.990 & 0.987 & 0.935 & 0.999 & 0.999 & 0.973 \\
     & 150 & 1.000 & 1.000 & 0.992 & 1.000 & 1.000 & 1.000 \\
     & 200 & 1.000 & 1.000 & 0.993 & 1.000 & 1.000 & 1.000 \\
\midrule
0.75 & 50  & 1.000 & 1.000 & 0.999 & 1.000 & 1.000 & 0.998 \\
     & 100 & 1.000 & 1.000 & 1.000 & 1.000 & 1.000 & 1.000 \\
     & 150 & 1.000 & 1.000 & 1.000 & 1.000 & 1.000 & 1.000 \\
     & 200 & 1.000 & 1.000 & 1.000 & 1.000 & 1.000 & 1.000 \\
\bottomrule
\end{tabular}
\end{table}

\subsubsection{Sensitivity to the number of response categories}
\label{subsec:category_k_sensitivity}

To assess sensitivity to the number of response categories, we generalized the block design to $k\in\{3,10,20,30\}$ equal-width blocks and categories. For each $k$, the preferred labels of the blocks were assigned by a fixed random permutation of $\{1,\ldots,k\}$ to avoid aligning the numerical codes with the block order. The marginal class probabilities are uniform, and the signal is calibrated so that the population coefficient satisfies $\xi'=\theta^2$ for every $k$. Table~\ref{tab:power_by_category_k} reports the results for $n=300$. Under independence, the centered Wald test maintains rejection rates close to the nominal level across all examined values of $k$. Under the weak signal $\theta=0.25$, dCov and HSIC are more powerful for $k=3$ and $k=10$, whereas the centered $\xi_n'$ Wald test is more powerful for $k=20$ and $k=30$. Thus, no method uniformly dominates the others. In this balanced block design, the increasing power of the centered Wald test is consistent with the fact that $\xi'=\theta^2$ remains fixed while its null variance is $\kappa^2=1/(k-1)$ under uniform marginal class probabilities. This pattern is specific to the design considered here and is not intended as a general monotonicity claim for arbitrary dependence structures.

\begin{table}[htbp]
\centering
\small
\setlength{\tabcolsep}{4pt}
\caption{Empirical rejection rates across different numbers of response categories under the $k$-block model. Results are based on $1{,}000$ Monte Carlo replicates with $n=300$ and $\alpha=0.05$. Each permutation-based test uses $199$ random permutations, and $\xi'=\theta^2$ for every $k$.}
\label{tab:power_by_category_k}
\begin{tabular}{cc ccccc}
\toprule
$\theta$ & $k$ & \shortstack{$\xi_n'$\\(centered Wald)} & \shortstack{$\xi_n'$\\(Perm.)} & \shortstack{Chatterjee's\\$\xi_n$} & dCov & HSIC \\
\midrule
0    & 3  & 0.048 & 0.047 & 0.054 & 0.048 & 0.047 \\
     & 10 & 0.048 & 0.039 & 0.045 & 0.045 & 0.047 \\
     & 20 & 0.047 & 0.035 & 0.046 & 0.048 & 0.045 \\
     & 30 & 0.058 & 0.045 & 0.049 & 0.047 & 0.052 \\
\midrule
0.25 & 3  & 0.456 & 0.415 & 0.370 & 0.990 & 0.995 \\
     & 10 & 0.873 & 0.858 & 0.472 & 0.982 & 0.978 \\
     & 20 & 0.968 & 0.956 & 0.483 & 0.942 & 0.910 \\
     & 30 & 0.991 & 0.982 & 0.468 & 0.893 & 0.846 \\
\midrule
0.50 & 3  & 1.000 & 1.000 & 1.000 & 1.000 & 1.000 \\
     & 10 & 1.000 & 1.000 & 1.000 & 1.000 & 1.000 \\
     & 20 & 1.000 & 1.000 & 1.000 & 1.000 & 1.000 \\
     & 30 & 1.000 & 1.000 & 1.000 & 1.000 & 1.000 \\
\bottomrule
\end{tabular}
\end{table}

\subsection{Wiggly block design}
\label{sec:appendix_wiggly}

To further investigate the local sensitivity of $\xi_n'$, we let $X\sim\operatorname{Unif}(0,1)$ and consider a wiggly block design with $k=6$ categories and $L=40$ equal-width blocks. For $x\in(0,1)$, let
\[
b(x)=\min\{\lfloor Lx\rfloor+1,L\},
\qquad
j_{b(x)}=1+\{b(x)-1\}\bmod k.
\]
Conditional on $X=x$, the preferred category $j_{b(x)}$ is assigned probability
\[
P\{Y=j_{b(x)}\mid X=x\}
=
\theta+\frac{1-\theta}{k},
\]
and every other category is assigned probability $(1-\theta)/k$. Thus, the preferred category rotates cyclically across successive blocks, creating a rapidly oscillating dependence structure.
Figure~\ref{fig:wiggly_viz} visualizes this data distribution at the deterministic limit ($\theta=1$).

\begin{figure}[htbp]
    \centering
    \includegraphics[width=\linewidth]{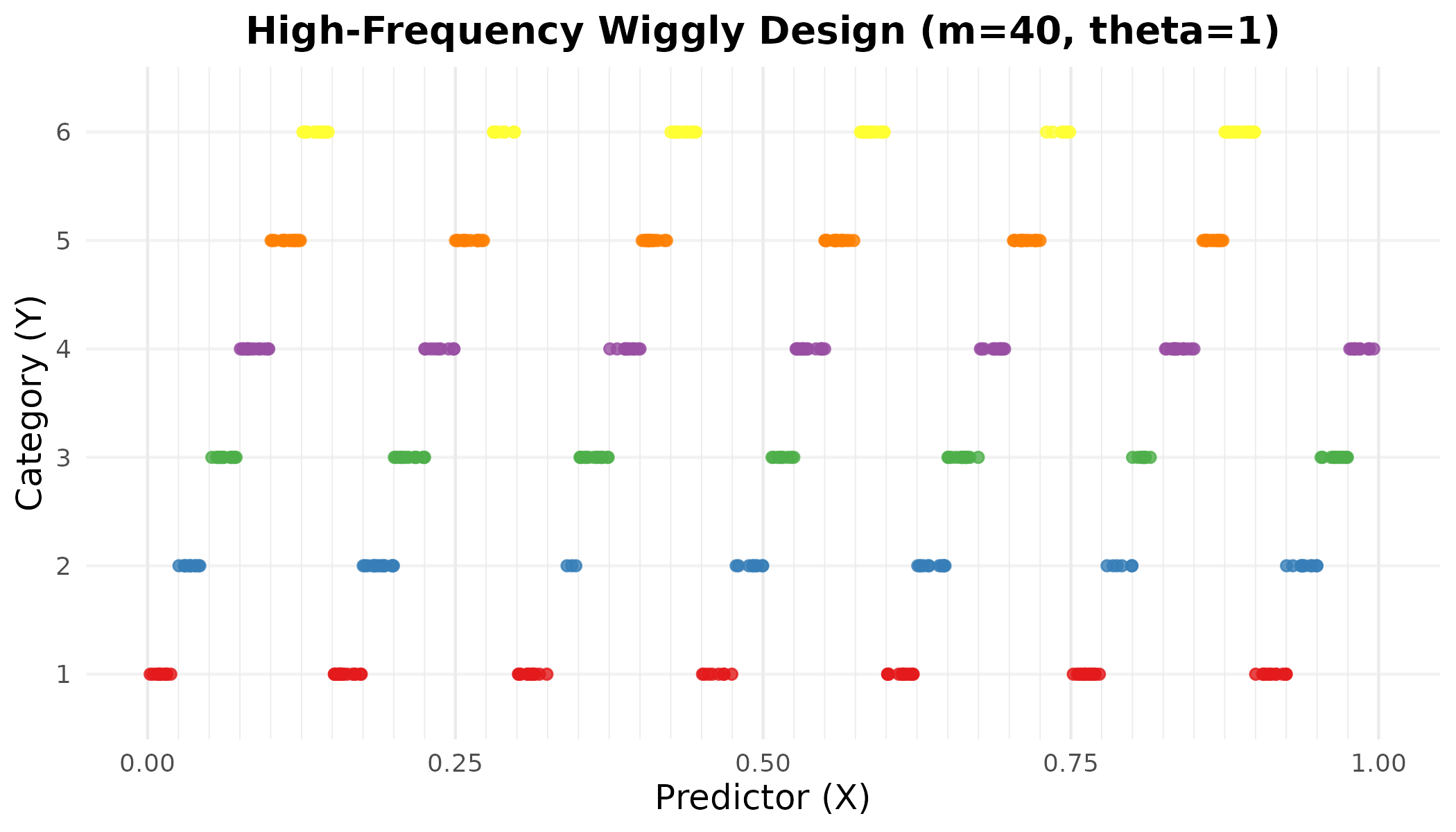}
    \caption{Visualization of the high-frequency wiggly block design ($L=40, \theta=1, n=500$). The vertical lines represent the block boundaries. Within each interval of length $1/40$, $Y$ is fixed to a specific category, but these labels rotate rapidly across the full range of $X$.}
    \label{fig:wiggly_viz}
\end{figure}

As shown in Table~\ref{tab:power_wiggly}, HSIC and dCov remain substantially less powerful than $\xi_n'$ in this design. For example, at $n=200$ and $\theta=1$, their powers are $0.130$ and $0.279$, respectively, whereas both versions of the proposed test have power one. This pattern is consistent with the global distance scales and fixed kernel bandwidth being too coarse for the rapid oscillations used here.

In contrast, both versions of the $\xi_n'$ test have power one at $\theta=0.75$ for $n\ge100$ and at $\theta=1$ for every sample size considered. By evaluating only immediate neighbors in the sorted $X$ sequence, $\xi_n'$ focuses on the local shifts in this design.

\begin{table}[p]
\centering
\caption{Empirical rejection probabilities under the wiggly block design. Results are based on 1,000 replicates, with $B_{\mathrm{perm}}=199$ permutations used for each permutation-based test.}
\label{tab:power_wiggly}
\begin{tabular}{c c c c c c c c}
\toprule
$n$ & $\theta$ & $\xi_n'$ (centered Wald) & $\xi_n'$ (Perm) & $\xi_n$ & HSIC & dCov & $\eta^2$ \\
\midrule
50  & 0.00 & 0.053 & 0.038 & 0.065 & 0.049 & 0.047 & 0.040 \\
    & 0.25 & 0.100 & 0.067 & 0.085 & 0.056 & 0.063 & 0.063 \\
    & 0.50 & 0.352 & 0.294 & 0.324 & 0.054 & 0.060 & 0.061 \\
    & 0.75 & 0.844 & 0.792 & 0.817 & 0.054 & 0.059 & 0.049 \\
    & 1.00 & 1.000 & 1.000 & 1.000 & 0.068 & 0.093 & 0.065 \\
\midrule
100 & 0.00 & 0.053 & 0.041 & 0.049 & 0.055 & 0.057 & 0.050 \\
    & 0.25 & 0.210 & 0.172 & 0.160 & 0.041 & 0.041 & 0.045 \\
    & 0.50 & 0.897 & 0.875 & 0.769 & 0.057 & 0.065 & 0.059 \\
    & 0.75 & 1.000 & 1.000 & 0.999 & 0.083 & 0.104 & 0.088 \\
    & 1.00 & 1.000 & 1.000 & 1.000 & 0.080 & 0.149 & 0.088 \\
\midrule
150 & 0.00 & 0.053 & 0.038 & 0.049 & 0.044 & 0.040 & 0.040 \\
    & 0.25 & 0.324 & 0.278 & 0.226 & 0.053 & 0.060 & 0.054 \\
    & 0.50 & 0.986 & 0.983 & 0.921 & 0.061 & 0.081 & 0.074 \\
    & 0.75 & 1.000 & 1.000 & 1.000 & 0.076 & 0.125 & 0.086 \\
    & 1.00 & 1.000 & 1.000 & 1.000 & 0.100 & 0.183 & 0.095 \\
\midrule
200 & 0.00 & 0.047 & 0.038 & 0.044 & 0.069 & 0.067 & 0.060 \\
    & 0.25 & 0.440 & 0.397 & 0.289 & 0.066 & 0.071 & 0.065 \\
    & 0.50 & 0.999 & 0.999 & 0.988 & 0.071 & 0.097 & 0.078 \\
    & 0.75 & 1.000 & 1.000 & 1.000 & 0.081 & 0.135 & 0.085 \\
    & 1.00 & 1.000 & 1.000 & 1.000 & 0.130 & 0.279 & 0.118 \\
\bottomrule
\end{tabular}
\end{table}

\clearpage
\section{Proofs}
\label{app:proofs}

Throughout this section, we work on a fixed probability space $(\Omega, \mathcal{F}, P)$ and use the notation introduced in Section~\ref{sec:xiprime}. Let $X$ and $Y$ be random variables defined on this space, where $Y$ takes values in $\{1, \dots, k\}$.

\subsection{Population properties and representations}

\subsubsection{Proof of Proposition~\ref{prop:basic-xi-prime}}
\label{subsec:proof-basic-properties}
\begin{proof}
Use the notation $p_j$, $g_j$, $h$, $A$, and $B$ introduced in Section~\ref{sec:xiprime}. Define the convex function $\psi:\Delta_k\to\mathbb{R}$ by $\psi(u_1,\dots,u_k)=\sum_{j=1}^k u_j^2$, where $\Delta_k = \{ (u_1, \dots, u_k) \in \mathbb{R}^k \mid u_j \ge 0, \sum_{j=1}^k u_j = 1 \}$.

\medskip\noindent
\textbf{(i) Range $[0,1]$.}
Write
\[
\xi'(X,Y)=\frac{A-B}{1-B}.
\]

\emph{Lower bound.} By Jensen’s inequality applied to $\psi$,
\[
A
=\mathbb{E}[\psi(g_1(X),\dots,g_k(X))]
\;\ge\; \psi\big(\mathbb{E}[g_1(X)],\dots,\mathbb{E}[g_k(X)]\big)
=B,
\]
so the numerator is nonnegative and therefore $\xi'\ge 0$.

\emph{Upper bound.} For each fixed $x$, $\sum_{j=1}^k g_j(x)=1$ and $g_j(x)\ge0$, hence,
\[
h(x)\le \Big(\sum_{j=1}^k g_j(x)\Big)^2=1.
\]

Substituting the random variable $X$ for $x$ and taking the expectation gives $A\le 1$, whence the numerator is at most $1-B$ and thus $\xi'\le 1$. Combining, $\xi'\in[0,1]$.

\medskip\noindent
\textbf{(ii) $\xi'=0$ if and only if independence.}
($\Rightarrow$) If $\xi'=0$, then
\[
A \;=\; B.
\]
Since $\psi$ is strictly convex, equality holds in Jensen’s inequality if and only if
\[
(g_1(X),\dots,g_k(X)) \;=\; (p_1,\dots,p_k)\quad \text{almost surely.}
\]
Equivalently, $P(Y=j\mid X)=P(Y=j)$ a.s.\ for all $j$, which is independence of $X$ and $Y$.

($\Leftarrow$) Conversely, if $X$ and $Y$ are independent then $g_j(X)\equiv p_j$, so the numerator is zero and $\xi'=0$.

\medskip\noindent
\textbf{(iii) $\xi'=1$ if and only if $Y=f(X)$ a.s.}
($\Rightarrow$) If $\xi'=1$, then
\[
A \;=\; 1.
\]
Because $h(x)\le1$ for each $x$, we must have $h(X)=1$ almost surely. This holds if and only if for almost every $x$ exactly one $g_j(x)$ equals $1$ and the rest are $0$. Define
\[
f(x)\;:=\;\arg\max_{1\le j\le k} g_j(x).
\]
Then $P(Y=f(X)\mid X=x)=1$ for almost every $x$, i.e.\ $P(Y=f(X))=1$.

($\Leftarrow$) Conversely, if $Y=f(X)$ a.s., then for a.e.\ $x$ there is a unique $j$ such that $j=f(x)$ with $g_j(x)=1$ and $g_\ell(x)=0$ for $\ell\neq j$. Hence $h(X)=1$, so the numerator equals the denominator and $\xi'=1$.

\medskip\noindent
\textbf{(iv) Invariance under strictly monotone transformations.}
Let $\varphi:\mathbb{R}\to\mathbb{R}$ be a strictly monotone function. We show that $\xi'(\varphi(X),Y)=\xi'(X,Y)$.
Note that the definition of $\xi'(X,Y)$ depends on $X$ solely through the term $\mathbb{E}[\sum_{j=1}^k P(Y=j\mid X)^2]$.
Since $\varphi$ is strictly monotone, it is Borel measurable and injective, implying that the $\sigma$-algebras generated by $X$ and $\varphi(X)$ coincide, i.e., $\sigma(X)=\sigma(\varphi(X))$.
By the property of conditional expectation,
\[
\begin{aligned}
P(Y=j\mid \varphi(X)) &\;=\; \mathbb{E}[\mathbf{1}\{Y=j\}\mid \sigma(\varphi(X))] \\
&\;=\; \mathbb{E}[\mathbf{1}\{Y=j\}\mid \sigma(X)] \;=\; P(Y=j\mid X) \quad \text{a.s.}
\end{aligned}
\]
Squaring and taking expectations yields
\[
\mathbb{E}\Big[\sum_{j=1}^k P(Y=j\mid \varphi(X))^2\Big] \;=\; \mathbb{E}\Big[\sum_{j=1}^k P(Y=j\mid X)^2\Big].
\]
Thus, the value of $\xi'$ remains unchanged.

\medskip\noindent
\textbf{(v) Invariance under permutations of the category labels.}
Let $\pi:\{1,\ldots,k\}\to\{1,\ldots,k\}$ be a bijection and set $\widetilde Y=\pi(Y)$. Then
\[
P(\widetilde Y=j\mid X)=g_{\pi^{-1}(j)}(X),
\qquad
P(\widetilde Y=j)=p_{\pi^{-1}(j)}.
\]
Therefore, both sums defining $A$ and $B$ are unchanged after relabeling, and hence $\xi'(X,\widetilde Y)=\xi'(X,Y)$. At the sample level, relabeling preserves every indicator $\mathbf 1\{Y_{(i+1)}=Y_{(i)}\}$ and merely permutes the sample proportions $\hat p_j$. Thus, $A_n$ and $B_n$ are unchanged, so $\xi_n'(X,\widetilde Y)=\xi_n'(X,Y)$.

\medskip\noindent
\textbf{(vi) Binary case: $\xi'=\xi$ (population).}
Assume $Y\in\{0,1\}$, and let $p:=P(Y=1)$ and $r(X):=P(Y=1\mid X)$.
Recall the definition of Chatterjee's coefficient:
\[
\xi(X,Y) = \frac{\int \mathrm{Var}(E[\mathbf{1}\{Y \ge t\} \mid X]) \, dF_Y(t)}{\int \mathrm{Var}(\mathbf{1}\{Y \ge t\}) \, dF_Y(t)}.
\]
Since $Y$ takes values in $\{0,1\}$, the integral with respect to $dF_Y(t)$ is a sum over the support points weighted by their probabilities.
However, for $t=0$, the indicator $\mathbf{1}\{Y \ge 0\}$ is almost surely $1$, so its variance is $0$.
Thus, the only non-zero contribution comes from $t=1$, where $\mathbf{1}\{Y \ge 1\} = \mathbf{1}\{Y=1\}$ and the weight is $P(Y=1)=p$.
Canceling the common factor $p$ from the numerator and denominator, we obtain:
\begin{align*}
\xi(X,Y)
&= \frac{p \cdot \mathrm{Var}(E[\mathbf{1}\{Y=1\}\mid X])}{p \cdot \mathrm{Var}(\mathbf{1}\{Y=1\})} \\
&= \frac{\mathrm{Var}(r(X))}{p(1-p)} 
= \frac{\mathbb{E}[r(X)^2]-p^2}{p(1-p)}.
\end{align*}

For our proposed coefficient $\xi'$, with $k=2$, the numerator is the unweighted sum of variances for $j=0$ and $j=1$. Note that $P(Y=0|X) = 1-r(X)$.
\begin{align*}
\xi'(X,Y)
&= \frac{\mathrm{Var}(P(Y=1|X)) + \mathrm{Var}(P(Y=0|X))}{\mathrm{Var}(\mathbf{1}\{Y=1\}) + \mathrm{Var}(\mathbf{1}\{Y=0\})} \\
&= \frac{\mathrm{Var}(r(X)) + \mathrm{Var}(1-r(X))}{p(1-p) + p(1-p)} \\
&= \frac{2\mathrm{Var}(r(X))}{2p(1-p)} 
= \frac{\mathbb{E}[r(X)^2]-p^2}{p(1-p)}.
\end{align*}
Hence $\xi'=\xi$ at the population level.
\end{proof}

\subsubsection{Derivations for the R-squared and Gini-impurity representations}
\label{subsec:r2-gini-proof}

\begin{proof}
We first treat the binary--binary case and then the Gini impurity representation.

\medskip\noindent
\textbf{(a) Binary $X$ and binary $Y$: $\xi'$ equals $R^2$.}
Assume $X\in\{0,1\}$ and $Y\in\{0,1\}$.  Let
\[
p := P(Y=1),
\qquad
p_x := P(Y=1\mid X=x),
\qquad
\pi_x := P(X=x),
\quad x\in\{0,1\}.
\]
Write $r(X) := P(Y=1\mid X)$, so that $r(X)\in\{p_0,p_1\}$ and
\[
\mathbb{E}[r(X)] = \sum_{x=0}^1 \pi_x p_x = P(Y=1) = p.
\]

For binary $Y$, we have
\[
\sum_{j=0}^1 P(Y=j\mid X)^2
= P(Y=1\mid X)^2 + P(Y=0\mid X)^2
= r(X)^2 + \{1-r(X)\}^2.
\]
Taking expectations over $X$,
\[
\mathbb{E}\!\left[\sum_{j=0}^1 P(Y=j\mid X)^2\right]
= \mathbb{E}\big[r(X)^2 + \{1-r(X)\}^2\big].
\]
Similarly, at the marginal level,
\[
\sum_{j=0}^1 P(Y=j)^2
= p^2 + (1-p)^2.
\]
Hence the numerator of $\xi'(X,Y)$ is
\begin{align*}
&\mathbb{E}\!\left[\sum_{j=0}^1 P(Y=j\mid X)^2\right]
- \sum_{j=0}^1 P(Y=j)^2 \\
&= \mathbb{E}\big[r(X)^2 + \{1-r(X)\}^2\big]
   - \big\{p^2 + (1-p)^2\big\} \\
&= 2\,\mathbb{E}[r(X)^2] - 2p^2.
\end{align*}
The denominator is
\[
1 - \sum_{j=0}^1 P(Y=j)^2
= 1 - \{p^2 + (1-p)^2\}
= 2p(1-p)
= 2\,\mathrm{Var}(Y),
\]
since $Y$ is Bernoulli$(p)$.  Consequently,
\begin{align*}
\xi'(X,Y)
&= \frac{2\,\mathbb{E}[r(X)^2] - 2p^2}{2p(1-p)}
 = \frac{\mathbb{E}[r(X)^2] - p^2}{p(1-p)}.
\end{align*}
On the other hand,
\[
\mathrm{Var}_X\big(\mathbb{E}[Y\mid X]\big)
= \mathrm{Var}_X(r(X))
= \mathbb{E}[r(X)^2] - \{\mathbb{E}[r(X)]\}^2
= \mathbb{E}[r(X)^2] - p^2,
\]
and \(\mathrm{Var}(Y) = p(1-p).\)
Therefore
\[
\xi'(X,Y)
= \frac{\mathrm{Var}_X(\mathbb{E}[Y\mid X])}{\mathrm{Var}(Y)},
\]
which is exactly the population $R^2$ from the linear regression of $Y$ on~$X$.

\medskip\noindent
\textbf{(b) Gini impurity representation.}
Now let $Y$ take values in $\{1,\dots,k\}$ (with $k\ge 2$). Recall the Gini impurities
\[
G(Y):=1-B,
\qquad
G(Y\mid X=x):=1-h(x).
\]
Then $A=1-E[G(Y\mid X)]$ and $B=1-G(Y)$. Hence the numerator of $\xi'(X,Y)$ is
\begin{align*}
A-B
&=\big\{1-E[G(Y\mid X)]\big\}-\{1-G(Y)\}\\
&=G(Y)-E[G(Y\mid X)].
\end{align*}
The denominator is
\[
1-B=G(Y).
\]
Therefore
\[
\xi'(X,Y)
= \frac{G(Y) - \mathbb{E}[G(Y\mid X)]}{G(Y)}
= \frac{G(Y) - E_X[G(Y\mid X)]}{G(Y)},
\]
which is the claimed Gini impurity reduction representation.  This identity holds 
for general $X$.
\end{proof}

\subsection{Consistency and inference under independence}

\subsubsection{Proof of Theorem~\ref{thm:consistency-xin}}
\label{app:proof_consistency}
\begin{lemma}
\label{lem:uniform-adjacent-increments}
Let \(U_1,U_2,\ldots\) be i.i.d. \(U(0,1)\) random variables, and let
\(r:(0,1)\to\mathbb R\) be a bounded Borel measurable function. Then
\[
    \frac1{n-1}
    \sum_{i=1}^{n-1}
    \left|r(U_{(i+1)})-r(U_{(i)})\right|
    \xrightarrow{a.s.}0,
\]
where \(U_{(1)}<\cdots<U_{(n)}\) are the order statistics of
\(U_1,\ldots,U_n\).
\end{lemma}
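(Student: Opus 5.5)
Following the Lusin-theorem strategy sketched after Eq.~\eqref{eq:xisample}, we approximate $r$ by a continuous function and control the error with the strong law of large numbers. Let $M:=\sup|r|<\infty$ and fix $\varepsilon>0$. By Lusin's theorem there is a compact set $C\subset(0,1)$ with $\lambda((0,1)\setminus C)<\varepsilon$ such that $r|_C$ is continuous, hence uniformly continuous on $C$. By Tietze's extension theorem we extend $r|_C$ to a continuous function $\tilde r$ on $[0,1]$ with $|\tilde r|\le M$. Then $\tilde r$ is uniformly continuous on $[0,1]$ and agrees with $r$ on $C$.

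We decompose via the triangle inequality:
\[
|r(U_{(i+1)})-r(U_{(i)})|
\le |\tilde r(U_{(i+1)})-\tilde r(U_{(i)})|
+2M\bigl(\mathbf 1\{U_{(i+1)}\notin C\}+\mathbf 1\{U_{(i)}\notin C\}\bigr).
\]
Averaging over $i$, the second part is at most
\[
\frac{4M}{n-1}\sum_{i=1}^n\mathbf 1\{U_i\notin C\},
\]
which converges almost surely to $4M\lambda(C^c)<4M\varepsilon$ by the SLLN.

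For the first part, let $\omega_{\tilde r}$ denote the modulus of continuity of $\tilde r$ and $D_n:=\max_i (U_{(i+1)}-U_{(i)})$. Each term is then bounded by $\omega_{\tilde r}(D_n)$. It therefore suffices to show $D_n\to0$ almost surely. This is the one genuinely probabilistic ingredient, and we expect it to be the main point needing care. Partition $(0,1)$ into $K$ intervals of length $1/K$. Almost surely, every one of these intervals eventually contains a sample point, since each has positive probability and the SLLN applies. Hence $\limsup D_n\le 2/K$ almost surely, and letting $K\to\infty$ along the integers (a countable intersection of almost-sure events) gives $D_n\to0$ almost surely. Consequently $\omega_{\tilde r}(D_n)\to0$ almost surely.

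Combining the two parts,
\[
\limsup_n\frac1{n-1}\sum_{i=1}^{n-1}|r(U_{(i+1)})-r(U_{(i)})|\le 4M\varepsilon
\]
almost surely. Taking $\varepsilon=1/m$ and intersecting the resulting almost-sure events over $m\in\mathbb N$ yields the claim. One subtlety is that the Lusin set $C$ depends on $\varepsilon$, but since only countably many $\varepsilon$ are used, this causes no difficulty. Note also that bounding each pair by its two endpoint indicators counts every point at most twice, which is where the factor $4M$ comes from.
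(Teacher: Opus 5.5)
Your proof is correct and follows essentially the same route as the paper: Lusin's theorem, almost-sure vanishing of the maximal spacing $\max_i (U_{(i+1)}-U_{(i)})$, and the SLLN bound $\frac{4M}{n-1}\sum_{i=1}^n \mathbf 1\{U_i\notin C\}$ for pairs with an endpoint outside $C$. The differences are cosmetic. The paper skips the Tietze extension and applies uniform continuity of $r$ on $C$ directly to pairs with both endpoints in $C$. It also cites a classical spacing result for the maximal-gap step, where you give a self-contained partition argument.
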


\begin{proof}
Fix \(\varepsilon>0\) and \(\eta>0\). Let $\lambda$ be the Lebesgue measure on $\mathbb{R}$ and \(\|r\|_\infty\le M\).

By Lusin's theorem \citep[Theorem~7.10]{Folland1999}, there exists a compact set \(C\subset(0,1)\) such that
\[
    \lambda((0,1) \setminus C)<\eta
\]
and \(r\) is continuous on $C$. Since \(C\) is compact, \(r\) is uniformly
continuous on $C$. Hence there exists \(\delta>0\), for $u,v \in C$, such that
\[
    |u-v|<\delta
    \quad\Longrightarrow\quad
    |r(u)-r(v)|<\varepsilon.
\]

By the classical spacing result for uniform order statistics \citep{Devroye1982}, we have
\[
    \max_{1\le i\le n-1}(U_{(i+1)}-U_{(i)})
    \xrightarrow{a.s.}0.
\]
Hence, almost surely, for all sufficiently large \(n\), whenever
\(U_{(i)},U_{(i+1)}\in C\), we have
\[
    |r(U_{(i+1)})-r(U_{(i)})|<\varepsilon.
\]
Thus, for all sufficiently large \(n\),
\[
\begin{aligned}
    \frac1{n-1}
    \sum_{i=1}^{n-1}
    |r(U_{(i+1)})-r(U_{(i)})|
    &\le
    \varepsilon
    +
    \frac{2M}{n-1}
    \#\{i:U_{(i)}\notin C\ \text{or}\ U_{(i+1)}\notin C\} \\
    &\le
    \varepsilon
    +
    \frac{4M}{n-1}
    \sum_{\ell=1}^n \mathbf 1\{U_\ell\notin C\}.
\end{aligned}
\]
By the strong law of large numbers,
\[
    \frac1n\sum_{\ell=1}^n \mathbf 1\{U_\ell\notin C\}
    \xrightarrow{a.s.}
    \lambda((0,1) \setminus C).
\]
Therefore,
\[
    \limsup_{n\to\infty}
    \frac1{n-1}
    \sum_{i=1}^{n-1}
    |r(U_{(i+1)})-r(U_{(i)})|
    \le
    \varepsilon+4M\eta
    \quad a.s.
\]
Since \(\varepsilon>0\) and \(\eta>0\) are arbitrary, the desired result follows.
\end{proof}

\begin{proof}[Proof of Theorem~\ref{thm:consistency-xin}]
Recall the notation $p_j$, $g_j$, $h$, $A$, $B$, $A_n$, $\hat p_j$, and $B_n$ from Section~\ref{sec:xiprime}. Then $\xi'=(A-B)/(1-B)$ and, on the event $\{B_n<1\}$, $\xi_n'=(A_n-B_n)/(1-B_n)$. Since \(Y\) is not almost surely constant, \(B<1.\)

The map
\[
    f(a,b):=\frac{a-b}{1-b}
\]
is continuous at \((A,B)\). Therefore, by the continuous mapping theorem, once we prove
\[
    A_n\xrightarrow{a.s.}A, \qquad B_n \xrightarrow{a.s.}B,
\]
we obtain
\[
    \xi_n'
    =
    f(A_n,B_n)
    \xrightarrow{a.s.}
    f(A,B)
    =
    \xi'.
\] 

By the strong law of large numbers,
\[
    \hat p_j\xrightarrow{a.s.}p_j,
    \qquad j=1,\ldots,k.
\]
Since \(k\) is fixed, it follows that
\[
    B_n
    =
    \sum_{j=1}^k \hat p_j^2
    \xrightarrow{a.s.}
    \sum_{j=1}^k p_j^2
    =
    B.
\]
Thus, it remains to prove that
\[
    A_n\xrightarrow{a.s.}A.
\]

Let \(F_X\) be the distribution function of \(X\). For \(x\in\mathbb R\), write
\[
    F_X(x-):=\lim_{t\uparrow x}F_X(t),
    \qquad
    \Delta F_X(x):=F_X(x)-F_X(x-).
\]
Let \(V_1,V_2,\ldots\) be i.i.d. \(U(0,1)\) random variables, independent of
\((X_i,Y_i)_{i\ge 1}\), and define
\[
    U_i
    :=
    F_X(X_i-)+V_i\Delta F_X(X_i),
    \qquad i\ge 1.
\]
Then \(U_1,U_2,\ldots\) are i.i.d and $U_i \sim U(0,1)$. Moreover, if
\[
    Q(u):=\inf\{x\in\mathbb R:F_X(x)\ge u\},
    \qquad 0<u<1,
\]
then \(X_i=Q(U_i)\) almost surely.
The ordering induced by \(U_i\) is the same as the ordering induced by \(X_i\),
except that ties in \(X_i\) are broken by the independent uniforms \(V_i\).
Therefore, sorting by \(U_i\) exactly represents sorting by \(X_i\) with
independent random tie-breaking.
Let \(\pi_n\) be the random permutation satisfying
\[
    U_{\pi_n(1)}<\cdots<U_{\pi_n(n)}.
\]
Since \(U_1,\ldots,U_n\) have a continuous joint distribution, ties among the
\(U_i\)'s occur with probability zero. Write
\[
    U_{(i)}:=U_{\pi_n(i)},
    \qquad
    Y_{(i)}:=Y_{\pi_n(i)}.
\]

Define
\[
    \tilde g_j(u):=g_j(Q(u)),
    \qquad 0<u<1,
\]
and
\[
    \tilde h(u):=\sum_{j=1}^k \tilde g_j(u)^2.
\]
Since \(Q\) is monotone and \(g_j\) is Borel measurable, \(\tilde g_j\) is Borel
measurable. Also, since \(X_i=Q(U_i)\) almost surely,
\[
    g_j(X_i)=\tilde g_j(U_i)
    \quad a.s.,
\]
and therefore
\[
    h(X_i)=\tilde h(U_i)
    \quad a.s.
\]

Let
\(
    \mathcal U_n:=\sigma(U_1,\ldots,U_n).
\)
Conditionally on \(\mathcal U_n\), the random variables
\(Y_{(1)},\ldots,Y_{(n)}\) are independent, and
\[
    P(Y_{(i)}=j\mid \mathcal U_n)
    =
    \tilde g_j(U_{(i)}).
\]
Therefore,
\[
\begin{aligned}
    P(Y_{(i)}=Y_{(i+1)}\mid \mathcal U_n)
    &=
    \sum_{j=1}^k
    P(Y_{(i)}=j\mid \mathcal U_n)
    P(Y_{(i+1)}=j\mid \mathcal U_n) \\
    &=
    \sum_{j=1}^k
    \tilde g_j(U_{(i)})
    \tilde g_j(U_{(i+1)}).
\end{aligned}
\]
Set
\(
    q_{i,n}
    :=
    \sum_{j=1}^k
    \tilde g_j(U_{(i)})
    \tilde g_j(U_{(i+1)}).
\)
Then
\(
    E[\mathbf 1\{Y_{(i)}=Y_{(i+1)}\}\mid \mathcal U_n]
    =
    q_{i,n}.
\)
Hence
\[
\begin{aligned}
    A_n
    &=
    \frac1{n-1}\sum_{i=1}^{n-1}
    \mathbf 1\{Y_{(i)}=Y_{(i+1)}\} \\
    &=
    \frac1{n-1}\sum_{i=1}^{n-1}q_{i,n}
    +
    \frac1{n-1}\sum_{i=1}^{n-1}D_{i,n},
\end{aligned}
\]
where
\(
    D_{i,n}
    :=
    \mathbf 1\{Y_{(i)}=Y_{(i+1)}\}-q_{i,n}.
\)

We first show that
\[
    \frac1{n-1}\sum_{i=1}^{n-1}D_{i,n}
    \xrightarrow{a.s.}0.
\]
For every \(i\),
\(
    E[D_{i,n}\mid \mathcal U_n]=0,
\) and \(|D_{i,n}|\le 1.\)
Let
\[
    \mathcal I_o:=\{i\in\{1,\ldots,n-1\}: i\ \text{is odd}\},
    \qquad
    \mathcal I_e:=\{i\in\{1,\ldots,n-1\}: i\ \text{is even}\}.
\]
Conditionally on \(\mathcal U_n\), the random variables
\(\{D_{i,n}:i\in\mathcal I_o\}\) are independent, because they depend on
disjoint pairs of labels. Similarly, \(\{D_{i,n}:i\in\mathcal I_e\}\) are
conditionally independent.
Define
\[
    S_n^o:=\frac1{n-1}\sum_{i\in\mathcal I_o}D_{i,n},
    \qquad
    S_n^e:=\frac1{n-1}\sum_{i\in\mathcal I_e}D_{i,n}.
\]
Then
\[
    \frac1{n-1}\sum_{i=1}^{n-1}D_{i,n}
    =
    S_n^o+S_n^e.
\]
By Hoeffding's inequality, for every \(\varepsilon>0\), there exists a constant \(c>0\) such that
\[
    P(|S_n^o|>\varepsilon\mid \mathcal U_n)
    \le
    2\exp(-c n\varepsilon^2),
\]
and
\[
    P(|S_n^e|>\varepsilon\mid \mathcal U_n)
    \le
    2\exp(-c n\varepsilon^2).
\]
Since the right-hand sides of the conditional bounds are deterministic and do not depend on \(\mathcal U_n\), taking expectations gives the corresponding unconditional bounds. Therefore, for some constant \(c'>0\),
\[
\begin{aligned}
    P\left(
        \left|
        \frac1{n-1}\sum_{i=1}^{n-1}D_{i,n}
        \right|
        >\varepsilon
    \right)
    &\le
    P(|S_n^o|>\varepsilon/2)
    +
    P(|S_n^e|>\varepsilon/2) \\
    &\le
    4\exp(-c'n\varepsilon^2).
\end{aligned}
\]
Since
\(
    \sum_{n=1}^\infty 4\exp(-c'n\varepsilon^2)<\infty,
\)
the Borel--Cantelli lemma gives
\[
    \frac1{n-1}\sum_{i=1}^{n-1}D_{i,n}
    \xrightarrow{a.s.}0.
\]

It remains to prove that
\[
    \frac1{n-1}\sum_{i=1}^{n-1}q_{i,n}
    \xrightarrow{a.s.}
    A.
\]
For each \(i\),
\[
\begin{aligned}
    q_{i,n}-\tilde h(U_{(i)})
    &=
    \sum_{j=1}^k
    \tilde g_j(U_{(i)})
    \tilde g_j(U_{(i+1)})
    -
    \sum_{j=1}^k
    \tilde g_j(U_{(i)})^2 \\
    &=
    \sum_{j=1}^k
    \tilde g_j(U_{(i)})
    \left\{
        \tilde g_j(U_{(i+1)})
        -
        \tilde g_j(U_{(i)})
    \right\}.
\end{aligned}
\]
Since \(0\le \tilde g_j\le 1\),
\[
\begin{aligned}
    |q_{i,n}-\tilde h(U_{(i)})|
    &\le
    \sum_{j=1}^k
    \left|
        \tilde g_j(U_{(i+1)})
        -
        \tilde g_j(U_{(i)})
    \right|.
\end{aligned}
\]
Therefore,
\[
\begin{aligned}
    \frac1{n-1}\sum_{i=1}^{n-1}
    |q_{i,n}-\tilde h(U_{(i)})|
    &\le
    \sum_{j=1}^k
    \frac1{n-1}\sum_{i=1}^{n-1}
    \left|
        \tilde g_j(U_{(i+1)})
        -
        \tilde g_j(U_{(i)})
    \right|.
\end{aligned}
\]
By Lemma~\ref{lem:uniform-adjacent-increments}, applied to
\(r=\tilde g_j\), we have
\[
    \frac1{n-1}\sum_{i=1}^{n-1}
    \left|
        \tilde g_j(U_{(i+1)})
        -
        \tilde g_j(U_{(i)})
    \right|
    \xrightarrow{a.s.}0
\]
for each \(j=1,\ldots,k\). Since \(k\) is fixed,
\[
    \frac1{n-1}\sum_{i=1}^{n-1}
    |q_{i,n}-\tilde h(U_{(i)})|
    \xrightarrow{a.s.}0.
\]
Hence
\[
    \frac1{n-1}\sum_{i=1}^{n-1}q_{i,n}
    -
    \frac1{n-1}\sum_{i=1}^{n-1}\tilde h(U_{(i)})
    \xrightarrow{a.s.}0.
\]

Now we identify the limit of \(\frac1{n-1}\sum_{i=1}^{n-1}\tilde h(U_{(i)}).\)
Since \(0\le \tilde h\le 1\),
\[
\begin{aligned}
    \left|
    \frac1{n-1}\sum_{i=1}^{n-1}\tilde h(U_{(i)})
    -
    \frac1n\sum_{i=1}^n \tilde h(U_i)
    \right|
    &\le
    \frac{2}{n-1}.
\end{aligned}
\]
By the strong law of large numbers,
\[
    \frac1n\sum_{i=1}^n \tilde h(U_i)
    \xrightarrow{a.s.}
    E[\tilde h(U)],
\]
where \(U\sim U(0,1)\). Since \(Q(U)\stackrel{d}=X\),
\[
\begin{aligned}
    E[\tilde h(U)]
    &=
    E\left[
        \sum_{j=1}^k g_j(Q(U))^2
    \right] \\
    &=
    E\left[
        \sum_{j=1}^k g_j(X)^2
    \right] \\
    &=
    E[h(X)] \\
    &=
    A.
\end{aligned}
\]
Therefore,
\[
    \frac1{n-1}\sum_{i=1}^{n-1}\tilde h(U_{(i)})
    \xrightarrow{a.s.}
    A.
\]

Combining the previous results yields
\[
    \frac1{n-1}\sum_{i=1}^{n-1}q_{i,n}
    \xrightarrow{a.s.}
    A.
\]
Together with \(\frac1{n-1}\sum_{i=1}^{n-1}D_{i,n} \xrightarrow{a.s.}0,\)
we conclude that
\[
    A_n\xrightarrow{a.s.}A.
\]

This completes the proof.
\end{proof}

\subsubsection{Proof of Theorem~\ref{thm:asymptotic-xin}}
\label{app:proof_asymptotic_normality}

\begin{proof}[Proof of Theorem~\ref{thm:asymptotic-xin}]
Recall that
\[
    A_n
    :=
    \frac1{n-1}
    \sum_{i=1}^{n-1}
    \mathbf 1\{Y_{(i+1)}=Y_{(i)}\},
    \qquad
    B_n
    :=
    \sum_{j=1}^k
    \left(
        \frac1n
        \sum_{i=1}^n
        \mathbf 1\{Y_i=j\}
    \right)^2,
\]
and
\[
    \xi_n'
    =
    \frac{A_n-B_n}{1-B_n}.
\]
Since $Y$ is not almost surely constant, $B<1$.

Let $\pi_n$ be the tie-broken random permutation such that
\[
    X_{\pi_n(1)}<\cdots<X_{\pi_n(n)}.
\]
Then
\[
    Y_{(i)}=Y_{\pi_n(i)},
    \qquad i=1,\ldots,n.
\]
Since $X$ and $Y$ are independent and the tie breaking randomization is independent of the data, $\pi_n$ is independent of $(Y_1,\ldots,Y_n)$. Therefore,
\[
    (Y_{(1)},\ldots,Y_{(n)})
    \overset{d}{=}
    (Y_1,\ldots,Y_n).
\]
Thus, in the rest of the proof, we may regard $Y_{(1)},\ldots,Y_{(n)}$ as an i.i.d.\ categorical sequence with probabilities $p_1,\ldots,p_k$.

For notational convenience, define
\[
    Z_i:=Y_{(i)},
    \qquad
    W_i:=\mathbf 1\{Z_{i+1}=Z_i\},
    \qquad
    i=1,\ldots,n-1.
\]
Also define
\[
    p_{Z_i}
    :=
    \sum_{j=1}^k p_j\mathbf 1\{Z_i=j\}.
\]
Then
\[
    E[W_i]=B,
    \qquad
    E[p_{Z_i}]=B.
\]

Since
\(
    A_n
    =
    \frac1{n-1}\sum_{i=1}^{n-1}W_i,
\)
we have
\[
    A_n-B
    =
    \frac1{n-1}
    \sum_{i=1}^{n-1}(W_i-B).
\]
The variables $W_i-B$ are bounded and 1-dependent. Hence, by Chebyshev's inequality, 
\[
    \sum_{i=1}^{n-1}(W_i-B)=O_p(\sqrt n),
\]
and consequently
\[
\begin{aligned}
    \sqrt n(A_n-B)
    &=
    \frac{\sqrt n}{n-1}
    \sum_{i=1}^{n-1}(W_i-B)
    \\
    &=
    \frac1{\sqrt n}
    \sum_{i=1}^{n-1}(W_i-B)
    +o_p(1).
\end{aligned}
\]

Since the ordering only permutes the labels,
\[
    \hat p_j
    =
    \frac1n\sum_{i=1}^n\mathbf 1\{Z_i=j\},
\]
where $\hat p_j$ is the empirical class proportion already used in the definition of $B_n$. We decompose
\[
\begin{aligned}
    B_n-B
    &=
    \sum_{j=1}^k(\hat p_j^2-p_j^2)
    \\
    &=
    2\sum_{j=1}^k p_j(\hat p_j-p_j)
    +
    \sum_{j=1}^k(\hat p_j-p_j)^2.
\end{aligned}
\]
Because $k$ is fixed and $\hat p_j-p_j=O_p(n^{-1/2})$ for each $j$ from the central limit theorem,
\[
    \sum_{j=1}^k(\hat p_j-p_j)^2
    =
    O_p(n^{-1}).
\]
Thus
\[
\begin{aligned}
    \sqrt n(B_n-B)
    &=
    \frac2{\sqrt n}
    \sum_{i=1}^n(p_{Z_i}-B)
    +o_p(1)
    \\
    &=
    \frac2{\sqrt n}
    \sum_{i=1}^{n-1}(p_{Z_i}-B)
    +o_p(1),
\end{aligned}
\]
where the last equality follows because $p_{Z_n}-B$ is bounded.

Therefore,
\[
    \sqrt n
    \begin{pmatrix}
        A_n-B\\
        B_n-B
    \end{pmatrix}
    =
    \frac1{\sqrt n}
    \sum_{i=1}^{n-1}
    S_i
    +o_p(1),
\]
where
\[
    S_i
    :=
    \begin{pmatrix}
        W_i-B\\
        2(p_{Z_i}-B)
    \end{pmatrix},
    \qquad
    i=1,\ldots,n-1.
\]
The sequence $(S_i)_{i\ge1}$ is bounded, and 1-dependent.

By the multivariate central limit theorem for 1-dependent sequences \citep{HoeffdingRobbins1948},
\[
    \frac1{\sqrt n}
    \sum_{i=1}^{n-1}S_i
    \xrightarrow{d}
    N_2(0,\Gamma),
\]
where
\[
    \Gamma
    =
    \operatorname{Var}(S_1)
    +
    \operatorname{Cov}(S_1,S_2)
    +
    \operatorname{Cov}(S_2,S_1).
\]
We now compute $\Gamma$ explicitly. First,
\[
    E[W_1]
    =
    P(Z_1=Z_2)
    =
    \sum_{j=1}^k p_j^2
    =
    B.
\]
Moreover,
\[
    E[W_1W_2]
    =
    P(Z_1=Z_2=Z_3)
    =
    \sum_{j=1}^k p_j^3
    =
    \rho.
\]
Therefore,
\[
    \operatorname{Var}(W_1)
    =
    B-B^2,
    \qquad
    \operatorname{Cov}(W_1,W_2)
    =
    \rho-B^2.
\]
It follows that
\[
    \Gamma_{11}
    =
    B-B^2+2(\rho-B^2)
    =
    B+2\rho-3B^2.
\]

Next,
\[
    E[p_{Z_1}^2]
    =
    \sum_{j=1}^k p_j^3
    =
    \rho.
\]
Since $Z_1$ and $Z_2$ are independent,
\[
    \operatorname{Cov}(p_{Z_1},p_{Z_2})=0.
\]
Thus
\[
    \Gamma_{22}
    =
    4\operatorname{Var}(p_{Z_1})
    =
    4(\rho-B^2).
\]
Finally,
\[
\begin{aligned}
    E[W_1p_{Z_1}]
    &=
    E[\mathbf 1\{Z_1=Z_2\}p_{Z_1}]
    =
    \sum_{j=1}^k p_j^3
    =
    \rho,
    \\
    E[W_1p_{Z_2}]
    &=
    E[\mathbf 1\{Z_1=Z_2\}p_{Z_2}]
    =
    \sum_{j=1}^k p_j^3
    =
    \rho.
\end{aligned}
\]
Also, $W_2$ is independent of $Z_1$, and hence
\[
    \operatorname{Cov}(W_2,p_{Z_1})=0.
\]
Therefore,
\[
\begin{aligned}
    \Gamma_{12}
    &=
    \operatorname{Cov}(W_1,2p_{Z_1})
    +
    \operatorname{Cov}(W_1,2p_{Z_2})
    +
    \operatorname{Cov}(W_2,2p_{Z_1})
    \\
    &=
    2(\rho-B^2)+2(\rho-B^2)+0
    \\
    &=
    4(\rho-B^2).
\end{aligned}
\]
Similarly, $\Gamma_{21}=4(\rho-B^2)$.

Define
\[
    \sigma_B^2
    :=
    4(\rho-B^2),
    \qquad
    \sigma^2
    :=
    B(1+B)-2\rho.
\]
Then
\[
    \Gamma
    =
    \begin{pmatrix}
        \sigma^2+\sigma_B^2 & \sigma_B^2\\
        \sigma_B^2 & \sigma_B^2
    \end{pmatrix}.
\]
Indeed,
\[
    \sigma^2+\sigma_B^2
    =
    B(1+B)-2\rho+4(\rho-B^2)
    =
    B+2\rho-3B^2
    =
    \Gamma_{11}.
\]
Consequently,
\[
    \sqrt n
    \begin{pmatrix}
        A_n-B\\
        B_n-B
    \end{pmatrix}
    \xrightarrow{d}
    N_2
    \left(
        0,
        \begin{pmatrix}
            \sigma^2+\sigma_B^2 & \sigma_B^2\\
            \sigma_B^2 & \sigma_B^2
        \end{pmatrix}
    \right).
\]

Define
\[
    f(a,b)
    :=
    \frac{a-b}{1-b}.
\]
Then $\xi_n'=f(A_n,B_n)$. The gradient is
\[
    \nabla f(a,b)
    =
    \begin{pmatrix}
        \dfrac1{1-b}\\[1em]
        \dfrac{a-1}{(1-b)^2}
    \end{pmatrix}.
\]
Under independence, the corresponding population value of both $A_n$ and $B_n$ is $B$, and hence the gradient is evaluated at $(B,B)$:
\[
    \nabla f(B,B)
    =
    \begin{pmatrix}
        \dfrac1{1-B}\\[1em]
        -\dfrac1{1-B}
    \end{pmatrix}.
\]
By the delta method,
\[
    \sqrt n\,\xi_n'
    \xrightarrow{d}
    N(0,\kappa^2),
\]
where
\[
\begin{aligned}
    \kappa^2
    &=
    \nabla f(B,B)^T
    \Gamma
    \nabla f(B,B)
    \\
    &=
    \frac1{(1-B)^2}
    \left\{
        (\sigma^2+\sigma_B^2)
        +
        \sigma_B^2
        -
        2\sigma_B^2
    \right\}
    \\
    &=
    \frac{\sigma^2}{(1-B)^2}.
\end{aligned}
\]
A direct expansion gives
\[
\sigma^2
=
\sum_{a=1}^k\sum_{b=1}^k
p_ap_b
\left(
\mathbf 1\{a=b\}-p_a-p_b+B
\right)^2.
\]
Thus, $\sigma^2\ge0$. If $\sigma^2=0$, then, for every $j$ with $p_j>0$, the term corresponding to $a=b=j$ implies
\[
1-2p_j+B=0,
\qquad\text{or equivalently}\qquad
p_j=\frac{1+B}{2}>\frac12.
\]
Since $Y$ is not almost surely constant, at least two class probabilities are positive, which is impossible. Therefore, $\sigma^2>0$, and since $B<1$, we have $0<\kappa^2<\infty$.

The estimator $\hat\kappa^2$ is obtained by replacing $p_j$ with $\hat p_j$ in the expression for $\kappa^2$. Since
\[
    \hat p_j\xrightarrow{a.s.}p_j,
    \qquad j=1,\ldots,k,
\]
by the strong law of large numbers, and since $Y$ is not almost surely constant, $1-B>0$. Therefore the map
\[
    (p_1,\ldots,p_k)
    \mapsto
    \frac{
        \sum_{j=1}^k p_j^2(1+\sum_{\ell=1}^k p_\ell^2)
        -
        2\sum_{j=1}^k p_j^3
    }{
        (1-\sum_{j=1}^k p_j^2)^2
    }
\]
is continuous at $(p_1,\ldots,p_k)$. Hence, by the continuous mapping theorem,
\[
    \hat\kappa^2\xrightarrow{p}\kappa^2.
\]
Finally, Slutsky's theorem gives
\[
    Z_n
    =
    \frac{\sqrt n\,\xi_n'}{\hat\kappa}
    =
    \frac{\sqrt n\,\xi_n'}{\kappa}
    \cdot
    \frac{\kappa}{\hat\kappa}
    \xrightarrow{d}
    N(0,1).
\]
\end{proof}
\subsubsection{Proof of Corollaries~\ref{cor:type1_error} and~\ref{cor:power_consistency}}
\label{subsec:proof-corollaries}
\begin{proof}[Proof of Corollary~\ref{cor:type1_error}]
Assume the null hypothesis $H_0: X \perp Y$ holds.
Let $\Phi(\cdot)$ denote the cumulative distribution function (CDF) of the standard normal distribution, and let $Z \sim N(0,1)$.

According to Theorem~\ref{thm:asymptotic-xin} and $Z_n^{\mathrm{c}}-Z_n=o_p(1)$, the centered statistic converges in distribution to the standard normal distribution $Z$ under $H_0$:
\[
Z_n^{\mathrm{c}} \xrightarrow{d} Z \sim N(0,1).
\]
By the definition of convergence in distribution, this implies that
\[
\lim_{n\to\infty} P_{H_0}(Z_n^{\mathrm{c}} \le x) = \Phi(x)
\]
for every point $x \in \mathbb{R}$. Since the standard normal CDF $\Phi$ is continuous everywhere, this convergence holds specifically at $x = z_{1-\alpha}$.
For a finite sample size $n$, the size of the test is:
\[
P_{H_0}(\phi_n = 1) = P_{H_0}(Z_n^{\mathrm{c}} > z_{1-\alpha}) = 1 - P_{H_0}(Z_n^{\mathrm{c}} \le z_{1-\alpha}).
\]
Taking the limit as $n \to \infty$, we obtain the asymptotic size:
\[
\begin{aligned}
\lim_{n\to\infty} P_{H_0}(\phi_n = 1)
&= \lim_{n\to\infty} \left[ 1 - P_{H_0}(Z_n^{\mathrm{c}} \le z_{1-\alpha}) \right] \\
&= 1 - \Phi(z_{1-\alpha}) \\
&= 1 - (1 - \alpha) \\
&= \alpha.
\end{aligned}
\]
Thus, the test controls the type I error asymptotically at level $\alpha$.
\end{proof}
\medskip\noindent
\begin{proof}[Proof of Corollary~\ref{cor:power_consistency}]
Assume $\xi' > 0$.
First, consider the asymptotic behavior of the ratio $\{\xi_n'+1/(n-1)\}/\hat\kappa$.
By Theorem~\ref{thm:consistency-xin}, the numerator converges in probability to the population value $\xi'$ and the denominator converges in probability to the population value $\kappa$.
Since $Y$ is not almost surely constant, the positivity argument in the proof of Theorem~\ref{thm:asymptotic-xin} gives $0<\kappa<\infty$.
By Slutsky's theorem, the ratio converges to a positive constant:
\[
\frac{\xi_n'+1/(n-1)}{\hat\kappa} \xrightarrow{p} \frac{\xi'}{\kappa} := c > 0.
\]

Next, consider the power for a fixed significance level $\alpha$. The rejection condition is $Z_n^{\mathrm{c}} > z_{1-\alpha}$. We can rewrite this inequality as:
\[
\sqrt{n} \frac{\xi_n'+1/(n-1)}{\hat\kappa} > z_{1-\alpha} \quad \iff \quad \frac{\xi_n'+1/(n-1)}{\hat\kappa} > \frac{z_{1-\alpha}}{\sqrt{n}}.
\]
Thus, the power can be written as:
\[
P_{H_1}(\phi_n = 1) = P_{H_1}\left( \frac{\xi_n'+1/(n-1)}{\hat\kappa} > \frac{z_{1-\alpha}}{\sqrt{n}} \right).
\]
Let $W_n := \frac{\xi_n'+1/(n-1)}{\hat\kappa} - \frac{z_{1-\alpha}}{\sqrt{n}}$.
Since $\frac{z_{1-\alpha}}{\sqrt{n}} \to 0$ as $n \to \infty$, it follows that $W_n \xrightarrow{p} c$.
Because $c > 0$, for any sufficiently small $0<\epsilon<c$, the event $\{ |W_n - c| < \epsilon \}$ implies $\{ W_n > 0 \}$.
By the definition of convergence in probability, $\lim_{n\to\infty} P(|W_n - c| < \epsilon) = 1$.
Therefore,
\[
\lim_{n\to\infty} P_{H_1}(W_n > 0) \ge \lim_{n\to\infty} P(|W_n - c| < \epsilon) = 1.
\]
This concludes that the power converges to 1.
\end{proof}

\subsection{Inference under general dependence}

\subsubsection{Proof of Theorem~\ref{thm:general_clt}}
\label{app:proof_general_clt}
\begin{lemma}
\label{lem:ordered-bv-bound}
Suppose Assumption~\ref{asm:general_clt} holds. Then, for each fixed integer $r\ge1$,
\[
    \max_{1\le j\le k}
    \sum_{i=1}^{n-r}
    |g_j(X_{(i+r)})-g_j(X_{(i)})|
    =o_p(\sqrt n).
\]
\end{lemma}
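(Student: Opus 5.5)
The plan is to work on the event $E_n:=\{X_1,\ldots,X_n\in I_n\}$, where the bound becomes deterministic. By a union bound and the first condition of Assumption~\ref{asm:general_clt},
\[
P(E_n^c)\le nP(X\notin I_n)\to0.
\]
So it suffices to show that, on $E_n$, the sum is bounded by $rV_{I_n}(g_j)$ for every $j$. Then the maximum over the finitely many $j$ is at most $r\max_j V_{I_n}(g_j)=o(\sqrt n)$ by the second condition. A random quantity bounded by a deterministic $o(\sqrt n)$ sequence outside an event of vanishing probability is $o_p(\sqrt n)$: for any $\varepsilon>0$ and large $n$, the probability that it exceeds $\varepsilon\sqrt n$ is at most $P(E_n^c)\to0$.

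For the deterministic bound, fix $j$ and $r$, and split the indices $i\in\{1,\ldots,n-r\}$ into residue classes modulo $r$. For a fixed class $s\in\{1,\ldots,r\}$, the chain
\[
X_{(s)}\le X_{(s+r)}\le X_{(s+2r)}\le\cdots
\]
is a nondecreasing sequence of points in $I_n$ on $E_n$. The sum over this class is
\[
\sum_{m}|g_j(X_{(s+(m+1)r)})-g_j(X_{(s+mr)})|.
\]
Adjacent terms with equal $X$-values contribute zero, because the values of $g_j$ coincide; this takes care of ties, which the random tie-breaking does not otherwise affect. After deleting repeated values, the remaining points are strictly increasing in $I_n$, so the sum is at most $V_{I_n}(g_j)$ by the definition of variation. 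Summing over the $r$ classes gives the bound $rV_{I_n}(g_j)$.

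The only subtle point is the tie handling in this argument, and it is immediate. Throughout, I use the fixed Borel versions supplied by Assumption~\ref{asm:general_clt}, so that $g_j(X_{(i)})$ is evaluated pointwise with the same version.
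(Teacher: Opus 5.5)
Your proposal is correct and follows the paper's proof. Both use the same union bound to place all observations in $I_n$ with probability tending to one, and on that event bound the lag-$r$ sum by $rV_{I_n}(g_j)=o(\sqrt n)$; the only difference is that you get the factor $r$ by splitting the indices into $r$ interleaved monotone chains, whereas the paper dominates the lag-$r$ sum by $r\sum_{i=1}^{n-1}|g_j(X_{(i+1)})-g_j(X_{(i)})|$ via the triangle inequality, and your explicit handling of ties is a point the paper leaves implicit.
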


\begin{proof}
By Assumption~\ref{asm:general_clt},
\[
    P\left(\text{there exists }i\le n\text{ such that }X_i\notin I_n\right)
    \le
    \sum_{i=1}^nP(X_i\notin I_n)
    =
    nP(X\notin I_n)
    \to0.
\]
Hence, with probability tending to one, all observations lie in $I_n$. On this event, for each fixed $r\ge1$,
\[
\begin{aligned}
    \sum_{i=1}^{n-r}|g_j(X_{(i+r)})-g_j(X_{(i)})|
    &\le
    r\sum_{i=1}^{n-1}|g_j(X_{(i+1)})-g_j(X_{(i)})|\\
    &\le
    rV_{I_n}(g_j).
\end{aligned}
\]
Taking the maximum over $j$ gives
\[
    \max_{1\le j\le k}
    \sum_{i=1}^{n-r}|g_j(X_{(i+r)})-g_j(X_{(i)})|
    \le
    r\max_{1\le j\le k}V_{I_n}(g_j)
    =o(\sqrt n)
\]
on an event whose probability tends to one. Hence the bound becomes \(o_p(\sqrt n).\)
\end{proof}

\begin{proof}[Proof of Theorem~\ref{thm:general_clt}]
Let
\(
    \mathcal X_n:=\sigma(X_1,\ldots,X_n).
\)
Since ties in \(X\) are broken uniformly at random independently of the labels, and since labels with the same \(X\)-value have the same conditional distribution, the ordered labels remain conditionally independent given \(\mathcal X_n\). Moreover,
\[
    P(Y_{(i)}=j\mid\mathcal X_n)
    =
    g_j(X_{(i)}),
    \qquad
    i=1,\ldots,n,\quad j=1,\ldots,k.
\]
For notational convenience, write
\[
    g_{ij}:=g_j(X_{(i)}),
    \qquad
    h_i:=h(X_{(i)}),
    \qquad
    m_i:=m(X_{(i)}),
    \qquad
    \tau_i:=\tau(X_{(i)}).
\]

Define
\[
    W_i:=\mathbf 1\{Y_{(i+1)}=Y_{(i)}\},
    \qquad
    i=1,\ldots,n-1.
\]
Then
\[
\begin{aligned}
    \mu_i
    &:={E}[W_i\mid\mathcal X_n]\\
    &=P(Y_{(i+1)}=Y_{(i)}\mid\mathcal X_n)\\
    &=\sum_{j=1}^k
    P(Y_{(i)}=j,Y_{(i+1)}=j\mid\mathcal X_n)\\
    &=\sum_{j=1}^k g_{ij}g_{i+1,j}.
\end{aligned}
\]
Let
\[
    M_n:={E}[A_n\mid\mathcal X_n]
    =
    \frac1{n-1}\sum_{i=1}^{n-1}\mu_i.
\]
We decompose
\[
\begin{aligned}
    A_n-A
    &=
    (A_n-M_n)
    +
    \left\{
        \frac1n\sum_{i=1}^n h(X_i)-A
    \right\}+
    \left\{
        M_n-\frac1n\sum_{i=1}^n h(X_i)
    \right\}.
\end{aligned}
\]

We first show that the last term in the above decomposition is negligible on the $\sqrt n$ scale.
\[
\begin{aligned}
    \mu_i-h_i
    &=
    \sum_{j=1}^k g_{ij}g_{i+1,j}
    -
    \sum_{j=1}^k g_{ij}^2\\
    &=
    \sum_{j=1}^k g_{ij}(g_{i+1,j}-g_{ij}),
\end{aligned}
\]
and $0\le g_{ij}\le1$,
\[
    |\mu_i-h_i|
    \le
    \sum_{j=1}^k |g_{i+1,j}-g_{ij}|.
\]
Therefore, by Lemma~\ref{lem:ordered-bv-bound} with $r=1$,
\[
\begin{aligned}
    \left|
        \frac1{n-1}\sum_{i=1}^{n-1}\mu_i
        -
        \frac1{n-1}\sum_{i=1}^{n-1}h_i
    \right|
    &\le
    \frac1{n-1}
    \sum_{j=1}^k\sum_{i=1}^{n-1}|g_{i+1,j}-g_{ij}|\\
    &=
    o_p(n^{-1/2}).
\end{aligned}
\]
Also, since $0\le h_i\le1$,
\[
\begin{aligned}
    \left|
        \frac1{n-1}\sum_{i=1}^{n-1}h_i
        -
        \frac1n\sum_{i=1}^n h_i
    \right|
    &=
    \left|
        \frac1{n(n-1)}\sum_{i=1}^{n-1}h_i
        -
        \frac1n h_n
    \right|\\
    &\le
    \frac2n
    =O(n^{-1}).
\end{aligned}
\]
Hence
\[
    M_n-\frac1n\sum_{i=1}^n h(X_i)
    =o_p(n^{-1/2}),
\]
and so
\[
    \sqrt n
    \left\{
        M_n-\frac1n\sum_{i=1}^n h(X_i)
    \right\}
    =o_p(1).
\]

Next, we analyze the first term of the decomposition
\[
    A_n-M_n
    =
    \frac1{n-1}\sum_{i=1}^{n-1}(W_i-\mu_i).
\]
Conditional on $\mathcal X_n$, the variables $W_i-\mu_i$ and $W_\ell-\mu_\ell$ are independent whenever $|i-\ell|>1$. Also $|W_i-\mu_i|\le1$. Therefore,
\[
\begin{aligned}
    \operatorname{Var}\left(
        \sum_{i=1}^{n-1}(W_i-\mu_i)
        \middle|\mathcal X_n
    \right)
    &\le
    \sum_{i=1}^{n-1}\operatorname{Var}(W_i\mid\mathcal X_n)\\
    &\quad+
    2\sum_{i=1}^{n-2}
    \left|
        \operatorname{Cov}(W_i,W_{i+1}\mid\mathcal X_n)
    \right|\\
    &\le
    (n-1)+2(n-2)\\
    &\le 3n.
\end{aligned}
\]
By Chebyshev's inequality, since the right-hand side is deterministic,
\[
    \sum_{i=1}^{n-1}(W_i-\mu_i)=O_p(\sqrt n).
\]
Consequently,
\[
\begin{aligned}
    \sqrt n(A_n-M_n)
    &=
    \frac{\sqrt n}{n-1}
    \sum_{i=1}^{n-1}(W_i-\mu_i)\\
    &=
    \frac1{\sqrt n}
    \sum_{i=1}^{n-1}(W_i-\mu_i)
    +o_p(1).
\end{aligned}
\]
Thus
\[
    \sqrt n(A_n-A)
    =
    \frac1{\sqrt n}\sum_{i=1}^{n-1}(W_i-\mu_i)
    +
    \frac1{\sqrt n}\sum_{i=1}^n\{h(X_i)-A\}
    +o_p(1).
\]

We now decompose $B_n-B$. Since
\[
    B_n-B=\sum_{j=1}^k(\hat p_j^2-p_j^2),
\]
we have
\[
\begin{aligned}
    B_n-B
    &=
    \sum_{j=1}^k
    \left\{
        2p_j(\hat p_j-p_j)+(\hat p_j-p_j)^2
    \right\}.
\end{aligned}
\]
By the multivariate central limit theorem,
\[
    \hat p_j-p_j=O_p(n^{-1/2}),
    \qquad j=1,\ldots,k.
\]
Since $k$ is fixed,
\[
    \sqrt n\sum_{j=1}^k(\hat p_j-p_j)^2
    =O_p(n^{-1/2})=o_p(1).
\]
Furthermore, writing $p_{Y_i} := \sum_{j=1}^k p_j \mathbf{1}\{Y_i =j \}$, we have
\[
\begin{aligned}
    2\sum_{j=1}^k p_j(\hat p_j-p_j)
    &=
    \frac2n\sum_{i=1}^n
    \left\{
        \sum_{j=1}^k p_j\mathbf 1\{Y_i=j\}
        -
        \sum_{j=1}^k p_j^2
    \right\}\\
    &=
    \frac2n\sum_{i=1}^n(p_{Y_i}-B).
\end{aligned}
\]
Therefore
\[
    \sqrt n(B_n-B)
    =
    \frac2{\sqrt n}\sum_{i=1}^n(p_{Y_i}-B)+o_p(1).
\]
Since
\[
    E[p_Y\mid X]
    =
    \sum_{j=1}^k p_jP(Y=j\mid X)
    =
    m(X),
\]
we write
\[
    p_{Y_i}-B
    =
    \{m(X_i)-B\}+\{p_{Y_i}-m(X_i)\}.
\]
So, with the equation $\sum_{i=1}^n m(X_i) = \sum_{i=1}^n m(X_{(i)})$, we have
\[
\begin{aligned}
    \sqrt n(B_n-B)
    &=
    \frac2{\sqrt n}\sum_{i=1}^n\{m(X_i)-B\}\\
    &\quad+
    \frac2{\sqrt n}\sum_{i=1}^n\{p_{Y_{(i)}}-m(X_{(i)})\}
    +o_p(1).
\end{aligned}
\]

Define
\[
    D_n
    :=
    \frac1{\sqrt n}
    \sum_{i=1}^n
    \begin{pmatrix}
        h(X_i)-A\\
        2\{m(X_i)-B\}
    \end{pmatrix},
\]
which is the $\mathcal X_n$-measurable component. Also define
\[
    L_n
    :=
    \frac1{\sqrt n}
    \begin{pmatrix}
        \displaystyle\sum_{i=1}^{n-1}(W_i-\mu_i)\\[1.2em]
        \displaystyle 2\sum_{i=1}^n\{p_{Y_{(i)}}-m(X_{(i)})\}
    \end{pmatrix},
\]
which is the label--dependent component. Then
\[
    \sqrt n
    \begin{pmatrix}
        A_n-A\\
        B_n-B
    \end{pmatrix}
    =
    D_n+L_n+o_p(1).
\]

We first handle $D_n$. Let
\[
    Z_i
    :=
    \begin{pmatrix}
        h(X_i)-A\\
        2\{m(X_i)-B\}
    \end{pmatrix}.
\]
Then $Z_1,\ldots,Z_n$ are i.i.d. mean--zero random vectors. Since $0\le h(X)\le1$ and $0\le m(X)\le1$, each component of $Z_i$ is bounded. Hence, by the multivariate central limit theorem ~\citep{HoeffdingRobbins1948},
\[
    D_n
    =
    \frac1{\sqrt n}\sum_{i=1}^n Z_i
    \xrightarrow{d}
    N_2(0,\Sigma_D),
\]
where
\[
    \Sigma_D
    =
    \operatorname{Var}
    \begin{pmatrix}
        h(X)\\
        2m(X)
    \end{pmatrix}.
\]
Equivalently,
\[
    \Sigma_D
    =
    \begin{pmatrix}
        \operatorname{Var}(h(X))
        &
        2E[\{h(X)-A\}\{m(X)-B\}]\\[0.4em]
        2E[\{h(X)-A\}\{m(X)-B\}]
        &
        4\operatorname{Var}(m(X))
    \end{pmatrix}.
\]

We next compute the conditional covariance matrix of $L_n$. Write
\[
    L_n=
    \begin{pmatrix}
        L_{n,1}\\
        L_{n,2}
    \end{pmatrix},
\]
where
\[
    L_{n,1}:=\frac1{\sqrt n}\sum_{i=1}^{n-1}(W_i-\mu_i),
    \qquad
    L_{n,2}:=\frac2{\sqrt n}\sum_{i=1}^n(p_{Y_{(i)}}-m_i).
\]

By construction,
\[
    E[L_n\mid\mathcal X_n]=0.
\]
Conditional on $\mathcal{X}_n$, $W_i, W_\ell$ are independent whenever $|i-\ell| > 1$. Hence,
\[
\begin{aligned}
    \operatorname{Var}(L_{n,1}\mid\mathcal X_n)
    &=
    \frac1n\sum_{i=1}^{n-1}\operatorname{Var}(W_i\mid\mathcal X_n)\\
    &\quad+
    \frac2n\sum_{i=1}^{n-2}\operatorname{Cov}(W_i,W_{i+1}\mid\mathcal X_n).
\end{aligned}
\]
Since $W_i\mid\mathcal X_n\sim\operatorname{Bernoulli}(\mu_i)$,
\[
    \operatorname{Var}(W_i\mid\mathcal X_n)=\mu_i(1-\mu_i).
\]
The map $x\mapsto x(1-x)$ is 1--Lipschitz on $[0,1]$. Therefore,
\[
    |\mu_i(1-\mu_i)-h_i(1-h_i)|
    \le
    |\mu_i-h_i|.
\]
Hence, by Lemma~\ref{lem:ordered-bv-bound},
\[
\begin{aligned}
    \frac1n\sum_{i=1}^{n-1}
    |\mu_i(1-\mu_i)-h_i(1-h_i)|
    &\le
    \frac1n\sum_{j=1}^k\sum_{i=1}^{n-1}|g_{i+1,j}-g_{ij}|\\
    &=o_p(1).
\end{aligned}
\]
Since $h(X_i)(1-h(X_i))$ is bounded, the strong law of large numbers gives
\[
    \frac1n\sum_{i=1}^n h(X_i)(1-h(X_i))
    \xrightarrow{a.s.}
    E[h(X)(1-h(X))].
\]
Also,
\[
    \frac1n\sum_{i=1}^{n-1}h_i(1-h_i)
    =
    \frac1n\sum_{i=1}^n h(X_i)(1-h(X_i))+O(n^{-1}).
\]
Therefore,
\[
    \frac1n\sum_{i=1}^{n-1}\operatorname{Var}(W_i\mid\mathcal X_n)
    \xrightarrow{p}
    E[h(X)(1-h(X))].
\]

Next,
\[
\begin{aligned}
    E[W_iW_{i+1}\mid\mathcal X_n]
    &=
    P(Y_{(i)}=Y_{(i+1)}=Y_{(i+2)}\mid\mathcal X_n)\\
    &=
    \sum_{j=1}^k g_{ij}g_{i+1,j}g_{i+2,j}.
\end{aligned}
\]
Thus
\[
    \operatorname{Cov}(W_i,W_{i+1}\mid\mathcal X_n)
    =
    \sum_{j=1}^k g_{ij}g_{i+1,j}g_{i+2,j}
    -
    \mu_i\mu_{i+1}.
\]
For $0\le a,b,c\le1$,
\[
\begin{aligned}
    |abc-a^3|
    &\le
    |abc-a^2c|+|a^2c-a^3|\\
    &=
    |ac||b-a|+a^2|c-a|\\
    &\le
    |b-a|+|c-a|.
\end{aligned}
\]
Therefore, by Lemma~\ref{lem:ordered-bv-bound} with $r=1,2$,
\[
\begin{aligned}
    &\frac1n\sum_{i=1}^{n-2}
    \left|
        \sum_{j=1}^k g_{ij}g_{i+1,j}g_{i+2,j}
        -
        \sum_{j=1}^k g_{ij}^3
    \right|\\
    &\quad\le
    \frac1n\sum_{j=1}^k\sum_{i=1}^{n-2}
    \{|g_{i+1,j}-g_{ij}|+|g_{i+2,j}-g_{ij}|\}\\
    &=o_p(1).
\end{aligned}
\]
Since $\tau(X_i)$ is bounded,
\[
    \frac1n\sum_{i=1}^n\tau(X_i)
    \xrightarrow{a.s.}
    E[\tau(X)],
\]
and hence
\[
    \frac1n\sum_{i=1}^{n-2}\sum_{j=1}^k g_{ij}g_{i+1,j}g_{i+2,j}
    \xrightarrow{p}
    E[\tau(X)].
\]

We also compare $\mu_i\mu_{i+1}$ with $h_i^2$. Since $0\le\mu_i,h_i\le1$,
\[
\begin{aligned}
    |\mu_i\mu_{i+1}-h_i^2|
    &\le
    |\mu_i-h_i|+|\mu_{i+1}-h_i|\\
    &\le
    |\mu_i-h_i|+|\mu_{i+1}-h_{i+1}|+|h_{i+1}-h_i|.
\end{aligned}
\]
Moreover,
\[
\begin{aligned}
    |h_{i+1}-h_i|
    &=
    \left|
        \sum_{j=1}^k g_{i+1,j}^2
        -
        \sum_{j=1}^k g_{ij}^2
    \right|\\
    &\le
    \sum_{j=1}^k |g_{i+1,j}^2-g_{ij}^2|\\
    &\le
    2\sum_{j=1}^k |g_{i+1,j}-g_{ij}|.
\end{aligned}
\]
Applying Lemma~\ref{lem:ordered-bv-bound} gives
\[
    \frac1n\sum_{i=1}^{n-2}|\mu_i\mu_{i+1}-h_i^2|
    =o_p(1).
\]
Since $h(X_i)^2$ is bounded,
\[
    \frac1n\sum_{i=1}^n h(X_i)^2
    \xrightarrow{a.s.}
    E[h(X)^2],
\]
and so
\[
    \frac1n\sum_{i=1}^{n-2}\mu_i\mu_{i+1}
    \xrightarrow{p}
    E[h(X)^2].
\]
Combining the preceding displays,
\[
\begin{aligned}
    \operatorname{Var}(L_{n,1}\mid\mathcal X_n)
    &\xrightarrow{p}
    E[h(X)(1-h(X))]
    +2\{E[\tau(X)]-E[h(X)^2]\}\\
    &=
    E[h(X)-3h(X)^2+2\tau(X)].
\end{aligned}
\]

Then, consider the second term $L_{n,2}$. Since $Y_{(1)},\ldots,Y_{(n)}$ are conditionally independent given $\mathcal X_n$,
\[
\begin{aligned}
    \operatorname{Var}(L_{n,2}\mid\mathcal X_n)
    &=
    \frac4n\sum_{i=1}^n
    \operatorname{Var}(p_{Y_{(i)}}\mid\mathcal X_n)\\
    &=
    \frac4n\sum_{i=1}^n
    \left\{
        \sum_{j=1}^k p_j^2g_{ij}-m_i^2
    \right\}.
\end{aligned}
\]
The summand is a bounded function of $X_{(i)}$, and ordering only permutes the sample. Hence the strong law of large numbers gives
\[
    \operatorname{Var}(L_{n,2}\mid\mathcal X_n)
    \xrightarrow{a.s.}
    4E\left[
        \sum_{j=1}^k p_j^2g_j(X)-m(X)^2
    \right].
\]

We next compute the conditional covariance between $L_{n,1}$ and $L_{n,2}$. Conditional on $\mathcal{X}_n$, $W_i-\mu_i$ depends only on $Y_{(i)}$ and $Y_{(i+1)}$, while $p_{Y_{(\ell)}}-m_\ell$ depends only on $Y_{(\ell)}$. Hence the covariance is zero unless $\ell=i$ or $\ell=i+1$. Thus
\[
\begin{aligned}
    \operatorname{Cov}(L_{n,1},L_{n,2}\mid\mathcal X_n)
    &=
    \frac1n\sum_{i=1}^{n-1}
    \operatorname{Cov}
    \left(
        W_i-\mu_i,
        2(p_{Y_{(i)}}-m_i)
        \middle|\mathcal X_n
    \right)\\
    &\quad+
    \frac1n\sum_{i=1}^{n-1}
    \operatorname{Cov}
    \left(
        W_i-\mu_i,
        2(p_{Y_{(i+1)}}-m_{i+1})
        \middle|\mathcal X_n
    \right).
\end{aligned}
\]
For the first covariance,
\[
\begin{aligned}
    \operatorname{Cov}
    \left(
        W_i-\mu_i,
        2(p_{Y_{(i)}}-m_i)
        \middle|\mathcal X_n
    \right)&=
    2E[(W_i-\mu_i)(p_{Y_{(i)}}-m_i)\mid\mathcal X_n]\\
    &=2\left\{
        E[W_ip_{Y_{(i)}}\mid\mathcal X_n]
        -
        \mu_i m_i
    \right\}\\
    &=
    2\left\{
        \sum_{j=1}^k p_jg_{ij}g_{i+1,j}
        -
        \mu_i m_i
    \right\}.
\end{aligned}
\]
Similarly,
\[
\begin{aligned}
    \operatorname{Cov}
    \left(
        W_i-\mu_i,
        2(p_{Y_{(i+1)}}-m_{i+1})
        \middle|\mathcal X_n
    \right)
    =
    2\left\{
        \sum_{j=1}^k p_jg_{ij}g_{i+1,j}
        -
        \mu_i m_{i+1}
    \right\}.
\end{aligned}
\]
Therefore
\[
\begin{aligned}
    \operatorname{Cov}(L_{n,1},L_{n,2}\mid\mathcal X_n)
    &=
    \frac1n\sum_{i=1}^{n-1}
    \left[
        4\sum_{j=1}^k p_jg_{ij}g_{i+1,j}
        -
        2\mu_i(m_i+m_{i+1})
    \right].
\end{aligned}
\]
We compare the summand with
\[
    4\left\{
        \sum_{j=1}^k p_jg_{ij}^2-h_im_i
    \right\}.
\]
First,
\[
\begin{aligned}
    \left|
        \sum_{j=1}^k p_jg_{ij}g_{i+1,j}
        -
        \sum_{j=1}^k p_jg_{ij}^2
    \right|
    &\le
    \sum_{j=1}^k p_jg_{ij}|g_{i+1,j}-g_{ij}|\\
    &\le
    \sum_{j=1}^k |g_{i+1,j}-g_{ij}|.
\end{aligned}
\]
Second,
\[
\begin{aligned}
    |\mu_i(m_i+m_{i+1})-2h_im_i|
    &\le
    |(\mu_i-h_i)m_i|
    +
    |\mu_im_{i+1}-h_im_i|\\
    &\le
    |\mu_i-h_i|
    +
    |(\mu_i-h_i)m_{i+1}|
    +
    |h_i(m_{i+1}-m_i)|\\
    &\le
    2|\mu_i-h_i|+|m_{i+1}-m_i|.
\end{aligned}
\]
Moreover,
\(
    |\mu_i-h_i|
    \le
    \sum_{j=1}^k |g_{i+1,j}-g_{ij}|,
\)
and
\(
    |m_{i+1}-m_i|
    \le
    \sum_{j=1}^k p_j|g_{i+1,j}-g_{ij}|
    \le
    \sum_{j=1}^k |g_{i+1,j}-g_{ij}|.
\)
Hence
\[
\begin{aligned}
    &\left|
    \left[
        4\sum_{j=1}^k p_jg_{ij}g_{i+1,j}
        -
        2\mu_i(m_i+m_{i+1})
    \right]
    -
    4\left[
        \sum_{j=1}^k p_jg_{ij}^2-h_im_i
    \right]
    \right|\\
    &\qquad\le
    4\sum_{j=1}^k |g_{i+1,j}-g_{ij}|
    +
    2|\mu_i(m_i+m_{i+1})-2h_im_i|\\
    &\qquad\le
    4\sum_{j=1}^k |g_{i+1,j}-g_{ij}|
    +
    2\left\{
        2|\mu_i-h_i|+|m_{i+1}-m_i|
    \right\}\\
    &\qquad\le
    10\sum_{j=1}^k |g_{i+1,j}-g_{ij}|.
\end{aligned}
\]
By Lemma~\ref{lem:ordered-bv-bound} with \(r=1\),
\[
\begin{aligned}
    \frac1n
    \sum_{i=1}^{n-1}
    \sum_{j=1}^k |g_{i+1,j}-g_{ij}|
    &\le
    \frac{k}{n}
    \max_{1\le j\le k}
    \sum_{i=1}^{n-1}
    |g_j(X_{(i+1)})-g_j(X_{(i)})|\\
    &=
    \frac{k}{n}o_p(\sqrt n)\\
    &=
    o_p(1),
\end{aligned}
\]
since \(k\) is fixed. Therefore,
\[
\begin{aligned}
    &\frac1n
    \sum_{i=1}^{n-1}
    \left|
    \left[
        4\sum_{j=1}^k p_jg_{ij}g_{i+1,j}
        -
        2\mu_i(m_i+m_{i+1})
    \right]
    -
    4\left[
        \sum_{j=1}^k p_jg_{ij}^2-h_im_i
    \right]
    \right|
    =
    o_p(1).
\end{aligned}
\]
Since \(
    \sum_{j=1}^k p_jg_j(X)^2-h(X)m(X)
\)
is bounded, the strong law of large numbers gives
\[
\begin{aligned}
    \frac1n
    \sum_{i=1}^{n-1}
    4\left[
        \sum_{j=1}^k p_jg_{ij}^2-h_im_i
    \right]
    &=
    \frac4n
    \sum_{i=1}^{n}
    \left[
        \sum_{j=1}^k p_jg_j(X_i)^2-h(X_i)m(X_i)
    \right]
    +
    O(n^{-1})\\
    &\xrightarrow{a.s.}
    4E\left[
        \sum_{j=1}^k p_jg_j(X)^2-h(X)m(X)
    \right].
\end{aligned}
\]
Combining the preceding two displays, we obtain
\[
    \operatorname{Cov}(L_{n,1},L_{n,2}\mid\mathcal X_n)
    \xrightarrow{p}
    4E\left[
        \sum_{j=1}^k p_jg_j(X)^2-h(X)m(X)
    \right].
\]
Therefore,
\[
    \operatorname{Var}(L_n\mid\mathcal X_n)
    \xrightarrow{p}
    \Sigma_R,
\]
where
\[
    \Sigma_R
    =
    \begin{pmatrix}
        v_{11} & v_{12}\\
        v_{12} & v_{22}
    \end{pmatrix},
\]
with
\[
    v_{11}=E[h(X)-3h(X)^2+2\tau(X)],
\]
\[
    v_{22}=4E\left[
        \sum_{j=1}^k p_j^2g_j(X)-m(X)^2
    \right],
\]
and
\[
    v_{12}=4E\left[
        \sum_{j=1}^k p_jg_j(X)^2-h(X)m(X)
    \right].
\]

We now establish the conditional central limit theorem for $L_n$. Fix $t=(t_1,t_2)^\top\in\mathbb R^2$ and write
\[
    t^\top L_n
    =
    \sum_{i=1}^{n-1}E_i+
    \sum_{i=1}^n F_i,
\]
where
\[
    E_i:=\frac{t_1}{\sqrt n}(W_i-\mu_i),
    \qquad i=1,\ldots,n-1,
\]
and
\[
    F_i:=\frac{2t_2}{\sqrt n}(p_{Y_{(i)}}-m_i),
    \qquad i=1,\ldots,n.
\]
Conditional on $\mathcal X_n$, these variables are mean-zero. Construct a graph on the vertices
\[
    \mathcal{V}_n := \{E_1,\ldots,E_{n-1},F_1,\ldots,F_n\}
\]
by connecting two vertices whenever the corresponding random variables share at least one label $Y_{(i)}$. This is a dependency graph conditional on $\mathcal X_n$. Its maximum degree is bounded by $4$ since $E_i$ can interact only with $E_{i-1},E_{i+1},F_i,F_{i+1}$, while $F_i$ can interact only with $E_{i-1},E_i$. Moreover,
\[
    |E_i|\le\frac{|t_1|}{\sqrt n},
    \qquad
    |F_i|\le\frac{2|t_2|}{\sqrt n}.
\]
Let
\[
    s_n^2(t):=
    \operatorname{Var}(t^\top L_n\mid\mathcal X_n).
\]
Since
\[
    \operatorname{Var}(L_n\mid\mathcal X_n)
    \xrightarrow{p}
    \Sigma_R,
\]
the continuous mapping theorem gives
\[
    s_n^2(t)
    =
    t^\top\operatorname{Var}(L_n\mid\mathcal X_n)t
    \xrightarrow{p}
    t^\top\Sigma_Rt.
\]

Suppose that \(t^\top\Sigma_Rt>0\). Since
\(
    s_n^2(t)
    \xrightarrow{p}
    t^\top\Sigma_Rt,
\)
there exists \(\eta>0\) such that
\[
    P(s_n(t)>\eta)\to1.
\]
Fix such an \(\eta>0\). On the event \(\{s_n(t)>\eta\}\), the Berry--Esseen-type bound for sums with a dependency graph due to \citet{BaldiPierreYosef1989} gives a constant \(C_t<\infty\), independent of \(n\), such that
\[
\begin{aligned}
    \sup_{x\in\mathbb R}
    \left|
        P\left(
            \frac{t^\top L_n}{s_n(t)}\le x
            \,\middle|\,\mathcal X_n
        \right)
        -
        \Phi(x)
    \right|
    &\le
    C_t
    \left\{
        (2n-1)4^2
        \left(\frac{n^{-1/2}}{s_n(t)}\right)^3
    \right\}^{1/2}\\
    &\le
    C_t n^{-1/4}.
\end{aligned}
\]
Hence
\[
    \sup_{x\in\mathbb R}
    \left|
        P\left(
            \frac{t^\top L_n}{s_n(t)}\le x
            \,\middle|\,\mathcal X_n
        \right)
        -
        \Phi(x)
    \right|
    \xrightarrow{p}0.
\]
Let
\(
    v_t:=t^\top\Sigma_Rt>0.
\)
Define
\[
    G_n:=\{s_n(t)>\eta\}.
\]
On \(G_n\), set
\[
    \Delta_{n,0}(t)
    :=
    \sup_{y\in\mathbb R}
    \left|
        P\left(
            \frac{t^\top L_n}{s_n(t)}\le y
            \,\middle|\,\mathcal X_n
        \right)
        -
        \Phi(y)
    \right|.
\]
By the preceding Berry--Esseen bound,
\[
    \Delta_{n,0}(t)\le C_t n^{-1/4}
    \qquad\text{on }G_n.
\]
For \(v>0\), define
\[
    \Phi_v(x):=\Phi\left(\frac{x}{\sqrt v}\right),
    \qquad x\in\mathbb R.
\]
Let
\[
    \Delta_n(t)
    :=
    \sup_{x\in\mathbb R}
    \left|
        P(t^\top L_n\le x\mid\mathcal X_n)
        -
        \Phi_{v_t}(x)
    \right|.
\]
Then
\[
\begin{aligned}
    \Delta_n(t)
    &=
    \mathbf 1_{G_n}\Delta_n(t)
    +
    \mathbf 1_{G_n^c}\Delta_n(t)\\
    &\le
    \mathbf 1_{G_n}\Delta_n(t)
    +
    \mathbf 1_{G_n^c},
\end{aligned}
\]
because \(\Delta_n(t)\le 1\). On the event \(G_n\), we have \(s_n(t)>0\). Hence, for every \(x\in\mathbb R\),
\[
\begin{aligned}
    \left|
        P(t^\top L_n\le x\mid\mathcal X_n)
        -
        \Phi\left(\frac{x}{\sqrt{v_t}}\right)
    \right|
    &=
    \left|
        P\left(
            \frac{t^\top L_n}{s_n(t)}
            \le
            \frac{x}{s_n(t)}
            \,\middle|\,\mathcal X_n
        \right)
        -
        \Phi\left(\frac{x}{\sqrt{v_t}}\right)
    \right|\\
    &\le
    \left|
        P\left(
            \frac{t^\top L_n}{s_n(t)}
            \le
            \frac{x}{s_n(t)}
            \,\middle|\,\mathcal X_n
        \right)
        -
        \Phi\left(\frac{x}{s_n(t)}\right)
    \right|\\
    &\quad+
    \left|
        \Phi\left(\frac{x}{s_n(t)}\right)
        -
        \Phi\left(\frac{x}{\sqrt{v_t}}\right)
    \right|.
\end{aligned}
\]
Taking the supremum over \(x\in\mathbb R\), and using the change of variable
\(y=x/s_n(t)\), gives
\[
\begin{aligned}
    \mathbf 1_{G_n}\Delta_n(t)
    &\le
    \mathbf 1_{G_n}
    \sup_{y\in\mathbb R}
    \left|
        P\left(
            \frac{t^\top L_n}{s_n(t)}
            \le y
            \,\middle|\,\mathcal X_n
        \right)
        -
        \Phi(y)
    \right|\\
    &\quad+
    \mathbf 1_{G_n}
    \sup_{x\in\mathbb R}
    \left|
        \Phi\left(\frac{x}{s_n(t)}\right)
        -
        \Phi\left(\frac{x}{\sqrt{v_t}}\right)
    \right|\\
    &=
    \mathbf 1_{G_n}\Delta_{n,0}(t)
    +
    \mathbf 1_{G_n}
    \sup_{x\in\mathbb R}
    \left|
        \Phi\left(\frac{x}{s_n(t)}\right)
        -
        \Phi\left(\frac{x}{\sqrt{v_t}}\right)
    \right|.
\end{aligned}
\]
Combining the preceding bounds, we obtain
\[
\begin{aligned}
    \Delta_n(t)
    &\le
    \mathbf 1_{G_n}\Delta_{n,0}(t)
    +
    \mathbf 1_{G_n}
    \sup_{x\in\mathbb R}
    \left|
        \Phi\left(\frac{x}{s_n(t)}\right)
        -
        \Phi\left(\frac{x}{\sqrt{v_t}}\right)
    \right|
    +
    \mathbf 1_{G_n^c}.
\end{aligned}
\]
Since
\(
    s_n^2(t)\xrightarrow{p}v_t>0
\)
and \(s_n(t)\ge0\), the continuous mapping theorem gives
\[
    s_n(t)\xrightarrow{p}\sqrt{v_t}.
\]
Then, we have
\[
    \sup_{x\in\mathbb R}
    \left|
        \Phi\left(\frac{x}{s_n(t)}\right)
        -
        \Phi\left(\frac{x}{\sqrt{v_t}}\right)
    \right|
    \xrightarrow{p}0.
\]
Moreover, \(P(G_n^c)\to0\) and \(\Delta_{n,0}(t)\le C_tn^{-1/4}\) on \(G_n\).
Therefore,
\[
    \Delta_n(t)\xrightarrow{p}0.
\]
Thus the conditional distribution of \(t^\top L_n\) given \(\mathcal X_n\) converges weakly in probability to \(N(0,v_t)\). Consequently, for every bounded continuous function \(\varphi:\mathbb R\to\mathbb C\),
\[
    E[\varphi(t^\top L_n)\mid\mathcal X_n]
    \xrightarrow{p}
    E[\varphi(Z_t)],
    \qquad
    Z_t\sim N(0,v_t).
\]
Taking \(\varphi(u)=e^{iu}\), we obtain
\[
    E[e^{it^\top L_n}\mid\mathcal X_n]
    \xrightarrow{p}
    E[e^{iZ_t}]
    =
    \exp\left(-\frac12v_t\right)
    =
    \exp\left(-\frac12t^\top\Sigma_Rt\right).
\]
Moreover,
\[
    \left|
        E[e^{it^\top L_n}\mid\mathcal X_n]
        -
        \exp\left(-\frac12t^\top\Sigma_Rt\right)
    \right|
    \le 2.
\]
Therefore the convergence also holds in \(L^1\).

If \(t^\top\Sigma_Rt=0\), then \(s_n^2(t)\xrightarrow{p}0\). Since \(s_n^2(t)\) is uniformly bounded,
\[
    E[s_n^2(t)]\to0.
\]
Also, \(E[t^\top L_n\mid\mathcal X_n]=0\), and hence
\[
    E[(t^\top L_n)^2]
    =
    E[s_n^2(t)]
    \to0.
\]
Thus \(t^\top L_n\xrightarrow{L^2}0\). Therefore,
\[
\begin{aligned}
    E\left|
        E[e^{it^\top L_n}\mid\mathcal X_n]-1
    \right|
    &\le
    E|e^{it^\top L_n}-1|\\
    &\le
    E|t^\top L_n|\\
    &\le
    \{E[(t^\top L_n)^2]\}^{1/2}
    \to0.
\end{aligned}
\]
Since \(t^\top\Sigma_Rt=0\), this gives
\[
    E[e^{it^\top L_n}\mid\mathcal X_n]
    \xrightarrow{L^1}
    \exp\left(-\frac12t^\top\Sigma_Rt\right).
\]

Combining the two cases, for every \(t\in\mathbb R^2\),
\[
    \psi_n(t)
    :=
    E[e^{it^\top L_n}\mid\mathcal X_n]
    \xrightarrow{L^1}
    \exp\left(-\frac12t^\top\Sigma_Rt\right).
\]

It remains to combine $D_n$ and $L_n$. Since $D_n$ is $\mathcal X_n$-measurable,
\[
\begin{aligned}
    E[e^{it^\top(D_n+L_n)}]
    &=
    E\left[
        e^{it^\top D_n}
        E[e^{it^\top L_n}\mid\mathcal X_n]
    \right]\\
    &=
    E[e^{it^\top D_n}\psi_n(t)]\\
    &=
    \exp\left(-\frac12t^\top\Sigma_Rt\right)
    E[e^{it^\top D_n}]
    +o(1).
\end{aligned}
\]
Because
\[
    D_n\xrightarrow{d}N_2(0,\Sigma_D),
\]
we have
\[
    E[e^{it^\top D_n}]
    \to
    \exp\left(-\frac12t^\top\Sigma_Dt\right).
\]
Therefore
\[
    E[e^{it^\top(D_n+L_n)}]
    \to
    \exp\left(-\frac12t^\top(\Sigma_D+\Sigma_R)t\right).
\]
Hence
\[
    D_n+L_n
    \xrightarrow{d}
    N_2(0,\Sigma_D+\Sigma_R).
\]
Since
\[
    \sqrt n
    \begin{pmatrix}
        A_n-A\\
        B_n-B
    \end{pmatrix}
    =D_n+L_n+o_p(1),
\]
Slutsky's theorem gives
\[
    \sqrt n
    \begin{pmatrix}
        A_n-A\\
        B_n-B
    \end{pmatrix}
    \xrightarrow{d}
    N_2(0,\Sigma),
\]
where
\[
    \Sigma=\Sigma_D+\Sigma_R.
\]

We now identify the entries of $\Sigma$. First,
\[
\begin{aligned}
    \Sigma_{11}
    &=
    E[h(X)-3h(X)^2+2\tau(X)]
    +
    \operatorname{Var}(h(X)).
\end{aligned}
\]
Second,
\[
\begin{aligned}
    \Sigma_{22}
    &=
    4E\left[
        \sum_{j=1}^k p_j^2g_j(X)-m(X)^2
    \right]
    +
    4\operatorname{Var}(m(X))\\
    &=
    4\left\{
        E\left[\sum_{j=1}^k p_j^2g_j(X)\right]
        -
        E[m(X)^2]
        +
        E[m(X)^2]
        -
        B^2
    \right\}\\
    &=
    4\left\{
        \sum_{j=1}^k p_j^2E[g_j(X)]-B^2
    \right\}\\
    &=
    4\left(\sum_{j=1}^k p_j^3-B^2\right).
\end{aligned}
\]
Third,
\[
\begin{aligned}
    \Sigma_{12}
    &=
    4E\left[
        \sum_{j=1}^k p_jg_j(X)^2-h(X)m(X)
    \right]\\
    &\quad+
    2E[\{h(X)-A\}\{m(X)-B\}].
\end{aligned}
\]
Thus the joint central limit theorem is proved.

Finally, since \(Y\) is not almost surely constant, \(B<1\). Also,
\[
    B_n
    =
    \sum_{j=1}^k\hat p_j^2
    \xrightarrow{a.s.}
    \sum_{j=1}^k p_j^2
    =
    B.
\]
Hence \(P(B_n<1)\to1\). On the event \(\{B_n<1\}\), we have
\[
    f(A_n,B_n)=\xi_n'.
\]
Therefore,
\[
    \sqrt n\{\xi_n'-f(A_n,B_n)\}
    \xrightarrow{p}0.
\]
Since \(f(a,b)=(a-b)/(1-b)\) is continuously differentiable in a neighborhood of \((A,B)\), the multivariate delta method~\cite[Theorem~3.1]{vanderVaart1998} gives
\[
    \sqrt n\{f(A_n,B_n)-f(A,B)\}
    \xrightarrow{d}
    N(0,\kappa_D^2),
\]
where
\[
    \kappa_D^2=\nabla f(A,B)^\top\Sigma\nabla f(A,B).
\]
Since \(f(A,B)=\xi'\), Slutsky's theorem yields
\[
    \sqrt n(\xi_n'-\xi')
    \xrightarrow{d}
    N(0,\kappa_D^2).
\]
\end{proof}

\subsubsection{Proof of Proposition~\ref{prop:variance_consistency}}
\label{app:proof_variance_consistency}

\begin{proof}
The asymptotic variance $\kappa_{D}^2$ is a continuous function of population moments of the form $E[h(X)]$, $E[\tau(X)]$, et cetera. Therefore, by the Continuous Mapping Theorem, it suffices to show that the sample averages of the estimated functions converge in probability to their population counterparts. We demonstrate this for the term $A = E[h(X)]$; the proof for other terms is analogous.

We aim to show $\frac{1}{n}\sum_{i=1}^n \hat{h}(X_i) \xrightarrow{p} E[h(X)]$.
Decompose the error as:
\[
\left| \frac{1}{n}\sum_{i=1}^n \hat{h}(X_i) - E[h(X)] \right| 
\le \underbrace{\left| \frac{1}{n}\sum_{i=1}^n (\hat{h}(X_i) - h(X_i)) \right|}_{(I)} 
+ \underbrace{\left| \frac{1}{n}\sum_{i=1}^n h(X_i) - E[h(X)] \right|}_{(II)}.
\]

\textbf{Term (II):} Since $h(X) = \sum_{j=1}^k g_j(X)^2$ is a bounded random variable (as $0 \le g_j \le 1$), the Weak Law of Large Numbers (WLLN) implies that $(II) \xrightarrow{p} 0$.

\textbf{Term (I):} Recall $\hat{h}(x) = \sum_{j=1}^k \hat{g}_j(x)^2$ and $h(x) = \sum_{j=1}^k g_j(x)^2$. Since $g_j, \hat{g}_j \in [0,1]$, we have $|\hat{h}(x) - h(x)| \le \sum_{j=1}^k |\hat{g}_j(x)^2 - g_j(x)^2| \le 2 \sum_{j=1}^k |\hat{g}_j(x) - g_j(x)|$.
Thus,
\[
(I) \le \frac{2}{n}\sum_{i=1}^n \sum_{j=1}^k |\hat{g}_j(X_i) - g_j(X_i)|.
\]
By the additional assumption in Proposition~\ref{prop:variance_consistency}, the nonparametric estimators satisfy the required empirical $L_1$ consistency: $n^{-1}\sum_{i=1}^n |\hat g_j(X_i)-g_j(X_i)|\xrightarrow{p}0$ for each $j=1,\ldots,k$.
This implies that the average absolute estimation error converges to zero in probability. Hence $(I) \xrightarrow{p} 0$.

Combining these results, $\frac{1}{n}\sum_{i=1}^n \hat{h}(X_i) \xrightarrow{p} A$. Similarly, sample estimates for $B$, $\sigma_A^2$, $\sigma_B^2$, and $\sigma_{AB}$ converge to their population values. Consequently, $\hat{\kappa}_{D}^2 \xrightarrow{p} \kappa_{D}^2$.
\end{proof}

\subsubsection{Proof of Corollary~\ref{cor:confidence_interval}}
\label{app:proof_confidence_interval}
\begin{proof}
By Theorem~\ref{thm:general_clt}, we have the asymptotic normality result:
\[
\sqrt{n}(\xi_n' - \xi') \xrightarrow{d} N(0, \kappa_D^2).
\]
Proposition~\ref{prop:variance_consistency} establishes that $\hat{\kappa}_D^2 \xrightarrow{p} \kappa_D^2$, which implies $\hat{\kappa}_D \xrightarrow{p} \kappa_D$ by the continuous mapping theorem. Since $\kappa_D^2>0$ by assumption, studentization is well defined asymptotically.
We consider the quantity:
\[
T_n := \frac{\sqrt{n}(\xi_n' - \xi')}{\hat{\kappa}_D} = \frac{\sqrt{n}(\xi_n' - \xi')}{\kappa_D} \cdot \frac{\kappa_D}{\hat{\kappa}_D}.
\]
The first component converges in distribution to $Z \sim N(0,1)$, and the second component converges in probability to 1. By Slutsky's theorem, $T_n \xrightarrow{d} N(0,1)$.
The coverage probability is given by
\[
\begin{aligned}
P\bigl(\xi' \in \mathcal{C}_n(1-\alpha)\bigr)
&= P\left( \xi_n' - z_{\alpha/2} \frac{\hat{\kappa}_D}{\sqrt{n}} \le \xi' \le \xi_n' + z_{\alpha/2} \frac{\hat{\kappa}_D}{\sqrt{n}} \right) \\
&= P\left( -z_{\alpha/2} \le \frac{\sqrt{n}(\xi_n' - \xi')}{\hat{\kappa}_D} \le z_{\alpha/2} \right).
\end{aligned}
\]
Taking the limit as $n \to \infty$,
\[
\lim_{n\to\infty} P\bigl(-z_{\alpha/2} \le T_n \le z_{\alpha/2}\bigr) = P(-z_{\alpha/2} \le Z \le z_{\alpha/2}) = 1-\alpha.
\]
This completes the proof.
\end{proof}

\section*{Acknowledgements}
The authors thank the two anonymous referees for their careful reading and constructive comments, which helped improve the presentation and theoretical development of the manuscript.

\section*{Funding}
This work was supported by the National Research Foundation of Korea (NRF) grant funded by the Korea government (MSIT) (No. RS-2026-25499129, RS-2023-00301976, RS-2026-25476070 and No. RS-2024-00333399).
\bibliographystyle{imsart-nameyear}  
\bibliography{library}       

\end{document}